\documentclass[12pt]{article}
\title{Clapeyron-type  theorems in nonlinear elasticity}
\author{Yury Grabovsky\thanks{Department of Mathematics, Temple University, Philadelphia, PA 19122, USA} \and Lev Truskinovsky\thanks{PMMH, CNRS -- UMR 7636, ESPCI, PSL,  75005 Paris, France}}

\usepackage{amssymb,bm} 
\usepackage{amsmath} 
\usepackage{amsthm} 
\usepackage{graphicx} 
\usepackage{tgtermes}
\usepackage{BOONDOX-cal} 
\usepackage[active]{srcltx} 
\usepackage{pdfsync} 
\usepackage{hyperref} 
\hypersetup{pdfborder=0 0 0}
\usepackage[top=1.05in,left=1.05in,right=1.05in,bottom=1.05in]{geometry} 

\counterwithin{equation}{section} 
\newtheorem{theorem}{{\sc Theorem}}[section]

\newtheorem{lemma}[theorem]{{\sc Lemma}}

\newtheorem{remark}[theorem]{Remark}

\newtheorem{definition}[theorem]{Definition}

\newcommand{\bb}[1]{\mathbb{ #1}}

\bmdefine\Bone{1}

\newcommand{\Sym}{\mathrm{Sym}}

\newcommand{\dOm}{\partial\Omega}
\newcommand{\eqv}{\Longleftrightarrow}
\newcommand{\bra}[1]{\overline{#1}}

\newcommand{\Trc}{\mathrm{Tr}\,}

\newcommand{\cof}{\mathrm{cof}}

\newcommand{\tns}[1]{#1\otimes #1}
\newcommand{\hf}{\displaystyle\frac{1}{2}}
\newcommand{\nth}[1]{\displaystyle\frac{1}{#1}}

\newcommand{\dif}[2]{\displaystyle\frac{\partial #1}{\partial #2}}
\newcommand{\Grad}{\nabla}
\newcommand{\Div}{\nabla \cdot}

\newcommand{\Md}{\partial}
\renewcommand{\Hat}[1]{\widehat{#1}}
\newcommand{\Tld}[1]{\widetilde{#1}}
\newcommand{\hess}[2]{\displaystyle\frac{\partial^2 #1}{\partial #2^2}}
\newcommand{\mix}[3]{\displaystyle\frac{\partial^2 #1}{\partial #2\partial #3}}
\newcommand{\av}[1]{\left\langle #1 \right\rangle}

\def\XXint#1#2#3{{\setbox0=\hbox{$#1{#2#3}{\int}$ }
\vcenter{\hbox{$#2#3$ }}\kern-.6\wd0}}

\newcommand\myatop[2]{\genfrac{}{}{0pt}{}{#1}{#2}}

\newcommand{\jump}[1]{\lbrack\!\lbrack #1 \rbrack\!\rbrack}
\newcommand{\lump}[1]{\lbrace\skew{-14.7}\lbrace\!\!#1\!\!\skew{14.7}\rbrace\rbrace}

\newcommand{\bc}{boundary condition}
\newcommand{\bvp}{boundary value problem}
\newcommand{\rhs}{right-hand side}
\newcommand{\lhs}{left-hand side}

\newcommand{\WLOG}{without loss of generality}
\newcommand{\nbh}{neighborhood}
\newcommand{\IFF}{if and only if }

\newcommand{\Ga}{\alpha}

\newcommand{\Gd}{\delta}
\newcommand{\Ge}{\epsilon}

\newcommand{\Gve}{\varepsilon}

\newcommand{\Gl}{\lambda}

\newcommand{\Gth}{\theta}

\newcommand{\GD}{\Delta}

\newcommand{\GG}{\Gamma}

\newcommand{\GL}{\Lambda}

\newcommand{\GS}{\Sigma}

\newcommand{\GO}{\Omega}

\bmdefine\BGa{\alpha}
\bmdefine\BGb{\beta}
\bmdefine\BGd{\delta}
\bmdefine\BGe{\epsilon}
\bmdefine\BGve{\varepsilon}
\bmdefine\BGf{\phi}
\bmdefine\BGvf{\varphi}
\bmdefine\BGg{\gamma}
\bmdefine\BGc{\chi}
\bmdefine\BGi{\iota}
\bmdefine\BGk{\kappa}
\bmdefine\BGl{\lambda}
\bmdefine\BGn{\eta}
\bmdefine\BGm{\mu}
\bmdefine\BGv{\nu}
\bmdefine\BGp{\pi}
\bmdefine\BGth{\theta}
\bmdefine\BGvth{\vartheta}
\bmdefine\BGr{\rho}
\bmdefine\BGvr{\varrho}
\bmdefine\BGs{\sigma}
\bmdefine\BGvs{\varsigma}
\bmdefine\BGt{\tau}
\bmdefine\BGj{\tau}
\bmdefine\BGu{\upsilon}
\bmdefine\BGo{\omega}
\bmdefine\BGx{\xi}
\bmdefine\BGy{\psi}
\bmdefine\BGz{\zeta}
\bmdefine\BGD{\Delta}
\bmdefine\BGF{\Phi}
\bmdefine\BGG{\Gamma}
\bmdefine\BGL{\Lambda}
\bmdefine\BGP{\Pi}
\bmdefine\BGT{\Theta}
\bmdefine\BGS{\Sigma}
\bmdefine\BGU{\Upsilon}
\bmdefine\BGO{\Omega}
\bmdefine\BGX{\Xi}
\bmdefine\BGY{\Psi}

\bmdefine\BFM{\mathfrak{M}}
\bmdefine\BFb{\mathfrak{b}}
\bmdefine\BFk{\mathfrak{k}}
\bmdefine\BFm{\mathfrak{m}}
\bmdefine\BFu{\mathfrak{u}}
\bmdefine\BFv{\mathfrak{v}}

\newcommand{\CA}{{\mathcal A}}
\newcommand{\CB}{{\mathcal B}}

\newcommand{\CE}{{\mathcal E}}

\newcommand{\CH}{{\mathcal H}}

\newcommand{\CJ}{{\mathcal J}}

\newcommand{\CM}{{\mathcal M}}

\newcommand{\CO}{{\mathcal O}}
\newcommand{\CP}{{\mathcal P}}

\newcommand{\CS}{{\mathcal S}}

\newcommand{\CW}{{\mathcal W}}

\bmdefine\BCA{{\mathcal A}}
\bmdefine\BCB{{\mathcal B}}
\bmdefine\BCC{{\mathcal C}}
\bmdefine\BCD{{\mathcal D}}
\bmdefine\BCE{{\mathcal E}}
\bmdefine\BCF{{\mathcal F}}
\bmdefine\BCG{{\mathcal G}}
\bmdefine\BCH{{\mathcal H}}
\bmdefine\BCI{{\mathcal I}}
\bmdefine\BCJ{{\mathcal J}}
\bmdefine\BCK{{\mathcal K}}
\bmdefine\BCL{{\mathcal L}}
\bmdefine\BCM{{\mathcal M}}
\bmdefine\BCN{{\mathcal N}}
\bmdefine\BCO{{\mathcal O}}
\bmdefine\BCP{{\mathcal P}}
\bmdefine\BCQ{{\mathcal Q}}
\bmdefine\BCR{{\mathcal R}}
\bmdefine\BCS{{\mathcal S}}
\bmdefine\BCT{{\mathcal T}}
\bmdefine\BCU{{\mathcal U}}
\bmdefine\BCV{{\mathcal V}}
\bmdefine\BCW{{\mathcal W}}
\bmdefine\BCX{{\mathcal X}}
\bmdefine\BCY{{\mathcal Y}}
\bmdefine\BCZ{{\mathcal Z}}

\bmdefine\Bzr{ 0}
\bmdefine\Ba{ a}
\bmdefine\Bb{ b}
\bmdefine\Bc{ c}
\bmdefine\Bd{ d}
\bmdefine\Be{ e}
\bmdefine\Bf{ f}
\bmdefine\Bg{ g}
\bmdefine\Bh{ h}
\bmdefine\Bi{ i}
\bmdefine\Bj{ j}
\bmdefine\Bk{ k}
\bmdefine\Bl{ l}
\bmdefine\Bm{ m}
\bmdefine\Bn{ n}
\bmdefine\Bo{ o}
\bmdefine\Bp{ p}
\bmdefine\Bq{ q}
\bmdefine\Br{ r}
\bmdefine\Bs{ s}
\bmdefine\Bt{ t}
\bmdefine\Bu{ u}
\bmdefine\Bv{ v}
\bmdefine\Bw{ w}
\bmdefine\Bx{ x}
\bmdefine\By{ y}
\bmdefine\Bz{ z}
\bmdefine\BA{ A}
\bmdefine\BB{ B}
\bmdefine\BC{ C}
\bmdefine\BD{ D}
\bmdefine\BE{ E}
\bmdefine\BF{ F}
\bmdefine\BG{ G}
\bmdefine\BH{ H}
\bmdefine\BI{ I}
\bmdefine\BJ{ J}
\bmdefine\BK{ K}
\bmdefine\BL{ L}
\bmdefine\BM{ M}
\bmdefine\BN{ N}
\bmdefine\BO{ O}
\bmdefine\BP{ P}
\bmdefine\BQ{ Q}
\bmdefine\BR{ R}
\bmdefine\BS{ S}
\bmdefine\BT{ T}
\bmdefine\BU{ U}
\bmdefine\BV{ V}
\bmdefine\BW{ W}
\bmdefine\BX{ X}
\bmdefine\BY{ Y}
\bmdefine\BZ{ Z}

\newcommand{\SFC}{\mathsf{C}}

\usepackage{color}
\usepackage[normal]{subfigure}
\begin{document}
\maketitle


\begin{abstract}
Clapeyron's Theorem of   classical linear elasticity  provides a way 
 to explicitly express the energy stored in an equilibrium configuration in terms of
 the work of the forces applied on the boundary. We derive  several
 new integral relations   which can be viewed as nonlinear analogs of
 this classical result, reinterpreting them as rather general
 statements within Calculus of Variations. These relations reflect
 specific properties of Lagrangians, that we call
 ``partial variational symmetries'', since they are more general than
 classical variational symmetries. In the framework of nonlinear
 elasticity, partial variational symmetries, such as scale invariance,
 or scaling homogeneity, lead, via Noether's analysis to different
nonlinear   generalizations of  Clapeyron's Theorem that combine
 naturally the work of physical and configurational forces.  We
 present a series of illuminating examples showing the  
 effectiveness of the obtained general results in different problems of nonlinear elasticity.
\end{abstract}

\section{Introduction}
\label{sec:intro}
A well known theorem of linear elasticity, first attributed to B.P.E. Clapeyron in \cite{lame1852}, 
states that the elastic energy stored in a body loaded by (dead) forces is equal to half of the work
done by the loading device  \cite{love27,Sokolnikoff:1983:MTE}.
More precisely, it states that 
\begin{equation}
\int_{\GO}W (\Bx,e(\Bu))d\Bx=\hf\int_{\dOm}(\BGs\Bn,\Bu)dS(\Bx), \label{Clap}
\end{equation}
where
\begin{equation}
  \label{linel}
  W (\Bx,\BGve)=\hf(\SFC(\Bx)\BGve,\BGve)
\end{equation}
is the stored elastic energy density,  $\SFC(\Bx)$ is the elasticity
tensor, taking values in the space of positive definite quadratic
forms on $\Sym(\bb{R}^{n})$---the space of $n\times n$ symmetric real
matrices, and can otherwise be an arbitrary bounded and measurable
function of $\Bx$; $\Bu(\Bx)$ is the displacement vector,
$\BGs(\Bx)=\SFC(\Bx)e(\Bu)$ is the Cauchy stress tensor and
\begin{equation}
  \label{eofu}
  e(\Bu)=\hf(\Grad\Bu(\Bx)+(\Grad\Bu(\Bx))^{T})
\end{equation}
is the linear strain tensor.
While this theorem can be proved by simple integration by
parts, and is widely used in applications
\cite{brun65,maro97,huet01,momo22}, it still carries a certain flavor of mystery. 
First, it has never
been published by the author himself \cite[p.~418]{tope60}. Second, it implies  the
puzzling disappearance of the half of the work, which can be linked to the fact that
elastostatics in the problems with abrupt application of dead forces is only a weak limit of elastodynamics \cite{fotr03}. Finally,  to the best of our knowledge, Clapeyron's Theorem (CT)  \eqref{Clap}   has always been viewed as strictly limited to linear elasticity, with an implication  that a direct   analog does not exist in geometrically and physically nonlinear elasticity.  
While a seemingly related result   has been obtained  in 3D nonlinear
elasticity theory  by  J.D. Eshelby \cite {eshe70}, see also \cite
{green73,knst84,hill86}, its  relation to the classical CT remained unexplored.
 
In this paper we  reconnect  linear and nonlinear elasticity perspectives on surface representation of the bulk energy and   expose a deeper structure of  the classical CT,  unifying it with   other known and new similar integral equalities. This  will not only allows us  to generalize CT  beyond linear elasticity but will also open  the possibility to  view it as a general result in the broader Calculus of Variations. 

The crucial step in our analysis is the realization that behind  the conventional CT is a variational formula first obtained by E. Noether \cite{noether18,olver86}. We first recall  that the  well-known Noether Theorem links variational symmetries with conservation laws in the
form of divergence-free combinations of   field  variables. To obtain this result,
Noether  considered the action of a continuous group  of
symmetries of the variational functional and showed  that when a
continuous symmetry parameter is varied,  the
  corresponding variation of the functional   can be written as a sum  of the  Euler-Lagrange operator and  a divergence \cite{ibra84,olver86,blku89}. 
The implied variational   formula    has been   used broadly to generate various   integral identities  in many different  domains of science \cite{wagn02,poho65,poho70,reic04,rell56,bomi07,puse86,vdvo91,knops03,olver86}.

Following prior work, see for instance \cite{bomi07}, we move away
from symmetry groups considered by Noether towards more general
classes of transformations regarded as \emph{partial symmetries}. The
Noether formula still holds, and we use it to show that the classical
CT can be linked to degree 2 homogeneity of the
quadratic energy density of linear elasticity. As such, the classical
CT is exhibited as a member of an family of
Clapeyron-type theorems for $p$-homogeneous energy densities, for any
$p>0$. We then show that scale invariance property of nonlinear elasticity
is another instance of a partial variational symmetry, leading
to what we call the Clapeyron-Eshelby Theorem (CET), also representing the total energy as a surface integral.  
 
Remarkably, while in the classical CT the implied
boundary term can be interpreted as representing the work of physical
forces, in the CET the energy of an extremal configuration is
expressed through the work of not only physical but also of
configurational forces
\cite{eshe70,chad75,gurt00,podi02,ligu06,ssd09,maug16}. To clarify the
physical meaning of such forces we show that the work of the
configurational forces can be linked to the energetic price of elastic
incompatibility acquired during the  formation of a (externally unstressed) solid,   say via crystallization from a liquid or  a surface deposition. Given that the CET naturally
combines the work performed by the  Piola and  Eshelby stresses
\cite{eshe70,eric77,silh97,gurt99,podi01,stma05,gfa10}, we show that
there exists a higher dimensional representation of the CET featuring
a new tensor uniting the Piola stress and the Eshelby stresses into
a single tensor field. To obtain such a representation we follow Noether's geometric view
and regard the classical non-parametric variational integral as a
 functional of the surface representing the graph of the unknown
 vector-valued function, thereby converting it into a parametric
 one. We then show that the $p$-homogeneous version of CT, applied to the $n$-homogeneous parametric form of the original functional leads to a compact form involving the combined
Piola-Eshelby stress tensor. Upon returning to the original
non-parametric notation, we obtain CET, revealing close interconnections between
all Clapeyron-type theorems.

The proposed broader reading  allows us to
obtain a series of specific results from CET  which we found convenient to reformulate in terms  of the Weierstrass excess function
\cite{knst86,grtrnc,grtrmms,rosa_pre}. Also, since our work is motivated by
applications which allow for surfaces of jump discontinuities of field
gradients, in our derivation of the generalized CET we also account of the
inherent lack of smoothness of extremals
\cite{knst78,know79,ball87,grtrmms}.  In particular, this issue also
arises in nonlinear elastodynamics  where  the formation of
shocks is unavoidable   despite the presence of  the overall
variational structure \cite{dafHCL,ggks25,grtreldyn}. To highlight the
utility of the obtained results, we present a series of illuminating
examples of the effectiveness of our generalized  CET in several nontrivial problems
in nonlinear elasticity. Specifically, we use the CET to formulate a
new necessary condition of metastability, establish the non-existence
of strong local minima for a broad class of domains and boundary
condition, and gain nontrivial insight into the structure of
quasi-convex envelopes of nonconvex elastic energy densities. See also
\cite{grtrvoid,grtrpoho} for other application of several related integral representations of the energy to fracture mechanics and nonlinear PDE.
  
The paper is organized as follows.  Section~\ref{sec:noether}  contains the background material including the  derivation of Noether's formula, valid for general Lipschitz vector fields.
We apply Noether's formula to give a characterization of parametric
Lagrangians and observe that every non-parametric variational problem can
be reformulated as a parametric one. We then show how the nonlinear
versions of the  well-known $J$, $L$ and $M$ integrals in fracture
mechanics \cite{rice68} are consequences of Noether's celebrated link between
symmetries and conservation laws.
Classical CT and its various direct generalizations
are obtained in Section~\ref{sub:CCT} as a consequence of
$p$-homogeneity of the elastic energy density.  The most general
version of the Clapeyron theorem (GCT) is derived in
Section~\ref{sub:GCT1}. The CET  is obtained  in
Section~\ref{sub:GCT} where it is shown to be an outcome of scale invariance. In
Section~\ref{sub:GET_CT} we relate our  different  Clapeyron-type 
theorems together by finally subordinating CET to CT. In
Section~\ref{sub:inc} we show that CET can be generalized to account for the presence of
scale-free constraints, such as the incompressibility. We also present
in Section~\ref{sub:GCTlin} several new results for the linear elastic
energies combining $2$-homogeneity with scale invariance. 
In Section~\ref{sub:lbcf} we explain  why the work of configurational forces can be linked to the incompatibility acquired during formation of an unstressed body.
Finally, our Section~\ref{sec:app} is devoted to various nontrivial applications of
CET-type results in the general Calculus of Variations, and in nonlinear elasticity, in particular.  
The conclusions are presented in Section~\ref{sec:conc}.

\section{Calculus of variations}
\label{sec:noether}

In this section, as a preparation,  we re-derive and generalize several relevant results  in the  abstract setting  of  Calculus of Variations.

\subsection{Noether identity}
We begin by introducing  a general variational problem of minimizing the energy functional 
\begin{equation}
E[\By]=\int_{\Omega}W(\Bx,\By(\Bx),\Grad\By(\Bx))d\Bx,
\label{non-param}
\end{equation}
where, using the language of elasticity theory,  we interpret the
function $\By:\GO\to\bb{R}^{m}$ as a deformation field defined on  the
reference (Lagrangian) configuration $\GO\subset\bb{R}^{n}$; the image $\GO^{*}=\By(\GO)$ will be then the  actual or Eulerian configuration.  
We recall in this respect that in   classical   nonlinear elasticity it is assumed that the energy density  
$W(\Bx,\By,\BF)=W(\BF)$ does not depend explicitly on $\By$ and is objective, i.e.
$W(\Bx,\BR\BF)=W(\Bx,\BF)$ for all $\BR\in SO(3)$ and all $\BF$,
$\det\BF>0$. For homogeneous bodies (i.e. bodies made of a single
material) the energy is also independent of $\Bx$. 
However, in this paper we
are not making these restrictive assumptions, as many of our conclusions hold for a far more general types of problems.  In particular, with  various specific  applications in mind, we  assume the integers  $m$ and $n$ to be arbitrary.

We assume further that $\By(\Bx)$ is only Lipschitz
continuous\footnote{It is a straightforward to generalize all analyses
to functions in Sobolev classes $W^{1,p}$ by making appropriate
growth assumption on $W(\BF)$ and $W_{\BF}(\BF)$, which are not
necessary when $\By(\Bx)$ is Lipschitz.}. Here we have in mind
applications, for instance,  to martensitic phase transitions and to elastodynamics where the underlying
configurations are typically Lipschitz continuous, but not smooth
\cite{Ericksen:1980:SPT,kh,Kinderlehrer:1993:MPT}.
There are also many examples of Lipschitz extremals with more general
singular sets, even for strictly convex energy densities
\cite{degio68,necas77,ball82,sivo92,svya00,musv03}.

We will assume, in general, that the function $W(\Bx,\By,\BF)$ is of class $C^{1}$, if not globally, then
on the open set containing the range of $(\Bx,\By(\Bx),\Grad\By(\Bx))$ for any
configuration $\By$ under consideration. In the context of three-dimensional nonlinear elasticity, for example, this means that the range of $\Grad\By(\Bx)$ will be a closed subset of $\{\BF\in\bb{R}^{3\times 3}:\det\BF>0\}$, eliminating the difficulties typically arising when one considers Sobolev solutions, whose existence is guaranteed by \cite{ball7677}.
In some situations that we will identify explicitly, the regularity in the $\Bx$ variable will not be assumed, as in the case of heterogenous or composite media.

Following Noether \cite{noether18,rund72,neue17,abal23} we observe that the the graph
\[
  \GG=\{(\Bx,\By(\Bx)):\Bx\in\GO\}\subset \bb{R}^{m+n}
\]
of $\By(\Bx)$, as a geometric surface identifies the function $\By(\Bx)$
uniquely. Hence, we can regard the energy functional (\ref{non-param})
as a functional $E[\GG]$.
We then examine the effect of the graph perturbation on $E[\GG]$. To this end we introduce 
a smooth family of Lipschitz maps $\BGF_{\Ge}:\bb{R}^{m+n}\to \bb{R}^{m+n}$, such that $\BGF_{0}(\Bx,\By)=(\Bx,\By)$. When $\Ge$ is sufficiently small, $\GG_{\Ge}=\BGF_{\Ge}(\GG)$ would still be a graph of some Lipschitz function $\Tld{\By}_{\Ge}$.

Our first goal   will be  to derive the formula for the first variation
\begin{equation}
  \label{firstvar}
\Gd E=\left.\frac{dE[\GG_{\Ge}]}{d\Ge}\right|_{\Ge=0},
\end{equation}
where
\begin{equation}
  \label{firstvar2}
E[\GG_{\Ge}]=\int_{\GO_{\Ge}}W(\Tld{\Bx},\By_{\Ge}(\Tld{\Bx}),\Grad\By_{\Ge}(\Tld{\Bx}))d\Tld{\Bx}.
\end{equation}
To explain the  notations in   \eqref{firstvar2}, we write
\begin{equation}
  \label{trgrp}
\BGF_{\Ge}(\Bx,\By)=(\BX(\Bx,\By,\Ge),\BY(\Bx,\By,\Ge)),
\end{equation}
where $\BX$ and $\BY$ are Lipschitz functions of $(\Bx,\By)$ and of class $C^{1}$ in $\Ge$, such that
\[
\BX(\Bx,\By,0)=\Bx,\quad\BY(\Bx,\By,0)=\By.
\]
According to (\ref{trgrp}), $\Tld{\By}_{\Ge}(\Tld{\Bx})=\BY(\Bx,\By(\Bx),\Ge)$, in other words,
\begin{equation}
  \label{yeps}
  \By_{\Ge}(\BX(\Bx,\By(\Bx),\Ge))=\BY(\Bx,\By(\Bx),\Ge),
\end{equation}
which defines $\By_{\Ge}(\Tld{\Bx})$ in \eqref{firstvar2},  
where $\GO_{\Ge}$ is the image of $\GO$ under the transformation $\Bx\mapsto\BX(\Bx,\By(\Bx),\Ge)$.

To compute $\Gd E$ given by (\ref{firstvar}) 
we first introduce the notations 
\begin{equation}
  \label{dxdyxy}
  \left.\Gd\Bx(\Bx,\By)=\dif{\BX(\Bx,\By,\Ge)}{\Ge}\right|_{\Ge=0},\quad
\left.\Gd\By(\Bx,\By)=\dif{\BY(\Bx,\By,\Ge)}{\Ge}\right|_{\Ge=0}, 
\end{equation}
even though in our subsequent derivations we will only use functions
\begin{equation}
  \label{dxdy}
  \Gd\Bx(\Bx)=\Gd\Bx(\Bx,\By(\Bx)),\qquad \Gd\By(\Bx)=\Gd\By(\Bx,\By(\Bx)), 
\end{equation}
that are required to be at least\footnote{The exact regularity
  requirements come from the validity of the derivation of formula (\ref{deform})
for $\Gd E$.} of class $W^{1,1}$.
We then make the change of variables $\Tld{\Bx}=\BX(\Bx,\By(\Bx),\Ge)$ in \eqref{firstvar2},
and obtain
\[
E[\GG_{\Ge}]=\int_{\GO}W(\BX(\Bx,\By(\Bx),\Ge),\By_{\Ge}(\BX(\Bx,\By(\Bx),\Ge)),\Grad\By_{\Ge}(\BX(\Bx,\By(\Bx),\Ge)))\det(\Grad\BX)d\Bx,
\]
where $\Grad\BX=\Grad_{\Bx}(\BX(\Bx,\By(\Bx),\Ge))$.
Differentiating (\ref{yeps}) in $\Bx$ we obtain
\[
\Grad\By_{\Ge}(\BX(\Bx,\By(\Bx),\Ge))\Grad\BX=\Grad\BY
\]
Therefore,
\begin{equation}
  \label{deform1}
  E[\GG_{\Ge}]=\int_{\GO}W(\BX,\BY,\Grad\BY(\Grad\BX)^{-1})\det(\Grad\BX)d\Bx,
\end{equation}
where $\BX$ and $\BY$ are evaluated at $(\Bx,\By(\Bx),\Ge)$.
We note that
\[
\left.\dif{\Grad\BX}{\Ge}\right|_{\Ge=0}=\Grad\Gd\Bx(\Bx),\quad \left.\dif{\Grad\BY}{\Ge}\right|_{\Ge=0}=\Grad\Gd\By(\Bx),
\]
where $\Gd\Bx(\Bx)$ and $\Gd\By(\Bx)$ are defined in (\ref{dxdy}).
Now, we can differentiate (\ref{deform}) in $\Ge$ at $\Ge=0$, and
obtain an identity
\begin{equation}
  \label{deform}
    \Gd E=\int_{\GO}\{W_{\Bx}\cdot\Gd\Bx+W_{\By}\cdot\Gd\By+\av{W_{\BF},\Grad\Gd\By-\Grad\By\Grad\Gd\Bx}+
  W\Div\Gd\Bx\}d\Bx.
\end{equation}
We now introduce  important notations which will be  used extensively in what follows. First, we define the tensor field 
\begin{equation}
  \BP(\Bx,\By,\BF)=W_{\BF}(\Bx,\By,\BF), 
\label{Esh-tensor1}
\end{equation}
to which we  refer as  the  \emph{Piola stress tensor}, using its  narrow meaning in classical   elasticity theory; in a general field theory it is known as canonical momentum or current tensor.  Second, we define another tensor field 
\begin{equation}
  \BP^{*}(\Bx,\By,\BF)=W(\Bx,\By,\BF)\BI_{n}-\BF^{T}\BP(\Bx,\By,\BF),
\label{Esh-tensor}
\end{equation}
and refer to it as the \emph{Eshelby tensor}, again, referring to a more limited notion used in nonlinear  elasticity;  in a general field theory this object  is known as the energy-momentum tensor. The corresponding background material, explaining the origin of these definitions in more detail,  can be found in the monographs \cite{silh97,gurt00,gfa10,maug16,gori17,krro19,seep20}. 
Using these definitions we can rewrite identity \eqref{deform} in the form 
\begin{equation}
  \label{firstvar1}
  \Gd E=\int_{\GO}\{W_{\By}\cdot\Gd\By+W_{\Bx}\cdot\Gd\Bx+\av{\BP,\Grad\Gd\By}
  +\av{\BP^{*},\Grad\Gd\Bx}\}d\Bx,
\end{equation}
where $\BP$ and $\BP^{*}$ are evaluated at
$(\Bx,\By(\Bx),\Grad\By(\Bx))$. We will refer to formula
(\ref{firstvar1}) as the weak formulation of Noether's identity, as it
holds for all Lipschitz vector fields $\By(\Bx)$ and variations
$\Gd\Bx(\Bx)$ and $\Gd\By(\Bx)$ of class $W^{1,1}$. The Lagrangian
$W(\Bx,\By,\BF)$ can also be only of class $C^{1}$.

If all the functions above, $W$, $\By$, $\BX$, and $\BY$ are of class
$C^{2}$, we can integrate by parts in (\ref{firstvar1}), and rewrite it as
\begin{equation}
  \label{deriv}
\Gd E=\int_{\GO}\{\mathfrak{E}_{W}(\Bx)\cdot\Gd\By+\mathfrak{E}^{*}_{W}(\Bx)\cdot\Gd\Bx\}d\Bx+
\int_{\dOm}\{\BP\Bn\cdot\Gd\By+\BP^{*}\Bn\cdot\Gd\Bx\}d\Bx,
\end{equation}
where
\begin{equation}
  \label{EL}
  \mathfrak{E}_{W}(\Bx)=W_{\By}(\Bx,\By(\Bx),\Grad\By(\Bx))-\Div \BP(\Bx,\By(\Bx),\Grad\By),
\end{equation}
and
\begin{equation}
  \label{ELT}
\mathfrak{E}^{*}_{W}(\Bx)=W_{\Bx}(\Bx,\By(\Bx),\Grad\By(\Bx))-\Div\BP^{*}(\Bx,\By(\Bx),\Grad\By(\Bx)).
\end{equation}
Formula (\ref{deriv}) will be referred to as the strong, or classical form of Noether's identity.

As an application of formula (\ref{firstvar1}), let us show that the fields $ \mathfrak{E}_{W}(\Bx)$ and $\mathfrak{E}^{*}_{W}(\Bx)$ are linked through a relation, derived originally by  Noether \cite{noether18}:
\begin{equation}
  \label{Noether0}
\mathfrak{E}^{*}_{W}(\Bx)=-(\Grad\By)^{T}\mathfrak{E}_{W}(\Bx),
\end{equation}
for smooth vector fields $\By(\Bx)$. This smoothness requirement can be
relaxed, to a degree. 
\begin{theorem}
  \label{th:NI}
  Suppose $W$ is of class $C^{1}$, and $\bra{\By}\in W^{2,1}_{\rm loc}(\GO;\bb{R}^{m})\cap
  W^{1,\infty}(\GO;\bb{R}^{m})$. Then (\ref{Noether0})  holds in the
  sense of equality of distributions,
  where $(\Grad\By)^{T}\mathfrak{E}_{W}$ is understood as the distribution 
  \begin{equation}
    \label{FEW}
    ((\Grad\By)^{T}\mathfrak{E}_{W})_{i}=\dif{\bra{y}^{k}}{x^{i}}W_{y^{k}}-\dif{}{x^{j}}\left[\dif{\bra{y}^{k}}{x^{i}}P_{k}^{j}\right]+\mix{\bra{y}^{k}}{x^{i}}{x^{j}}P_{k}^{j}, 
  \end{equation}
while formula (\ref{ELT}) for $\mathfrak{E}^{*}_{W}$ is
interpreted in the sense of distributions.
\end{theorem}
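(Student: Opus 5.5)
The plan is to apply the weak Noether identity (\ref{firstvar1}) to a transformation that merely reparametrizes the graph $\GG$ of $\bra{\By}$. Such a transformation leaves the surface, and hence $E[\GG_{\Ge}]$, unchanged, so $\Gd E=0$. Take a test field $\BGf\in C_{0}^{\infty}(\GO;\bb{R}^{n})$ and set $\Gd\Bx=\BGf(\Bx)$, $\Gd\By=(\Grad\bra{\By})\BGf$; concretely, $\BX=\Bx+\Ge\BGf(\Bx)$ and $\BY=\bra{\By}(\Bx+\Ge\BGf(\Bx))$. Since $\bra{\By}\in W^{2,1}_{\rm loc}$ and $\BGf$ has compact support, $\Gd\By\in W^{1,1}$, with $\Grad\Gd\By=\Grad\bra{\By}\,\Grad\BGf+(\Grad^{2}\bra{\By})\BGf$. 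Hence (\ref{firstvar1}) is applicable. The family $\GG_{\Ge}$ coincides with $\GG$ for small $\Ge$, because $\Bx\mapsto\Bx+\Ge\BGf$ is a diffeomorphism of $\GO$ fixing a neighborhood of $\dOm$. Therefore $E[\GG_{\Ge}]\equiv E[\GG]$ and $\Gd E=0$. One can also bypass the graph picture and just note that (\ref{deform1}) becomes the change of variables formula, so the integral is constant in $\Ge$.

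Writing $\Gd E=0$ via (\ref{firstvar1}) in components gives
\[
0=\int_{\GO}\Bigl\{W_{y^{k}}\dif{\bra{y}^{k}}{x^{i}}\phi^{i}+W_{x^{i}}\phi^{i}+P_{k}^{j}\Bigl(\dif{\bra{y}^{k}}{x^{i}}\dif{\phi^{i}}{x^{j}}+\mix{\bra{y}^{k}}{x^{i}}{x^{j}}\phi^{i}\Bigr)+(P^{*})^{j}_{i}\dif{\phi^{i}}{x^{j}}\Bigr\}d\Bx .
\]
Each term is now read distributionally. The term $\int(P^{*})^{j}_{i}\partial_{j}\phi^{i}$ is $\langle -\Div\BP^{*},\BGf\rangle$, which together with $W_{\Bx}$ gives $\langle\mathfrak{E}^{*}_{W},\BGf\rangle$ as in (\ref{ELT}). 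The term $\int P^{j}_{k}\partial_{i}\bra{y}^{k}\,\partial_{j}\phi^{i}$ is $\langle-\partial_{j}[\partial_{i}\bra{y}^{k}P^{j}_{k}],\phi^{i}\rangle$. The remaining terms are $L^{1}_{\rm loc}$ functions paired with $\BGf$, namely $\partial_{i}\bra{y}^{k}W_{y^{k}}$ and $\partial^{2}_{ij}\bra{y}^{k}P^{j}_{k}$. They are integrable because $W_{\By}$ and $\BP$ are bounded on the compact range of $(\Bx,\bra{\By},\Grad\bra{\By})$, while $\Grad\bra{\By}\in L^{\infty}$ and $\Grad^{2}\bra{\By}\in L^{1}_{\rm loc}$. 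Collecting terms yields $\langle\mathfrak{E}^{*}_{W}+(\Grad\By)^{T}\mathfrak{E}_{W},\BGf\rangle=0$, with the second distribution exactly as in (\ref{FEW}). This is (\ref{Noether0}).

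One point needs care. Terms involving $W_{\Bx}$ must be interpreted consistently: if $W$ is only $C^{1}$, then $W_{\Bx}$ is continuous and the computation is fine. The main obstacle is justifying $\Gd E=0$ and the differentiation leading to (\ref{deform}) under the stated regularity. I would instead derive the identity directly: $\Ge\mapsto\int_{\GO}W(\Bx+\Ge\BGf,\bra{\By}(\Bx+\Ge\BGf),\Grad\bra{\By}(\Bx+\Ge\BGf))\det(\BI+\Ge\Grad\BGf)\,d\Bx$ equals the constant $E[\bra{\By}]$ by change of variables. Differentiating at $\Ge=0$ under the integral requires the chain rule for $\Grad\bra{\By}(\Bx+\Ge\BGf)$, which is a $W^{1,1}_{\rm loc}$ function composed with a near-identity diffeomorphism. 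Its difference quotients converge in $L^{1}_{\rm loc}$ to $(\Grad^{2}\bra{\By})\BGf$, a standard fact for Sobolev functions, proved via density of smooth functions. Combined with the uniform bounds on $W_{\BF}$ over a compact set, dominated convergence applies. I would handle this by mollifying $\bra{\By}$, using the smooth-case identity, and passing to the limit, which uses precisely $W^{2,1}_{\rm loc}\cap W^{1,\infty}$ and the $C^{1}$ continuity of $W$.
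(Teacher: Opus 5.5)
Your proposal is correct and follows the paper's proof. Both use the graph-preserving variation $\BX=\Bx+\Ge\BGf$ with $\Gd\By=(\Grad\bra{\By})\BGf$, which gives $\Gd E=0$ in (\ref{firstvar1}), and regrouping the terms distributionally yields exactly (\ref{FEW}) and (\ref{Noether0}). You additionally justify differentiating under the integral sign, via difference quotients of $\Grad\bra{\By}$ or mollification; the paper takes this step for granted, and it is where the $W^{2,1}_{\rm loc}$ hypothesis enters, consistent with Remark~\ref{rem1}.
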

\begin{proof}
The idea is to construct a sufficiently large family of nontrivial
variations of the $(\Bx,\By)$-space none of whose members change the graph of
$\bra{\By}(\Bx)$. Let $\BGf\in C_{0}^{\infty}(\GO;\bb{R}^{m})$ be arbitrary. Then
$\BX(\Bx,\Ge)=\Bx+\Ge\BGf(\Bx)$ is a diffeomorphism of $\GO$ onto itself for
sufficiently small $|\Ge|$, so that $\GO_{\Ge}=\GO$. Then the variation
\[
\BGF_{\Ge}(\Bx,\By)=(\BX(\Bx,\Ge),\By+\bra{\By}(\BX(\Bx,\Ge))-\bra{\By}(\Bx)),
\]
leaves the graph of $\bra{\By}(\Bx)$ invariant. Indeed,
\[
  \BGF_{\Ge}(\Bx,\bra{\By}(\Bx))=(\BX(\Bx,\Ge),\bra{\By}(\Bx)+\bra{\By}(\BX(\Bx,\Ge))
  -\bra{\By}(\Bx))=(\BX(\Bx,\Ge),\bra{\By}(\BX(\Bx,\Ge))).
\]
Since $\BX(\Bx,\Ge)$ is a diffeomorphism of $\GO$, we conclude that $\BGF_{\Ge}(\GG)=\GG$, where $\GG$ is the graph of $\bra{\By}(\Bx)$. Thus, $\Gd E=0$ on the \lhs\ of (\ref{firstvar1}). We also compute
\[
  \Gd\Bx=\BGf,\quad\Gd\By=(\Grad\bra{\By})\BGf\in W_{0}^{1,1}(\GO;\bb{R}^{m}).
\]
In that case formula (\ref{firstvar1}) reads
\begin{equation}
  \label{preNI}
  0=\int_{\GO}\{W_{\Bx}\cdot\BGf+\av{\BP^{*},\Grad\BGf}+W_{\By}\cdot(\Grad\bra{\By})\BGf+
  \av{\BP,\Grad((\Grad\bra{\By})\BGf)}\}d\Bx,
\end{equation}
writing
\[
  \av{\BP,\Grad((\Grad\bra{\By})\BGf)}=P_{k}^{j}\dif{\bra{y}^{k}}{x^{i}}\dif{\phi^{i}}{x^{j}}
  +P_{k}^{j}\mix{\bra{y}^{k}}{x^{i}}{x^{j}}\phi^{i},
\]
we see that (\ref{preNI}) can be written as
$0=\av{\mathfrak{E}^{*}_{W}|\BGf}+\av{(\Grad\By)^{T}\mathfrak{E}_{W}|\BGf}$,
where the distribution $(\Grad\By)^{T}\mathfrak{E}_{W}$ is defined by (\ref{FEW}),
and where angular brackets with a bar denote the action of distributions
on test functions.
\end{proof}
\begin{remark} \label{rem1}
 When $\bra{\By}$ is merely Lipschitz continuous, not only the proof of
  Theorem~\ref{th:NI} breaks down, but formula
(\ref{firstvar1}) itself is no longer valid. 
Nonetheless, Theorem~\ref{th:NI} can still be used on
subdomains of $\GO$, where $\bra{\By}$ is of class $W^{2,1}$. 
\end{remark}

\subsection{Parametric representation}

An important  application of formula (\ref{firstvar1}) concerns  parametric variational integrals
\begin{equation}
  \label{Egraph1}
E[\GS]=\int_{D}W(\Bz(\Bt),\Grad\Bz(\Bt))d\Bt,
\end{equation}
which do not depend on the choice of parametrization of a Lipschitz $n$-dimensional
surface \[\GS=\{\Bz(\Bt):\Bt\in D\}\subset\bb{R}^{m}.\] To reformulate independence of
reparametrization in terms of the integrand $W(\Bz,\BF)$ we consider
the reparametrization
$\Bs=\Bt+\Ge\BGth(\Bt)$, where $\BGth:D\to\bb{R}^{n}$, is an arbitrary
smooth function and $|\Ge|$ is sufficiently small. The new parametrization is then
$\Bz_{\Ge}(\Bs)$, satisfying
\[
\Bz(\Bt)=\Bz_{\Ge}(\Bs)=\Bz_{\Ge}(\Bt+\Ge\BGth(\Bt)),\quad\Bt\in D.
\]
By definition of the parametric integral we must have
\begin{equation}
  \label{rep}
\int_{D}W(\Bz(\Bt),\Grad\Bz(\Bt))d\Bt=\int_{D_{\Ge}}W(\Bz_{\Ge}(\Bs),\Grad\Bz_{\Ge}(\Bs))d\Bs,
\end{equation}
where $D_{\Ge}=\{\Bt+\Ge\BGth(\Bt):\Bt\in D\}$.
But then, formula (\ref{firstvar1}) applied to the transformation
\[
\BGF_{\Ge}(\Bt,\Bz)=(\Bt+\Ge\BGth(\Bt),\Bz),
\]
yields, taking into account that $\Gd\Bt=\BGth(\Bt)$, $\Gd\Bz=0$,
\[
  0=\int_{D}\av{\BP^{*},\Grad\BGth}d\Bt.
\]
Since, both the domain $D$ and the vector field $\BGth$ were
arbitrary, we conclude that
\begin{equation}
  \label{Pst0}
  \BP^{*}(\Bz,\BF)=0\quad\eqv\quad\BF^{T}W_{\BF}(\Bz,\BF)=W (\Bz,\BF)\BI_{n}
\end{equation}
identically. Equation  \eqref{Pst0} is the structural equation
satisfied by any parametric integrand $W(\Bz,\BF)$. Equation
(\ref{Pst0}) is equivalent to the property
\begin{equation}
  \label{vechom}
  W(\Bz,\BF\BA)=W(\Bz,\BF)\det\BA
\end{equation}
for any $\BA\in GL^{+}(n)$.
Indeed, if (\ref{vechom}) holds for all $\BA\in GL^{+}(n)$, then,
differentiating (\ref{vechom}) in $\BA$ at $\BA=\BI_{n}$, we obtain
$\BP^{*}(\Bz,\BF)=0$. Conversely, if  (\ref{Pst0}) holds, then differentiating the function
\[
GL^{+}(n)\ni\BA\mapsto\Phi(\BA)=\frac{W(\Bz,\BF\BA)}{\det\BA}
\]
for arbitrary fixed values of $\Bz\in\bb{R}^{m}$ and $\BF\in\bb{R}^{m\times n}$,
we obtain
\[
  \Grad\Phi(\BA)=\frac{\BF^{T}W_{\BF}(\Bz,\BF\BA)-W(\Bz,\BF\BA)\BA^{-T}}{\det\BA}=
  \frac{\BA^{-T}}{\det\BA}\BP^{*}(\Bz,\BF\BA)=0.
\]
Hence, $\Phi(\BA)=\Phi(\BI_{n})$ for all $\BA\in GL^{+}(n)$, since
$GL^{+}(n)$ is connected. Relation (\ref{vechom}) follows.
We conclude that if $W(\Bz,\BF)$ is a parametric Lagrangian, then it
must have the property (\ref{vechom}), which is equivalent to (\ref{Pst0}).
Conversely, if (\ref{vechom}) is satisfied and $\Bt=\BT(\Bs)$, $\Bs\in D'$ 
is an orientation-preserving Lipschitz bijection, then $\Tld{\Bz}(\Bs)=\Bz(\BT(\Bs))$ is
another parametrization of the surface $\GS$. Hence,
\begin{multline*}
    \int_{D'}W(\Tld{\Bz}(\Bs),\Grad \Tld{\Bz}(\Bs))d\Bs=
    \int_{D'}W(\Bz(\BT(\Bs)),\Grad\Bz(\BT(\Bs))\Grad\BT(\Bs))d\Bs=\\
    \int_{D'}W(\Bz(\BT(\Bs)),\Grad\Bz(\BT(\Bs)))\det\Grad\BT(\Bs)\,d\Bs=
    \int_{D}W(\Bz(\Bt),\Grad\Bz(\Bt))d\Bt,
  \end{multline*}
  since $\Grad\BT(\Bs)\in GL^{+}(n)$ for all $\Bs\in D'$.
Therefore,  the algebraic property (\ref{vechom}), or (\ref{Pst0})
gives a complete characterization of parametric Lagrangians. A well known  example of parametric invariance in the context of nonlinear elasticity is the  indifference to ``material remodeling", see for instance \cite{epel07}.

We now show  that any non-parametric variational integral
(\ref{non-param}) can be written as a parametric integral by making an
arbitrary smooth change of variables 
\begin{equation}
  \label{graphu1}
  \Bx=\BGn(\Bt)
  \end{equation}
in (\ref{non-param}), where $\BGn:\bra{D}\to\bra{\GO}$ is a
diffeomorphism. We obtain
\[
E[\By]=\int_{D}W(\BGn(\Bt),\By(\BGn(\Bt)),\Grad\By(\BGn(\Bt)))\det\Grad\BGn(\Bt)d\Bt.
\]
Introducing the notation
\[
  \Bz(\Bt)=(\BGn(\Bt),\By(\BGn(\Bt)))=(\Bz_{1}(\Bt),\Bz_{2}(t))
\]
we can now write
\begin{equation}
  \label{param}
    E[\By]=\int_{D}W(\Bz(\Bt),\Grad\Bz_{2}(\Bt)(\Grad\Bz_{1}(\Bt))^{-1})\det(\Grad\Bz_{1}(\Bt))d\Bt:=
  \int_{D}\CW(\Bz(\Bt),\Grad\Bz(\Bt))d\Bt, 
\end{equation}
Thus, the original non-parametric Lagrangian $W(\Bx,\By,\BF)$ has been replaced by the
 parametric Lagrangian operating in the extended space 
 \begin{equation}
   \label{grL}
   \CW(\Bz,\BCF)=W(\Bz_{1},\Bz_{2},\BF_{2}\BF_{1}^{-1})\det\BF_{1}, 
 \end{equation}
where $\Bz=(\Bz_{1},\Bz_{2})\in\bb{R}^{n+m}$, $\BCF=[\BF_{1};\BF_{2}]\in\bb{R}^{(m+n)\times n}$, where
$\BF_{1}$ is the upper $n\times n$ submatrix of $\BCF$, and $\BF_{2}$
is the lower $m\times n$ submatrix of $\BCF$. The function
$W(\Bz,\BCF)$ we have constructed satisfies (\ref{vechom}). Indeed, if
$\BA\in GL^{+}(n)$, then $\BCF\BA=[\BF_{1}\BA;\BF_{2}\BA]$, and hence,
\[
\CW(\Bz,\BCF\BA)=W(\Bz_{1},\Bz_{2},\BF_{2}\BA(\BF_{1}\BA)^{-1})\det\BF_{1}\BA=\CW(\Bz,\BCF)\det\BA.
\]
Therefore, equation (\ref{param}) gives a  non-parametric variational integral
(\ref{non-param}) in terms of a parametric Lagrangian $\CW(\Bz,\BCF)$.
Note that even if the original Lagrangian $W(\Bx,\By,\BF)$ was smooth on
$\bb{R}^{n}\times\bb{R}^{m}\times\bb{R}^{m\times n}$, the new Lagrangian is
smooth only on the open subset $\CO$ of $\bb{R}^{n+m}\times\bb{R}^{(m+n)\times n}$,
on which $\BF_{1}$ is invertible.

\subsection{Noether formula}
Since our  primary interest lies in studying energy minimizers
 we now consider the simplifications in the Noether's formula induced by the optimality conditions 
\begin{equation}
\Gd E=0  \label{ELeq1}
 \end{equation}
 for all outer variations $\Gd\By\in C_{0}^{\infty}(\GO;\bb{R}^{m})$  and
all inner variations $\Gd\Bx\in
C_{0}^{\infty}(\GO;\bb{R}^{n})$. According to formula
(\ref{firstvar1}), the condition \eqref{ELeq1}   requires that 
\begin{equation}
    \label{ELeq}
\mathfrak{E}_{W}(\Bx)=0,
\end{equation}
and
\begin{equation}
  \label{stationary}
\mathfrak{E}^{*}_{W}(\Bx)=0,
\end{equation}
understood in the sense of distributions. Theorem~\ref{th:NI} shows
that smooth configurations $\By(\Bx)$ satisfying (\ref{ELeq}) will
automatically satisfy (\ref{stationary}). Hence, in classical theories
equation (\ref{ELeq}) serves as the main governing equation. In
general, a Lipschitz configuration $\By(\Bx)$ satisfying (\ref{ELeq})
need not solve (\ref{stationary}).
\begin{definition}
  \label{def:equil}
We say that the Lipschitz configuration $\By(\Bx)$ is an extremal, if it satisfies the Euler-Lagrange equation (\ref{ELeq}) in the sense of distributions.
\end{definition}
An extremal is a critical point of the functional (\ref{non-param}) in
the $W^{1,\infty}$ norm topology. The so-called ``least action
principle'' requires that the dynamics of a mechanical system be an
extremal of the action functional, without the requirement that the
action be minimized. Remark~\ref{rem1} implies that the distribution
$\mathfrak{E}^{*}_{W}$ must be supported on the singular sets of
space-time gradients of extremals, while thermodynamical
considerations are invoked to further constrain the distribution
$\mathfrak{E}^{*}_{W}$, \cite{trus93,grtreldyn}.

If the energy functional (\ref{non-param}) is to be minimized, then
(\ref{stationary}) must hold, in addition to (\ref{ELeq}), for all
Lipschitz energy minimizers.
\begin{definition}
  \label{def:exst}
We say that a Lipschitz configuration $\By(\Bx)$ is \emph{stationary} if
it is an extremal in the sense of Definition~\ref{def:equil}, and additionally satisfies (\ref{stationary}).
\end{definition}
We stress that stationarity, is not a consequence of equilibrium
for nonsmooth $\By(\Bx)$. However, any Lipschitz extremal of parametric
variational integrals is always stationary, due to (\ref{Pst0}). In
particular, any extremal of (\ref{param}) must be a stationary
extremal for (\ref{non-param}). Intuitively, the statement is apparent, since any weak outer
variation of $\Bz=(\Bx,\By)$ is a combination of an inner and outer variations
for $\By(\Bx)$, and it can also be verified by a direct calculation.
Indeed, by direct differentiation we obtain
\begin{equation}
  \label{Phat12}
\BCP(\Bz,\BCF)=\CW_{\BCF}=\left[\begin{array}{c}
\BP^{*}(\Bz,\BF_{2}\BF_{1}^{-1})\cof\BF_{1}\\[1ex]
\BP(\Bz,\BF_{2}\BF_{1}^{-1})\cof\BF_{1}
\end{array} \right].
\end{equation}
Note that such  a higher dimensional  Piola stress $\BCP$ mixes the
classical low dimensional Piola and Eshelby stresses.
The extended Euler-Lagrange equation
\[
  \Grad_{\Bt}\cdot\CW_{\BCF}(\Bz,\BCF)=\CW_{\Bz}
\]
can be written in terms of $\By(\Bx)$ as a system
\begin{equation}
  \label{ELt1}
  \begin{cases}
  \Grad_{\Bt}\cdot\left[\BP^{*}(\BGn(\Bt),\By(\BGn(\Bt)),\Grad_{\Bx}\By(\BGn(\Bt)))
\cof\Grad_{\Bt}\BGn(\Bt)\right]=W_{\Bx}\det\Grad_{\Bt}\BGn(\Bt),\\
\Grad_{\Bt}\cdot(\BP(\BGn(\Bt),\By(\BGn(\Bt)),\Grad_{\Bx}\By(\BGn(\Bt)))
\cof\Grad_{\Bt}\BGn(\Bt))=W_{\By}\det\Grad_{\Bt}\BGn(\Bt).
\end{cases}
\end{equation}
It is then clear  that a smooth change of variables $\Bx=\BGn(\Bt)$ converts
 system  (\ref{ELt1}) into an equivalent system
 (\ref{ELeq}), (\ref{stationary}).
 For stationary Lipschitz extremals formula (\ref{firstvar1}) simplifies, even
 if $\Gd\Bx(\Bx)$ and $\Gd\By(\Bx)$ are only of class $W^{1,1}$.
\begin{theorem}[Noether formula]~
  \label{th:genoether}
 Suppose $\By(\Bx)$ is a Lipschitz stationary extremal
  of the functional (\ref{non-param}). Suppose also that the
  variation (\ref{trgrp}) are such that $\Gd\Bx(\Bx)$ and
  $\Gd\By(\Bx)$, given by (\ref{dxdy}) are of class $W^{1,1}$. Then
    \begin{equation}
  \label{increm}
  \Gd E=\int_{\dOm}\{\BP\Bn\cdot\Gd\By+\BP^{*}\Bn\cdot\Gd\Bx\}dS(\Bx),
\end{equation}
where the \rhs\ is understood in the sense of traces\footnote{It means that
there exists $\Bt\in L^{\infty}(\dOm;\bb{R}^{m})$, denoted by $\BP\Bn$, such that
$\displaystyle\int_{\GO}\av{\BP,\Grad\BGf}d\Bx=-\int_{\GO}\BGf\cdot\Div\BP\,d\Bx+\int_{\dOm}\Bt(\Bx)\cdot\BGf(\Bx)dS$
for any $\BGf\in C^{\infty}(\bra{\GO};\bb{R}^{m})$. The existence of
such a trace $\Bt$ follows from the assumption that
$\Div\BP=W_{\By}\in L^{\infty}(\GO;\bb{R}^{m})$.}.
\end{theorem}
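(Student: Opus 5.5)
The plan is to start from the weak Noether identity (\ref{firstvar1}). It holds for every Lipschitz $\By$ and all variations $\Gd\Bx,\Gd\By\in W^{1,1}$, so no second derivatives of $\By$ are needed. Stationarity will then be used to convert each bulk pairing into a boundary term. Extremality (\ref{ELeq}) in the sense of distributions says that $\Div\BP=W_{\By}$ with both sides in $L^{\infty}(\GO;\bb{R}^{m})$, since $\By$ is Lipschitz and $W$ is $C^{1}$ on a neighborhood of the range. Likewise, (\ref{stationary}) says that $\Div\BP^{*}=W_{\Bx}\in L^{\infty}(\GO;\bb{R}^{n})$. Thus $\BP$ and $\BP^{*}$ are bounded fields with bounded divergence, and by the footnote construction they possess normal traces $\BP\Bn,\BP^{*}\Bn\in L^{\infty}(\dOm)$.

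The key step is the generalized Green formula
\[
\int_{\GO}\av{\BP,\Grad\BGf}d\Bx=-\int_{\GO}\BGf\cdot\Div\BP\,d\Bx+\int_{\dOm}\BP\Bn\cdot\BGf\,dS,
\]
which by definition holds for $\BGf\in C^{\infty}(\bra{\GO})$. I would extend it to $\BGf=\Gd\By\in W^{1,1}(\GO;\bb{R}^{m})$ by density. Take smooth $\BGf_{k}\to\Gd\By$ in $W^{1,1}$. The bulk terms converge because $\BP,\Div\BP\in L^{\infty}$ and $\BGf_{k}\to\Gd\By$, $\Grad\BGf_{k}\to\Grad\Gd\By$ in $L^{1}$. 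The boundary term converges because the trace operator $W^{1,1}(\GO)\to L^{1}(\dOm)$ is continuous (for Lipschitz $\GO$) and $\BP\Bn\in L^{\infty}(\dOm)$. The same argument applies to $\BP^{*}$ with $\Gd\Bx$.

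Substituting both formulas into (\ref{firstvar1}), the bulk integrand becomes
\[
(W_{\By}-\Div\BP)\cdot\Gd\By+(W_{\Bx}-\Div\BP^{*})\cdot\Gd\Bx,
\]
which vanishes a.e. by (\ref{ELeq}) and (\ref{stationary}). What remains is (\ref{increm}).

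The main obstacle is the density and trace step. One must check that $\BP\Bn$ is well defined as an $L^{\infty}$ function rather than merely an $H^{-1/2}$-type object, and that the pairing with $W^{1,1}$ traces is legitimate. I would handle this by citing the Anzellotti/Chen--Frid theory of divergence-measure fields: for $\BP\in L^{\infty}$ with $\Div\BP\in L^{\infty}$ on a Lipschitz domain, the normal trace lies in $L^{\infty}(\dOm)$ and the Green formula holds for Lipschitz test functions. Passing from Lipschitz to $W^{1,1}$ test functions is then the approximation described above.
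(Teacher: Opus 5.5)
Your proposal is correct and follows essentially the same route as the paper: you approximate $\Gd\Bx$ and $\Gd\By$ by smooth fields in $W^{1,1}$ and apply the defining Green formula for the normal traces of $\BP$ and $\BP^{*}$, whose divergences equal $W_{\By}$ and $W_{\Bx}$ by stationarity. You then pass to the limit using continuity of the trace operator. The only difference is cosmetic: you first extend the Green formula to $W^{1,1}$ test functions and then substitute into (\ref{firstvar1}), whereas the paper passes to the limit directly in the combined identity written for the smooth approximants.
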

\begin{proof}
The idea is to show that under the assumptions of the theorem the \rhs\ of (\ref{firstvar1}) is equal to the \rhs\ of (\ref{increm}). By density theorems there
  exist sequences $\BGf_{n}\in C^{\infty}(\bra{\GO};\bb{R}^{n})$ and
  $\BGy_{n}\in C^{\infty}(\bra{\GO};\bb{R}^{m})$, such that
  $\BGf_{n}\to\Gd\Bx$ in $W^{1,1}(\GO;\bb{R}^{n})$ and
  $\BGy_{n}\to\Gd\By$ in $W^{1,1}(\GO;\bb{R}^{m})$.
  Then, by definition of traces we have
  \[
\int_{\GO}\av{\BP,\Grad\BGy_{n}}d\Bx=-\int_{\GO}\BGy_{n}\cdot\Div\BP\,d\Bx+\int_{\dOm}\BP\Bn(\Bx)\cdot\BGy_{n}(\Bx)dS   
\]
and
 \[
\int_{\GO}\av{\BP^{*},\Grad\BGf_{n}}d\Bx=-\int_{\GO}\BGf_{n}\cdot\Div\BP^{*}\,d\Bx+\int_{\dOm}\BP^{*}\Bn(\Bx)\cdot\BGf_{n}(\Bx)dS.
\]
Since $\By$ is a stationary extremal we conclude that
\[
\int_{\GO}\{W_{\By}\cdot\BGy_{n}+W_{\Bx}\cdot\BGf_{n}+\av{\BP,\Grad\BGy_{n}}
  +\av{\BP^{*},\Grad\BGf_{n}}\}d\Bx=\int_{\dOm}\{\BP\Bn\cdot\BGy_{n}+\BP^{*}\Bn\cdot\BGf_{n}\}d\Bx.
\]
Passing to the limit as $n\to\infty$ in the formula above,
using trace inequality (e.g. \cite[Lemma~9.9]{brez11}),
shows that
the \rhs\ of (\ref{firstvar1}) is equal to the \rhs\ of (\ref{increm}).
\end{proof}
It is important to note that one can  rewrite (\ref{increm}) in a form that makes it clear that only the part of the virtual graph-displacement $(\Gd\Bx,\Gd\By)$ that moves the graph of $\By(\Bx)$ as a geometric object contributes to the integral on the \rhs\ of (\ref{increm}). Indeed,
we observe that
\[
T_{(\Bx,\By(\Bx))}\Md\GG=\{(\BGt,\Grad\By(\Bx)\BGt):\BGt\in T_{\Bx}\dOm\}
\]
is the tangent space to the boundary of the graph of $\By(\Bx)$ at the point $(\Bx,\By(\Bx))\in\Md\GG$, $\Bx\in\dOm$. We then compute, using formula (\ref{Esh-tensor}),
\[
(\BP^{*}\Bn,\BP\Bn)\cdot(\BGt,\BF\BGt)=\BP^{*}\Bn\cdot\BGt+\BP\Bn\cdot\BF\BGt=W\Bn\cdot\BGt-\BF^{T}\BP\Bn\cdot\BGt+\BP\Bn\cdot \BF\BGt=0,
\]
where we have used the shorthand $\BF$ to denote $\Grad\By(\Bx)$, and where $\BP$ and $\BP^{*}$ are evaluated at $(\Bx,\By(\Bx),\Grad\By(\Bx))$.
This shows that the vector $(\BP^{*}\Bn,\BP\Bn)\in\bb{R}^{m+n}$ is
orthogonal to $\Md\GG$ at every point of $\Md\GG$. Therefore, we can also write
\begin{equation}
  \label{grincrgeom}
  \Gd E=\int_{\dOm}\{(\BP^{*}\Bn,\BP\Bn)\cdot\CP_{(\Md\GG)^{\perp}}(\Bx)(\Gd\Bx,\Gd\By)\}dS,
\end{equation}
where $\CP_{(\Md\GG)^{\perp}}(\Bx)$ denotes the orthogonal projection
from $\bb{R}^{m+n}$ onto $(T_{(\Bx,\By(\Bx))}\Md\GG)^{\perp}$. The
above calculations are valid whenever $\Grad\By(\Bx)$ is sufficiently
regular in a \nbh\ of $\dOm$, so that $\BP\Bn$ and $\BP^{*}\Bn$ can be
interpreted pointwise on $\dOm$.

\subsection{Conservation laws}
\label{sub:varsym}
The main impact of Noether formula (\ref{increm}) was in its
application to cases of variational symmetries---transformations (\ref{trgrp}), that do not
change the functional $E[\By]$, given by (\ref{non-param}). Here we
revisit those that have applications to fracture mechanics and
elasticity more generally.
Suppose first, that the function $W$ does not
depend explicitly on $\Bx$. Therefore, the transformation
(\ref{trgrp}), given by
\begin{equation}
  \label{translation}
  \BGF_{\Ge}(\Bx,\By)=(\Bx+\Ge\Ba,\By).
\end{equation}
is a variational symmetry. Then, according to Theorem~\ref{th:genoether},
we must have
\begin{equation}
  \label{Jreason}
  \int_{\Md V}\BP^{*}\Bn\,dS=0
\end{equation}
for any piecewise smooth subdomain $V\subset\GO$.
To show  that such  statement can be useful in applications,  we can
recall the concept of  J-integral in fracture mechanics, which  in two
space dimensions is defined as a vector with two components
\begin{equation}
  \label{J1J2}
  J_{1}=\int_{C}\left\{Wdx_{2}-\Bt\cdot\dif{\By}{x_{1}}dl\right\},\qquad
J_{2}=-\int_{C}\left\{Wdx_{1}+\Bt\cdot\dif{\By}{x_{2}}dl\right\},
\end{equation}
obtained from (\ref{Jreason}) by expanding $\BP^{*}$ via (\ref{Esh-tensor})
and using  $\Bn\,dl=(dx_{2},-dx_{1})$, and $\Bt=\BP\Bn$.

While in  \eqref{J1J2} the contour 
$C$ can represent  any  path in $\GO$, in fracture mechanics one usually considers a curve $C$ that
starts at one side of the straight crack along the $x_{1}$-axis, goes around its tip,
and ends at the other side of the crack on the $x_{1}$-axis. If the crack sides are traction-free
($\Bt=0$ along the crack), then it is easy to see from (\ref{J1J2}) that $J_{1}$ (the
J-integral) does not depend on the specific locations of the start and end-points
of $C$ at the sides of the crack, and therefore gives in particular a   measure of stress
concentration at the crack tip. Note, however,  that the homotopy
invariance of the integral 
\begin{equation}
  \label{Jreason1}
\BJ=\int_{C}\BP^{*}\Bn\,dS 
\end{equation}
is shown here under more general assumptions
than it is usually done in fracture mechanics  \cite{rice68,buri73,chsh77,delph82,cher89}
in that we permit the energy density to depend on both $\By$ and $\BF$.

Another example of a symmetry and the corresponding conservation law occurs if we replace translational invariance by rotational invariance. In mechanics the corresponding  homotopy-invariant quantity, called the L-integral in \cite{buri73},  was introduced in the context of geometrically linear elasticity that does not
make a distinction between a reference and a deformed configuration.
To give a variational interpretation  of the L-integral in the
geometrically nonlinear setting, we define rotational invariance in a way that is not quite
natural from the point of view of finite elasticity.
\begin{definition}
  \label{def:iso}
  We say that the energy density $W(\Bx,\By,\BF)$ is rotationally-invariant if 
  \begin{equation}
    \label{gliniso}
    W(\BR\Bx,\BR\By,\BR\BF\BR^{T})=W(\Bx,\By,\BF),\quad\forall\BR\in SO(3).
  \end{equation}
Note, in particular,  that homogeneous nonlinearly elastic materials\footnote{Materials
whose energy density $W(\BF)$, depends only on $\BF$, and satisfies $W(\BR\BF)=W(\BF)$ for all
$\BR\in SO(3)$.} are isotropic \IFF they are rotationally-invariant .
\end{definition}

It is easy to check that the transformation group (\ref{trgrp}), given by
\[
\BGF_{\Ge}(\Bx,\By)=(e^{\Ge\BGO(\BGo)}\Bx,e^{\Ge\BGO(\BGo)}\By),
\]
where $\BGO(\BGo)$ is the skew-symmetric $3\times 3$ matrix satisfying $\BGO(\BGo)\Bu=\BGo\times\Bu$,
is a variational symmetry, since $\BR(\BGo)=e^{\Ge\BGO(\BGo)}\in SO(3)$ for all $\Ge\in\bb{R}$ and all
$\BGo\in\bb{R}^{3}$.
It is clear that the \lhs\ in Noether formula
(\ref{increm}) is zero and that $\Gd\Bx=\BGo\times\Bx$ and 
$\Gd\By=\BGo\times\By$. Taking into account that $\BGo\in\bb{R}^{3}$
is arbitrary,  we obtain the corresponding conservation law
\begin{equation}
  \label{Lreason}
  \int_{\Md V}\{\BP\Bn\times\By+\BP^{*}\Bn\times\Bx\}dS=0,
\end{equation}
 where $V\subset\GO$ is any piecewise smooth subdomain. Equation  \eqref{Lreason}  suggests that 
\[
\BL=\int_{\GS}\{\BP\Bn\times\By+\BP^{*}\Bn\times\Bx\}dS
\]
where $\GS$ is any piecewise-smooth surface, is homotopy-invariant,
provided $\Md\GS$ is fixed. If $\GS$ has no boundary, then
the nonzero values of $\BL$ can come
from closed surfaces $\GS$ containing one or more components of the crack
set in its interior.
We point out again that the homotopy
invariance of the L-integral is proved here under more general assumptions
than, for instance,  in the classical paper \cite{buri73}, in that we permit the energy density to depend (smoothly) on $\Bx$,
$\By$, and $\BF$, as long as the isotropy property (\ref{gliniso}) holds.

To obtain yet another nontrivial  invariant integral we recall that in linear elasticity 
the corresponding energy density is a quadratic function of the displacement gradient.
The emerging scaling invariance  property can then be formulated as a
variational symmetry.
More generally, we assume, that
\begin{equation}
  \label{scaling}
  W(\Gl^{-\frac{n}{p}}\Bx,\Gl^{1-\frac{n}{p}}\By,\Gl\BF)=\Gl^{p}W(\Bx,\By,\BF),\quad
  \forall\Bx\in\GO,\ \By\not=0,\ \BF\not=0,
\end{equation}
for some $p\not=0$. Here $\Gl>0$ can be restricted to an interval,
containing $\Gl=1$. 
The corresponding transformation group  
\[
\BGF_{\Ge}(\Bx,\By)=(e^{\Ge}\Bx,e^{\frac{p-n}{p}\Ge}\By)
\]
is a variational symmetry with infinitesimal generators $\Gd\Bx=\Bx$ and
$\Gd\By=\frac{p-n}{p}\By$. Theorem~\ref{th:genoether} then gives the
following corresponding conservation law
\begin{equation}
  \label{Mreason}
  \int_{\Md V}\left\{\BP^{*}\Bn\cdot\Bx+\frac{p-n}{p}\BP\Bn\cdot\By\right\}dS=0.
\end{equation}
In the special case $p=2$, $n=2$, \eqref{Mreason} reduces to the  classical  M-integral which we write in the form
\[
M_{2D}=\int_{\GS}\BP^{*}\Bn\cdot\Bx\,dS,
\]
If  $p=2$ and $n=3$, we obtain a more complex expression 
\[
M_{3D}=\int_{\GS}\left\{\BP^{*}\Bn\cdot\Bx-\hf\BP\Bn\cdot\By\right\}dS,
\]
which can be viewed as a generalization of the relation first obtained
in \cite[formula~(3b)]{buri73}. For more information about path independent integrals in elastostatics and elastodynamics  and their relation to Noether's classical paper, see \cite{flet76,herr82,delph82,haol98,mark06,luma07,ngkp24}.

\section{Clapeyron-type theorems}
\label{sec:clapeyron}
The main significance of the  general formalism developed in the previous section lies in the ability
of Theorem~\ref{th:genoether} to convert special properties of
variational functionals, which we refer to as \emph{partial
  variational symmetries}, into useful formulas satisfied by their
stationary configurations via the Noether formula (\ref{increm}).
The idea is to use particular variations $\BGF_{\Ge}$ of the $(\Bx,\By)$
space under which the Lagrangian $W(\Bx,\By,\BF)$ transforms in a known way due to its special
properties with respect to the transformation $\BGF_{\Ge}$. The original
application of this idea by Noether was to the case of variational
symmetries that we have discussed in Section~\ref{sub:varsym}.

\subsection{Clapeyron's theorem (CT)}
\label{sub:CCT}
We show now that the classical CT (\ref{Clap}) is a
particular case of Noether formula (\ref{increm}) that exploits
the partial symmetry of 2-homogeneity of the energy density function in linear
elasticity. More generally, we  now assume that the Lagrangian $W$ has the following property of
$p$-homogeneity
\begin{equation}
  \label{phom}
  W(\Bx,\Gl\By,\Gl\BF)=\Gl^{p}W(\Bx,\By,\BF),\quad\forall\Bx\in\GO,\ \By\not=0,\ \BF\not=0,
\end{equation}
for some $p\not=0$. Property (\ref{phom}) is more natural than
(\ref{scaling}), and admits heterogeneous structures made of different
$p$-homogeneous materials. 
In this case the variation
\begin{equation}
  \label{uscale}
  \Tld{\Bx}=\Bx,\qquad\Tld{\By}=e^{\Ge}\By
\end{equation}
results in $\Tld{\By}(\Tld{\Bx})=e^{\Ge}\By(\Bx)=e^{\Ge}\By(\Tld{\Bx})$, so that
\[
E[\Tld{\By}](\Ge)=\int_{\GO}W(\Tld{\Bx},e^{\Ge}\By(\Tld{\Bx}),e^{\Ge}\Grad\By(\Tld{\Bx}))d\Tld{\Bx}=
e^{p\Ge}E[\By].
\]
We have now  $\Gd\Bx=0$, $\Gd\By=\By$  which leads to our first
generalization of Clapeyron's theorem.
\begin{theorem}
  \label{th:pGCT}
  Suppose that $\By(\Bx)$ is a Lipschitz extremal of
  (\ref{non-param}). If the energy density function satisfies
  (\ref{phom}), then
  \begin{equation}
  \label{semiC}
  E[\By]=\nth{p}\int_{\dOm}\BP\Bn\cdot\By\,dS. 
\end{equation}
in the sense of traces.
\end{theorem}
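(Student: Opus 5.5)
The plan is to apply the Noether formula to the purely outer variation (\ref{uscale}), for which $\Gd\Bx=0$ and $\Gd\By=\By$. One point needs care first. Theorem~\ref{th:genoether} is stated for \emph{stationary} extremals, whereas Theorem~\ref{th:pGCT} assumes only that $\By$ is an extremal. Since $\Gd\Bx=0$, however, the Eshelby term plays no role, so I will not invoke Theorem~\ref{th:genoether} as stated. Instead I will repeat its argument using only the outer part of the weak Noether identity (\ref{firstvar1}), which holds for every Lipschitz $\By$.

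\textbf{Step 1: compute $\Gd E$ from the homogeneity.} By (\ref{phom}), $E[e^{\Ge}\By]=e^{p\Ge}E[\By]$ for all $\Ge$, so differentiating at $\Ge=0$ gives $\Gd E=pE[\By]$. To justify this pointwise almost everywhere I will use the exceptional set in (\ref{phom}) where $\By=0$ or $\BF=0$. On that set the identity is still recovered from (\ref{phom}) by continuity of $W$ in $(\By,\BF)$: let $(\By,\BF)$ approach a point of the set and use $C^{1}$ regularity. Alternatively, and more directly, differentiate (\ref{phom}) in $\Gl$ at $\Gl=1$ to obtain the Euler relation
\[
W_{\By}\cdot\By+\av{W_{\BF},\BF}=pW,
\]
which extends by continuity to all arguments.

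\textbf{Step 2: the weak identity.} With $\Gd\Bx=0$ and $\Gd\By=\By\in W^{1,\infty}\subset W^{1,1}$, identity (\ref{firstvar1}), or equivalently the Euler relation integrated over $\GO$, gives
\[
pE[\By]=\int_{\GO}\{W_{\By}\cdot\By+\av{\BP,\Grad\By}\}\,d\Bx .
\]

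\textbf{Step 3: pass to the boundary.} Since $\By$ is an extremal, $\Div\BP=W_{\By}\in L^{\infty}$ in the sense of distributions, so the normal trace $\BP\Bn$ exists as in the footnote to Theorem~\ref{th:genoether}. I will approximate $\By$ by smooth $\BGy_{k}\to\By$ in $W^{1,1}(\GO)$; these are smooth up to the boundary when $\GO$ is Lipschitz. Applying the trace definition to each $\BGy_{k}$ and passing to the limit with the trace inequality, as in the proof of Theorem~\ref{th:genoether}, yields
\[
\int_{\GO}\av{\BP,\Grad\By}\,d\Bx=-\int_{\GO}\By\cdot W_{\By}\,d\Bx+\int_{\dOm}\BP\Bn\cdot\By\,dS .
\]
Substituting this into Step 2, the bulk terms cancel, leaving $pE[\By]=\int_{\dOm}\BP\Bn\cdot\By\,dS$. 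Dividing by $p\neq 0$ gives (\ref{semiC}).

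\textbf{Main obstacle.} The delicate step is the limit passage in Step 3, namely making sense of $\int_{\dOm}\BP\Bn\cdot\By\,dS$ when $\BP$ is only $L^{\infty}$ with $L^{\infty}$ divergence. The pairing of the normal trace with $\By|_{\dOm}$ is controlled because $\By$ is Lipschitz, hence its trace is continuous, and because the boundary pairing is bounded by the $W^{1,1}$ norm of the test function through the Green formula itself. I would rely on this bound exactly as the paper does in the proof of Theorem~\ref{th:genoether}.
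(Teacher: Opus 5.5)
Your proposal is correct and follows the paper's argument. You take the outer scaling variation (\ref{uscale}) with $\Gd\Bx=0$, $\Gd\By=\By$, use $p$-homogeneity to get $\Gd E=pE[\By]$, and convert the result to the boundary term through the outer half of the Noether/trace argument. Your observation that stationarity is unnecessary because the Eshelby term drops out is exactly the paper's remark after the theorem, and redoing the trace step rather than invoking Theorem~\ref{th:genoether} as stated is a clean way to justify it.
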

We remark that in nonlinear elasticity the deformation $\By(\Bx)$ is
an orientation preserving diffeomorphism mapping the reference
configuration $\GO$ into the deformed configuration $\GO^{*}$. We
recall that in this context the Cauchy stress tensor $\BGs(\By)$ is
related to the Piola stress tensor $\BP(\Bx)$ by
\[
\BGs(\By(\Bx))\BN\,dS^{*}=\BP(\Bx)\Bn\,dS,
\]
where $\BN$ is the outward unit normal to $\dOm^{*}$ at $\By(\Bx)$,
while $\Bn$ is the outward unit normal to $\dOm$ at
$\Bx\in\dOm$. Hence, in application to nonlinear elasticity, the
$p$-homogeneous Clapeyron theorem can also be written
 entirely in terms of Eulerian coordinates
\begin{equation}
  \label{phomClap}
E[\By]=\nth{p}\int_{\dOm^{*}}\BGs(\By)\BN\cdot\By\,dS^{*}.   
\end{equation}
This means that $p$-homogeneity of the energy density ensures static
determinacy, where the total elastic energy depends only on the current state
of stress and not on the underlying displacements or
material moduli.

The classical CT (\ref{Clap}) is obtained from
Theorem~\ref{th:pGCT}, applied to the linear elastic energy density
(\ref{linel}). The absense of inner variations $\Gd\Bx=0$, according to (\ref{uscale}), makes
formula (\ref{semiC}) valid for all extremals, including the ones that
are not stationary in the sense of Definition~\ref{def:exst}.
  Also, since $\Gd\Bx=0$  the relation
(\ref{semiC}) remains  valid even when the Lagrangian $W(\Bx,\By,\BF)$ as a function of $\Bx$ is only
 measurable and bounded.

 \subsection{ Generalized Clapeyron Theorem  (GCT)}
 \label{sub:GCT1}
We are now in the position to state  a  simple but general  result to which we can  refer
  as the Generalized Clapeyron Theorem (GCT) because it  holds for any
 Lipschitz stationary extremal of an  arbitrary energy functional of the form 
 (\ref{non-param}). The corresponding theorem  will be the source of all subsequent
 results.
 \begin{theorem}
   \label{th:GCT}
   Suppose $\By(\Bx)$ is a Lipschitz stationary extremal of the energy functional
 (\ref{non-param}). Then
  \begin{equation}
   \label{genClap}
  E[\By]=-\frac{1}{n}\int_{\GO}\{W_{\Bx}\cdot\Bx+W_{\By}\cdot\By(\Bx)\}d\Bx
   +\frac{1}{n}\int_{\dOm}\{\BP\Bn\cdot\By+\BP^{*}\Bn\cdot\Bx\}dS,
 \end{equation}
 where $\BP\Bn$ and $\BP^{*}\Bn$ in the surface integral are
 understood in the sense of traces, as in the Noether formula
 (\ref{increm}). The term $-W_{\By}$ in the volume integral term has the meaning of externally applied bulk forces, while $-W_{\Bx}$ describes external configurational bulk forces.    
\end{theorem}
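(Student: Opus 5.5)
The plan is to apply the weak Noether identity (\ref{firstvar1}), in the form given by the Noether formula of Theorem~\ref{th:genoether}, to the uniform dilation of the $(\Bx,\By)$-space,
\[
\BGF_{\Ge}(\Bx,\By)=(e^{\Ge}\Bx,e^{\Ge}\By).
\]
Its infinitesimal generators are $\Gd\Bx=\Bx$ and $\Gd\By=\By(\Bx)$. Both are Lipschitz on the bounded domain $\GO$, hence in $W^{1,1}$, so the hypotheses of Theorem~\ref{th:genoether} are met.

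Since no symmetry of $W$ is assumed, the volume terms cannot be discarded. I will therefore not use (\ref{increm}) directly. Instead I will equate two expressions for $\Gd E$.

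\emph{First expression.} The graph of $\By$ is mapped to the graph of $\By_{\Ge}(\Tld{\Bx})=e^{\Ge}\By(e^{-\Ge}\Tld{\Bx})$ on $\GO_{\Ge}=e^{\Ge}\GO$. Then $\Grad\By_{\Ge}(\Tld{\Bx})=\Grad\By(e^{-\Ge}\Tld{\Bx})$, so the deformation gradient is unchanged. Changing variables gives
\[
E[\GG_{\Ge}]=e^{n\Ge}\int_{\GO}W(e^{\Ge}\Bx,e^{\Ge}\By(\Bx),\Grad\By(\Bx))\,d\Bx.
\]
Differentiating at $\Ge=0$ yields
\[
\Gd E=nE[\By]+\int_{\GO}\{W_{\Bx}\cdot\Bx+W_{\By}\cdot\By\}\,d\Bx.
\]
Differentiation under the integral is justified because $W\in C^{1}$ on a neighborhood of the compact range of $(\Bx,\By,\Grad\By)$. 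This agrees with (\ref{deform}), since $\Grad\Gd\By-\Grad\By\Grad\Gd\Bx=\Grad\By-\Grad\By=0$ and $\Div\Gd\Bx=n$.

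\emph{Second expression.} Since $\By$ is a stationary extremal, Theorem~\ref{th:genoether} gives
\[
\Gd E=\int_{\dOm}\{\BP\Bn\cdot\By+\BP^{*}\Bn\cdot\Bx\}\,dS,
\]
with the traces understood as in that theorem.

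Equating the two expressions and dividing by $n$ gives (\ref{genClap}).

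The main obstacle is justifying the boundary traces for merely Lipschitz $\By$. $\BP$ and $\BP^{*}$ are only $L^{\infty}$, but stationarity gives $\Div\BP=W_{\By}\in L^{\infty}$ and $\Div\BP^{*}=W_{\Bx}\in L^{\infty}$ in the sense of distributions. Hence normal traces of both fields exist, and the approximation argument in the proof of Theorem~\ref{th:genoether} applies verbatim with $\Gd\Bx=\Bx$ and $\Gd\By=\By$. One could avoid the graph-transformation language altogether by checking directly from (\ref{firstvar1}) that its right-hand side equals $nE+\int\{W_{\Bx}\cdot\Bx+W_{\By}\cdot\By\}$, and then integrating by parts using the stationarity equations. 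The physical interpretation of $-W_{\By}$ and $-W_{\Bx}$ as bulk forces is just a reading of the formula and requires no proof.
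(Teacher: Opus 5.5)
Your proposal is correct and follows essentially the same route as the paper's proof. Both arguments use the dilation $(\Bx,\By)\mapsto(e^{\Ge}\Bx,e^{\Ge}\By)$, compute directly $\Gd E=nE[\By]+\int_{\GO}\{W_{\Bx}\cdot\Bx+W_{\By}\cdot\By\}\,d\Bx$, and equate this with the boundary integral supplied by the Noether formula (\ref{increm}). Your remarks on the $W^{1,1}$ regularity of the generators and on the existence of the traces only make explicit what the paper leaves implicit.
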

 \begin{proof}
   Consider a  class of scaling transformations of the form 
 \begin{equation}
   \label{scalingxy}
   \Tld{\Bx}=e^{\Ge}\Bx,\quad \Tld{\By}=e^{\Ge}\By,  
 \end{equation}
where  $\Gd\Bx=\Bx$, $\Gd\By=\By$, and
$\Tld{\By}(\Tld{\Bx})=e^{\Ge}\By(\Bx)=e^{\Ge}\By(e^{-\Ge}\Tld{\Bx})$. For
our general class of functionals (\ref{non-param}) we compute
 \begin{equation}
   \label{deltaE}
     \Gd  E=\left.\frac{d}{d\Ge}
     \int_{\GO}W(e^{\Ge}\Bx,e^{\Ge}\By(\Bx),\Grad\By(\Bx))e^{n\Ge}d\Bx\right|_{\Ge=0}=
   n E[\By]+\int_{\GO}\{W_{\Bx}\cdot\Bx+W_{\By}\cdot\By(\Bx)\}d\Bx. 
 \end{equation}
 Since $\By(\Bx)$ is a Lipschitz stationary extremal, all
 assumptions of Theorem~\ref{th:genoether} are satisfied, and 
 therefore, the generalized Noether's formula \eqref{increm} is valid.
 Then, the representation (\ref{genClap}) of the elastic energy follows.
 \end{proof}

\subsection{ Clapeyron-Eshelby theorem (CET)}
 \label{sub:GCT}
  Next we show  how  a particular partial variational symmetry,
  representing scale invariance of an associated variational problem,
  produces a simpler formula  for the stored elastic energy of the body
  in stable equilibrium in terms of physical and configurational
  tractions on its boundary.  
By scale invariance of the variational problem we mean that 
\begin{equation}
  \label{scalefree}
  W(\Gl\Bx,\Gl\By,\BF)=W(\Bx,\By,\BF),\quad\forall\Gl>0,
\end{equation}
which says that the simultaneous change of scale of Lagrangian and
Eulerian coordinates does not affect the energy density. We note that
if $W$ is continuous on $\bb{R}^{n}\times\bb{R}^{m}\times\bb{R}^{m\times n}$, property (\ref{scalefree})
simply means that $W=W(\BF)$ and does not depend on $\Bx$ and $\By$
explicitly. If $W$ is not required to be continuous at $\Bx=0$ or
$\By=0$, then the class of functions satisfying (\ref{scalefree}) is
much broader.

The assumption (\ref{scalefree}) says that $W$ is homogeneous of
degree zero as a function of $(\Bx,\By)$, and therefore, by Euler's
formula for $p$-homogeneous functions we have
\[
W_{\Bx}(\Bx,\By,\BF)\cdot\Bx+W_{\By}(\Bx,\By,\BF)\cdot\By=0
\]
identically.  Combining this straightforward  observation  with GCT, Theorem~\ref{th:GCT}, we obtain 
\begin{theorem}
  \label{th:clapeyron}
Suppose $\By(\Bx)$ is a Lipschitz stationary extremal of the functional (\ref{non-param}).
If the energy density function satisfies (\ref{scalefree}), then
  \begin{equation}
  \label{Clapeyron}
E[\By]=\nth{n}\int_{\dOm}\{\BP\Bn\cdot\By+\BP^{*}\Bn\cdot\Bx\}dS(\Bx),
\end{equation}
where $\BP\Bn\in L^{\infty}(\dOm;\bb{R}^{m})$ and $\BP^{*}\Bn\in
L^{\infty}(\dOm;\bb{R}^{n})$ are understood in the sense of traces.
\end{theorem}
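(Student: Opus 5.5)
The plan is to obtain (\ref{Clapeyron}) directly from the Generalized Clapeyron Theorem, Theorem~\ref{th:GCT}, by showing that its volume integral vanishes identically under the scale invariance assumption (\ref{scalefree}). Since $\By(\Bx)$ is assumed to be a Lipschitz stationary extremal, Theorem~\ref{th:GCT} applies verbatim. It gives
\[
E[\By]=-\frac{1}{n}\int_{\GO}\{W_{\Bx}\cdot\Bx+W_{\By}\cdot\By(\Bx)\}d\Bx+\frac{1}{n}\int_{\dOm}\{\BP\Bn\cdot\By+\BP^{*}\Bn\cdot\Bx\}dS,
\]
with the boundary terms understood in the sense of traces, exactly as in Theorem~\ref{th:genoether}. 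It therefore suffices to show that the integrand of the volume term is zero a.e. in $\GO$.

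For this I use Euler's relation for degree-zero homogeneity. Fix $(\Bx,\By,\BF)$ in the open set where $W$ is $C^{1}$. Differentiating (\ref{scalefree}) with respect to $\Gl$ at $\Gl=1$ yields
\[
W_{\Bx}(\Bx,\By,\BF)\cdot\Bx+W_{\By}(\Bx,\By,\BF)\cdot\By=0 .
\]
I then evaluate this identity along $(\Bx,\By(\Bx),\Grad\By(\Bx))$, which is defined for a.e. $\Bx$ because $\By$ is Lipschitz. The volume integrand vanishes a.e., so the volume integral drops out and (\ref{Clapeyron}) follows.

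The remaining bookkeeping concerns the traces. $\BP\Bn$ and $\BP^{*}\Bn$ lie in $L^{\infty}$ on $\dOm$ because $\BP$ and $\BP^{*}$ are bounded, since $W$ is $C^{1}$ on a neighborhood of the compact range of $(\Bx,\By,\Grad\By)$. In addition, their divergences $W_{\By}$ and $W_{\Bx}$ are bounded, by the stationary extremal equations (\ref{ELeq}) and (\ref{stationary}). This is the same argument as in the footnote to Theorem~\ref{th:genoether}.

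The only potentially delicate point is the differentiation at $\Gl=1$ when $W$ fails to be continuous at $\Bx=0$ or $\By=0$. There I would note that the relation is needed only on the set where $W$ is $C^{1}$, and that the ray $\Gl(\Bx,\By)$ stays in that open set for $\Gl$ near $1$. So the main obstacle is essentially absent, and the proof reduces to a two-line corollary of Theorem~\ref{th:GCT}.
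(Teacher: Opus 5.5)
Your proposal is correct and is the paper's own argument. The paper likewise derives the CET as an immediate corollary of the GCT (Theorem~\ref{th:GCT}): degree-zero homogeneity (\ref{scalefree}) in $(\Bx,\By)$ gives $W_{\Bx}\cdot\Bx+W_{\By}\cdot\By=0$ by Euler's formula, so the volume integral in (\ref{genClap}) vanishes and (\ref{Clapeyron}) remains.
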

 We refer to Theorem~\ref{th:clapeyron} as the
  Clapeyron-Eshelby Theorem (CET) since formula
  \eqref{Clapeyron} is a generalization  to
  arbitrary $m$ and $n$ and not necessarily smooth extremals of
  Eshelby's formula in 3D elasticity theory
  \cite[Eq.~(16)]{eshe70}. Interestingly, this crucial result
  was largely overlooked at the time, and was subsequently re-derived several times
    in different  forms, see for instance   \cite[formula~(2.10)]{green73},
    \cite[Eq.~(2.5)]{knst84}, \cite[Eq.~(38)]{hill86}). Moreover, it looks as if  in the subsequent derivations   the
    important role of configurational tractions, so
    prominent in the original Eshelby formulation,  was  obscured.
 
Observe that the first term on the right-hand side of
(\ref{Clapeyron}) can be rewritten  as the classical work of applied forces. Indeed, 
 we can rewrite (\ref{Clapeyron}) as
\begin{equation}
  \label{Clapeyron3}
E[\By]=\nth{n}\int_{\dOm}\{\BP\Bn\cdot\Bu+\BP_{\rm disp}^{*}\Bn\cdot\Bx\}dS(\Bx),
\end{equation}
where $\BP_{\rm disp}^{*}=W\BI_{n}-(\Grad\Bu)^{T}\BP$. Such rewriting   makes the first
term on the right-hand side of \eqref{Clapeyron3}  equal to  the
work of tractions $\BP\Bn$ over displacements $\Bu=\By-\Bx$ on the
boundary of the domain. However, the resulting Clapeyron-type relation  (\ref{Clapeyron}) remains 
fundamentally different from the classical CT
(\ref{Clap}), where the additional term on the \rhs\ of
(\ref{Clapeyron}) with the Eshelby's energy-momentum tensor $\BP^{*}$ is absent.

In this respect we  recall that  while Piola stress $\BP$  acts on the virtual
displacements $\Gd\By$ of the points on the boundary of the domain in
the actual (Eulerian) space, the Eshelby stress $\BP^{*}$  acts on the
virtual changes of the shape of the domain $\GO$ in the reference
(Lagrangian) space. The former is usually interpreted as the action of
``physical forces'', and the latter -- as the action of
``configurational forces'' (see for instance, \cite{gfa10}). Therefore,
the two terms in the surface integral in \eqref{increm} represent the
work of  physical and configurational forces. While the meaning of the
work of physical forces  does not require any clarifications, we show
in Section~\ref{sub:lbcf} that the work of configurational forces can
be associated with the  acquisition of a pre-stress during the
\emph{formation} of the body.

Another major difference between CT and CET is that in the latter the
multiple in front of the surface integral is $1/n$ instead of $1/2$, even in
3D space. As we have seen in Section~\ref{sub:CCT}, the
coefficient 1/2 emerges in the classical CT due to the 2-homogeneity
of the  quadratic energy density function of linear elasticity, while
no such homogeneity property is assumed in the more general nonlinear elastic
context of CET.

 \subsection{ CET follows  from a higher dimensional version of CT}
 \label{sub:GET_CT}
Next we show  that in some sense the $p$-homogeneous
Clapeyron-type Theorem~\ref{th:pGCT} can be viewed as implying the CET
Theorem~\ref{th:clapeyron}. This becomes clear if we realize that at
their source, the transformations (\ref{trgrp}) treat the
$(\Bx,\By)$-space $\mathfrak{X}$ as a single entity. 
Specifically, to see
the implied relation between CT and CET we need to go back to the description of the
problem in the extended space introduced in Section~\ref{sec:noether}.
We recall that in this space  the original Lagrangian $W(\Bx,\By,\BF)$ is  replaced by the
\emph{graph}-Lagrangian $\CW(\Bz,\BCF)$, given by (\ref{grL}).
Now, if $\CW$ does not depend explicitly on $\Bz$, i.e., $W=W(\BF)$, then
$\CW(\BCF)$ satisfies (\ref{phom}), with $p=n$. Therefore,
Theorem~\ref{th:pGCT} gives
\[
E[\By]=\CE[\Bz,D]=\nth{n}\int_{\Md D}\BCP\Bn\cdot\Bz(\Bt)dS(\Bt).
\]
Using formula (\ref{Phat12}) for $\BCP$ we obtain
\begin{multline*}
E[\By]=\nth{n}\int_{\Md D}\{
\BP^{*}(\Grad_{\Bx}\By(\BGn(\Bt)))\cof(\Grad_{\Bt}\BGn(\Bt))\BGv(\Bt)\cdot\BGn(\Bt)+\\
\BP(\Grad_{\Bx}\By(\BGn(\Bt)))\cof(\Grad_{\Bt}\BGn(\Bt))\BGv(\Bt)\cdot\By(\BGn(\Bt))
\}dS(\Bt),
\end{multline*}
where $\BGv(\Bt)$ is the outward unit normal to $\Md D$. Changing variables in
the above surface integral, using the relation
\[
\cof(\Grad\BGn(\Bt))\BGv(\Bt)dS(\Bt)=\Bn(\Bx)dS(\Bx),
\]
we obtain the CET represented by formula (\ref{Clapeyron}). We
emphasize that the defining commonality of all versions of  Clapeyron's Theorem is that they express the elastic energy of a stable configuration in terms of boundary tractions, meaning both physical and configurational tractions.

\subsection{CET for incompressible materials}
\label{sub:inc}
We now show that the  relation  \eqref{Clapeyron} can be  generalized to  cover  the case of scale-free energy with scale-free pointwise constraints. A typical example is incompressible elasticity, where every deformation $\By(\Bx)$ must satisfy the incompressibility constraint
\begin{equation}
  \label{incstr}
\det\Grad\By(\Bx)=1
\end{equation}
for almost all $\Bx\in\GO$. Specifically, one can prove the following theorem:
\begin{theorem}
  \label{th:constr}
  Suppose that the Lipschitz deformation $\By(\Bx)$ minimizes the
  energy (\ref{non-param}) and satisfies the $k$ pointwise constraints
  \begin{equation}
    \label{constr}
      C^{i}(\Bx,\By(\Bx),\Grad\By(\Bx))=0,\quad i=1,\ldots,k,
  \end{equation}
 where both  functions $W(\Bx,\By,\BF)$ and $C^{i}(\Bx,\By,\BF)$ are
 scale-free in the sense of (\ref{scalefree}). Then,
  \begin{equation}
  \label{constrClap}
  n\int_{\GO}W(\Bx,\By,\Grad\By)d\Bx= \int_{\dOm}\{\BP_{L}\Bn\cdot\By+\BP^{*}_{L}\Bn\cdot\Bx\}dS,
\end{equation}
where the \rhs\ is understood in the sense of traces, and where
$\BP_{L}$ and $\BP^{*}_{L}$ denote the Piola and Eshelby tensors corresponding to the Lagrangian
\[
L(\Bx,\By,\BF)=W(\Bx,\By,\BF)-\BGL(\Bx)\cdot\BC(\Bx,\By,\BF).
\]
\end{theorem}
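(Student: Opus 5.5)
The plan is to reduce Theorem~\ref{th:constr} to the CET, Theorem~\ref{th:clapeyron}, applied to the Lagrangian $L=W-\BGL\cdot\BC$. There are two inputs. First, a Lagrange multiplier rule for minimizers under pointwise constraints: there should exist multipliers $\BGL(\Bx)=(\Gl_{1},\ldots,\Gl_{k})$ such that $\By$ is a stationary extremal of $\int_{\GO}L\,d\Bx$. Second, the constraint term must be invisible in the energy on the constrained configuration, since $\BC(\Bx,\By(\Bx),\Grad\By(\Bx))=0$ a.e. gives $L=W$ along $\By$.

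The multiplier rule I would take as the standard one: this is what defines $\BGL$ in the statement. I would assume enough regularity ($\BGL\in L^{\infty}$, say) that $\BP_{L}=W_{\BF}-\Gl_{i}C^{i}_{\BF}$ and $\BP^{*}_{L}=L\BI_{n}-\BF^{T}\BP_{L}$ have traces as in Theorem~\ref{th:genoether}. Outer variations respecting the constraints then give $\mathfrak{E}_{L}=0$.

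For the inner equation $\mathfrak{E}^{*}_{L}=0$, I would use inner variations $\By_{\Ge}(\BX(\Bx,\Ge))=\By(\Bx)$. These preserve constraints only when $C^{i}$ is independent of $\Bx$. In general, since the constraint set is closed under the full graph variations of Section~\ref{sec:noether}, the cleanest route is to work with the parametric Lagrangian $\CW_{L}$ on the graph. Every admissible variation of the graph is then an outer variation of $\Bz$, so stationarity follows from the extremality of the parametric problem, as discussed after Definition~\ref{def:exst}.

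The key step is to verify the hypothesis of Theorem~\ref{th:GCT} with the volume term vanishing. I would not apply Theorem~\ref{th:clapeyron} verbatim, because $L$ need not satisfy (\ref{scalefree}): $\BGL(\Bx)$ is not scale-invariant. Instead I would redo the proof of Theorem~\ref{th:GCT} directly, using the scaling (\ref{scalingxy}). This scaling maps admissible configurations to admissible ones because each $C^{i}$ is scale-free: $\Tld{\By}(\Tld{\Bx})=e^{\Ge}\By(e^{-\Ge}\Tld{\Bx})$ has the same gradient, and $C^{i}(e^{\Ge}\Bx,e^{\Ge}\By,\BF)=C^{i}(\Bx,\By,\BF)=0$. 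Along this admissible family, $E_{\Ge}=\int_{\GO_{\Ge}}W\,d\Tld{\Bx}$. By (\ref{scalefree}) and the change of variables this equals $e^{n\Ge}E[\By]$, so $\Gd E=nE[\By]$ with no bulk term. On the other hand, along admissible configurations $\int L=\int W$, and the first variation of $\int L$ at the stationary extremal is given by the Noether formula (\ref{increm}) for $L$. Equating the two expressions yields (\ref{constrClap}).

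There is one subtlety here. The Noether formula computes $\Gd\int_{\GO_{\Ge}}L$ with $\BGL$ evaluated in a fixed way. I would avoid this by differentiating $\int_{\GO} L(e^{\Ge}\Bx,e^{\Ge}\By,\Grad\By)e^{n\Ge}d\Bx$ as in (\ref{deltaE}). The extra terms $\int(L_{\Bx}\cdot\Bx+L_{\By}\cdot\By)$ involve only $W$ and $\Gl_{i}\,(C^{i}_{\Bx}\cdot\Bx+C^{i}_{\By}\cdot\By)$, both of which vanish by Euler's identity for degree-zero homogeneity. Here $\BGL$ is frozen in the reference variables, so it is not differentiated. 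Also, $\int_{\GO}L=E[\By]$ because $\BC=0$. Then Theorem~\ref{th:GCT}, applied to $L$ as a function of $(\Bx,\By,\BF)$ with the fixed coefficient $\BGL(\Bx)$ treated through this frozen-multiplier variation, gives exactly (\ref{constrClap}).

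The main obstacle is justifying the multiplier rule, including the stationarity $\mathfrak{E}^{*}_{L}=0$ and the regularity of $\BGL$, for merely Lipschitz minimizers. I would state it as an assumption where the paper does, for instance via the pressure $p$ in incompressible elasticity with $C=\det\BF-1$.
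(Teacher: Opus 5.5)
Your route matches the paper's in substance. You use the same scaling $\Bx\mapsto e^{\Ge}\Bx$, $\By\mapsto e^{\Ge}\By$ with the multiplier left unscaled, and the same use of scale-freeness of $W$ and $\BC$ together with $\BC=\Bzr$ to get $\Gd E=n\int_{\GO}W\,d\Bx$. You also assume the multiplier rule, including stationarity, just as the paper does. However, the step you flag as a ``subtlety'' is not actually closed.

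Theorem~\ref{th:genoether} and Theorem~\ref{th:GCT} compute $\frac{d}{d\Ge}\int_{\GO_{\Ge}}L(\Tld{\Bx},\By_{\Ge}(\Tld{\Bx}),\Grad\By_{\Ge}(\Tld{\Bx}))\,d\Tld{\Bx}$. After the change of variables, the explicit $\Bx$-dependence of $L$, including $\BGL$, is evaluated at $e^{\Ge}\Bx$. Freezing $\BGL$ at $\Bx$ therefore computes a different derivative. Saying that GCT applies ``through the frozen-multiplier variation'' does not identify that derivative with the boundary integral.

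The missing idea is the paper's. Treat the multiplier as extra components of the unknown, $\Hat{\By}=[\By;\BGL]\in\bb{R}^{m+k}$, with Lagrangian $\Tld{L}=W-\BGL\cdot\BC$, and use the graph transformation $(\Bx,\By,\BGL)\mapsto(e^{\Ge}\Bx,e^{\Ge}\By,\BGL)$. Then freezing is literally transport of $\BGL$, and $\Gd\Hat{\By}=[\By;\Bzr]$. Because $\Tld{L}$ does not depend on $\Grad\BGL$, one gets $\BP_{\Tld{L}}=[\BP_{L};\Bzr]$ and $\BP^{*}_{\Tld{L}}=\BP^{*}_{L}$. So the Noether boundary term is exactly the right side of (\ref{constrClap}), and no derivative of $\BGL$ is ever needed.

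Alternatively, you could keep $\BGL(\Bx)$ as explicit $\Bx$-dependence and apply Theorem~\ref{th:GCT} unfrozen. Then $L_{\Bx}\cdot\Bx$ contains the extra term $-\BC\cdot(\Grad\BGL)\Bx$. It vanishes because $\BC=\Bzr$ along $\By$, not by Euler's identity. But this route requires $\BGL$ to be differentiable, not merely $L^{\infty}$ as you proposed.

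Separately, your remarks aimed at stationarity are incorrect, though they are not load-bearing once you assume the multiplier rule. Inner variations $\By_{\Ge}(\BX(\Bx,\Ge))=\By(\Bx)$ do not preserve $\Bx$-independent constraints. For $\BC=\det\BF-1$ one gets $\det\Grad\By_{\Ge}=\det\Grad\By/\det\Grad\BX$, which equals $1$ only for volume-preserving $\BX$ (to first order, $\Div\Gd\Bx=0$). Likewise, the constraint set is not closed under general graph variations. Passing to the parametric Lagrangian therefore does not by itself yield $\mathfrak{E}^{*}_{L}=0$. Stationarity of the augmented functional has to be assumed, as the paper does.
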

\begin{proof}
From the general theory of constrained variational problems we know
that the energy minimizers must be stationary extremals of the
augmented energy functional
\[
\Tld{E}[\Hat{\By}]=\int_{\GO}\{W(\Bx,\By,\Grad\By)-\BGL(\Bx)\cdot\BC(\Bx,\By,\Grad\By)\}d\Bx,
\]
where $\hat{\By}=[\By;\BGL]\in\bb{R}^{m+k}$. While both the energy
density $W$ and the constraints $\BC$ are scale-free in the sense of
(\ref{scalefree}), the augmented Lagrangian
\[
\Tld{L}(\Bx,\Hat{\By},\Hat{\BF})=W(\Bx,\By,\BF)-\BGL\cdot\BC(\Bx,\By,\BF),\quad\Hat{\BF}=[\BF;\BG]\in\bb{R}^{(m+k)\times n}
\]
is not, as a function of $\Bx,\Hat{\By},\Hat{\BF}$, because it is affine in $\BGL$. 
Therefore we can apply Theorem~\ref{th:genoether} to the functional
$\Tld{E}[\Hat{\By}]$ and the variation
\[
\BX=e^{\Ge}\Bx,\quad\Hat{\BY}=[e^{\Ge}\By;\BGL].
\]
We compute
\[
\Tld{E}_{\Ge}[\Hat{\BY}]=\int_{e^{\Ge}\GO}\Tld{L}(\BX,\Hat{\BY}_{\Ge}(\BX),\Grad\Hat{\BY}_{\Ge}(\BX))d\BX,
\]
which, after the change of  variables $\Tld{\Bx}=e^{\Ge}\Bx$, becomes
\[
\Tld{E}_{\Ge}=\Ge^{n\Ge}\int_{\GO}\{W(e^{\Ge}\Bx,e^{\Ge}\By(\Bx),\Grad\By(\Bx))-
  \BGL(\Bx)\cdot\BC(e^{\Ge}\Bx,e^{\Ge}\By(\Bx),\Grad\By(\Bx))\}d\Bx.
\]
and since both $W$ and $\BC$ satisfy (\ref{scalefree}), we have
\[
\Tld{E}_{\Ge}=\Ge^{n\Ge}\int_{\GO}\{W(\Bx,\By(\Bx),\Grad\By(\Bx))-
  \BGL(\Bx)\cdot\BC(\Bx,\By(\Bx),\Grad\By(\Bx))\}d\Bx=\Ge^{n\Ge}\int_{\GO}Wd\Bx,
\]
where we took into account that $\By(\Bx)$ satisfies the constraints
$\BC(\Bx,\By(\Bx),\Grad\By(\Bx))=0$, as an extremal for $\Tld{E}[\Hat{\By}]$.
Therefore,
\[
\Gd\Tld{E}=n\int_{\GO}W(\Bx,\By(\Bx),\Grad\By(\Bx))d\Bx,
\]
and Noether's formula (\ref{increm}) becomes (\ref{constrClap}),
since
$\Gd\Bx=\Bx$, $\Gd\Hat{\By}=[\By;0]$, while
\[
  \BP_{\Tld{L}}=[\BP_{L};0],\quad\BP_{\Tld{L}}^{*}=\BP^{*}_{L}.
\]
\end{proof}
Applying this theorem to the incompressible elasticity constraint
(\ref{incstr}) and energy density $W(\BF)$ that does not depend on
$\Bx$ and $\By$ explicitly, we obtain, noting that both the energy
density and the constraint are scale-free,
\begin{equation}
  \label{incgrn}
  \int_{\GO}W(\Grad\By(\Bx))d\Bx=\nth{n}\int_{\dOm}\{(\BP-p(\Bx)\cof\Grad\By(\Bx))\Bn\cdot\By(\Bx)
  +(\BP^{*}+p(\Bx)\BI_{n})\Bn\cdot\Bx\}dS,
\end{equation}
where, following convention,  we have denoted the Lagrange multiplier $\GL(\Bx)$ by $p(\Bx)$ to indicate its physical meaning as pressure, and took into account that
\[
\BP^{*}_{L}=\BP^{*}-p(\Bx)\left[(\det\BF-1)\BI_{n}-\BF^{T}\cof(\BF)\right]=\BP^{*}+p(\Bx)\BI_{n}.
\]
Note that when $\By(\Bx)$ is an orientation preserving diffeomorphism, we can change variables $\By=\By(\Bx)$ in the first term on the \rhs\ of   \eqref{incgrn} and obtain
\begin{equation}
  \label{mixedgreen}
  \int_{\GO}W(\Grad\By)d\Bx=\nth{n}\int_{\dOm^{*}}(\BGs-p(\By)\BI_{n})\BN\cdot\By\,dS^{*}(\By)
  +\nth{n}\int_{\dOm}(\BP^{*}+p(\Bx)\BI_{n})\Bn\cdot\Bx\,dS(\Bx),
\end{equation}
where $\BGs(\By)$ is the Cauchy stress tensor, $\BN(\By)$ is the outward unit normal to $\dOm^{*}=y(\dOm)$, and $p(\By)=p(\Bx)$, with $\By=\By(\Bx)$.

\subsection{Shifting of the origin}
\label{sub:CCT1}

Next we  mention a straightforward result that will prove  useful in what follows. 
\begin{remark}
   \label{rem:shift}
   When $W$ does not depend explicitly on $\Bx$ and $\By$, the
   stationary extremals satisfy
\[
\int_{\dOm}\BP\Bn\,dS(\Bx)=\Bzr,\qquad\int_{\dOm}\BP^{*}\Bn\,dS(\Bx)=\Bzr.
\]
These relations allow one to re-write \eqref{Clapeyron} as
\begin{equation}
  \label{purple}
  \int_{\GO}W(\Grad\By)d\Bx=
\nth{n}\int_{\Md\GO}\{(\BP^{*}\Bn,\Bx-\Ba)+(\BP\Bn,\By(\Bx)-\Bb)\}dS(\Bx),
\end{equation}
where $\Ba\in\bb{R}^{n}$ and $\Bb\in\bb{R}^{m}$ are arbitrary constant
vectors.
\end{remark}
Formula (\ref{purple}) expresses a physically obvious fact
that when the energy density function does not depend on  $\Bx$ and
$\By$ explicitly, the choice of the coordinate system origins in both the reference and
the deformed configurations can be made arbitrarily and independently.

\subsection{Various facets of CET  in linear elasticity}
\label{sub:GCTlin}
To illustrate the obtained general formulas  it is instructive to
present  the classical Clapeyron's Theorem (\ref{Clap}) in relation to
 CET (\ref{Clapeyron}). 
Consider first a  general  linear elastic  energy density (\ref{linel}),
where $\BGve=(\BF+\BF^{T})/2$.
When the  body is also homogeneous, meaning that the tensor of elastic
moduli $\SFC$ is constant, the energy density function (\ref{linel})
is both 2-homogeneous and scale-free in the sense of (\ref{phom}),
with $p=2$, and (\ref{scalefree}), respectively. In this case  one can
also show that all equilibrium solutions are smooth and therefore, by
the Noether relation (\ref{Noether0}), also stationary. Hence, in
addition to (\ref{semiC}) that has the form (\ref{Clap}) in the
linearly elastic context, formula 
(\ref{Clapeyron}) is also applicable. Combining (\ref{semiC}) and
(\ref{Clapeyron}) permits us to give an alternative expression for the
energy in terms of the configurational stresses
\begin{equation}
  \label{semiL}
  E[\By]=\nth{n-p}\int_{\dOm}\BP^{*}\Bn\cdot\Bx\,dS, 
\end{equation}
 if $n\not=p$, and
\begin{equation}
  \label{n=p}
  \int_{\dOm}\BP^{*}\Bn\cdot\Bx\,dS=0,
\end{equation}
  if $n=p$.
In relation to three-dimensional linear elasticity, where $p=2$, and
$n=3$ formula (\ref{semiL}) is very different from the classical
CT (\ref{Clap}), in contrast to which formula (\ref{semiL}) involves the skew-symmetric part $w(\Bu)$ of $\Grad\Bu$,
\[
w(\Bu)=\frac{1}{2} (\Grad\Bu-(\Grad\Bu)^{T}),
\]
representing the infinitesimal rotations. 
\begin{equation}
  \label{Clapeyron223}
  \hf\int_{\GO}\av{\SFC e(\Bu),e(\Bu)}d\Bx=\int_{\dOm}\left\{
\hf\av{\BGs,e(\Bu)}\Bn\cdot\Bx-\BGs\Bn\cdot (e(\Bu)+w(\Bu))\Bx\right\}dS(\Bx).
\end{equation}
When $n=2$ we don't have a new alternative formula for the
energy. Instead, formula (\ref{n=p}) gives an (apparently new)
integral relation satisfied on the boundary by the antisymmetric part
of the gradient of an equilibrium solution:
\begin{equation}
  \label{neweq2}
  \int_{\dOm}\BGs\Bn\cdot w(\Bu)\Bx\,dS=\int_{\dOm}\left\{\hf\av{\BGs,e(\Bu)}(\Bn\cdot\Bx)-
  \BGs\Bn\cdot e(\Bu)\Bx\right\}dS.
\end{equation}
In general, one can eliminate the elastic energy from formulas
(\ref{Clap}) and (\ref{Clapeyron}) and obtain the analogue of
(\ref{neweq2}) in any number of dimensions:
\begin{equation}
  \label{neweq}
  \int_{\dOm}\BGs\Bn\cdot w(\Bu)\Bx\,dS=\int_{\dOm}\left\{\hf\av{\BGs,e(\Bu)}(\Bn\cdot\Bx)-
  \BGs\Bn\cdot e(\Bu)\Bx-\frac{n-2}{2}\BGs\Bn\cdot\Bu\right\}dS.
\end{equation}
Since (\ref{neweq}) holds in any subdomain occupied by the elastic
body, we can rewrite (\ref{neweq}) in differential form
\begin{equation}
  \label{newdeq}
  \Div(\BGs w(\Bu)\Bx)=\Div\left(\hf\av{\BGs,e(\Bu)}\Bx-\BGs e(\Bu)\Bx-\frac{n-2}{2}\BGs\Bu\right),
\end{equation}
giving the differential equation for $w(\Bu)$ in terms of $e(\Bu)$. In
two space dimension equation (\ref{newdeq}) simplifies with the \rhs\
depending only on $e(\Bu)$:
\[
\Div(\BGs\Grad\Bu\Bx)=\hf\Div(\av{\BGs,e(\Bu)}\Bx).
\]
To the best of our knowledge, formulas
(\ref{Clapeyron223})--(\ref{newdeq}) are new. It is likely that they
did not appear in the literature on linear elasticity due to the
rather striking appearance in them  of the skew-symmetric part
$w(\Bu)$ of the gradient $\Grad\Bu$.

\subsection{Physical meaning of $\BP^{*}$ loading}
\label{sub:lbcf}
In our main  Theorem  \ref{th:clapeyron}  we encounter two important tensor fields $\BP$  and $\BP^{*}$.  As we have already mentioned, while the   physical meaning of Piola stress $\BP$  is well known, the interpretation of the Eshelby stress $\BP^{*}$ is more subtle. To elucidate the  fundamental difference between the work done by $\BP$  and $\BP^{*}$ we present in this section
two illustrative examples focusing on the loading effected exclusively by configurational forces.

Consider first a one-dimensional body, given in Lagrangian coordinates as an interval $[0,L]$.
The stored energy is  
\begin{equation}
  \label{1Dex}
E[u]=\int_{0}^{L}W(u')dx,
\end{equation}
where $u(x)$ is the displacement field. We assume that the energy density function $W(\Gve)$ is a smooth, even, strictly convex function, such that $W'(0)=0$, corresponding to the absence of the physical pre-stress. The configurational pre-stress is encoded by the residual energy $W_{*}=W(0)>0$. 
The relationship between the physical (Piola) stress
 \begin{equation}
P=W'(\Gve) 
\end{equation}
and the configurational (Eshelby) stress
\begin{equation}
\label{P*1D}
  P^{*}=W(\Gve)-\Gve W'(\Gve),
\end{equation}
 where the strain is  $\Gve=u'(x)$  is shown in the left panel of Fig.~\ref{fig:Clap1D}.
\begin{figure}[t]
  \centering
  \subfigure[Generic applied strain $\Gve$.]{
    \includegraphics[scale=0.25]{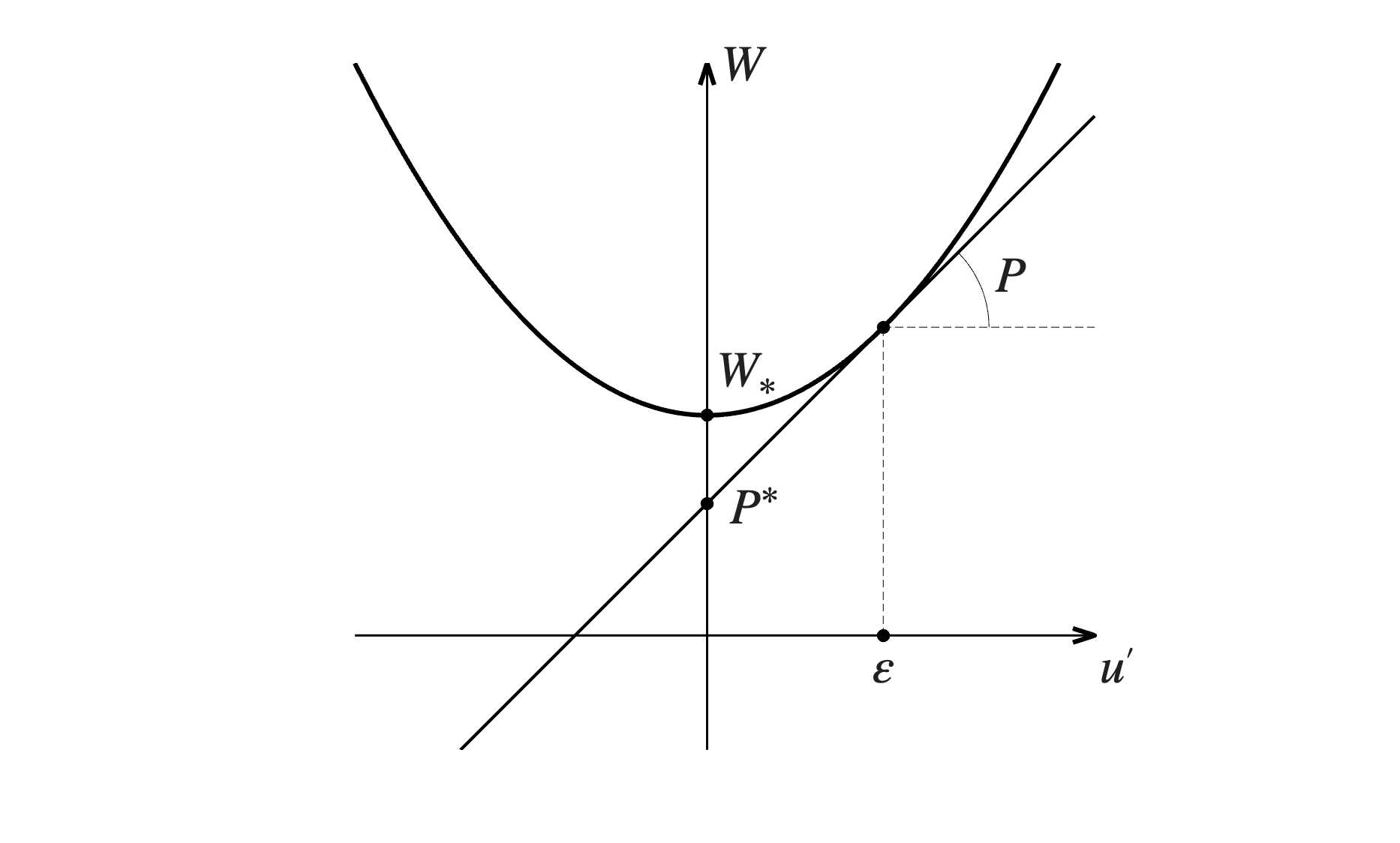}}
    \subfigure[``Optimal'' applied strain $\Gve_{\rm opt}$,
    corresponding to $P^{*}=0$.]{
      \includegraphics[scale=0.25]{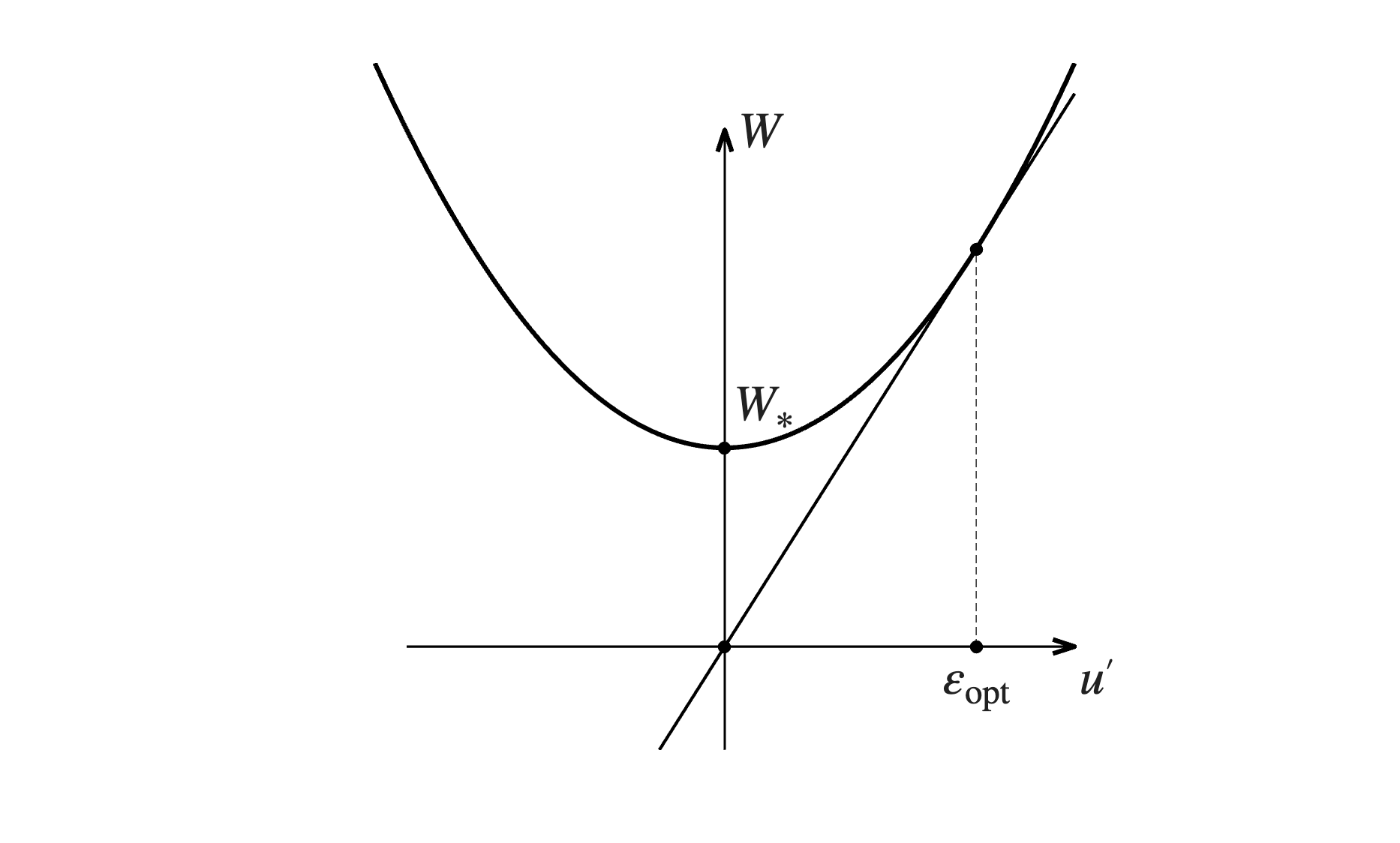}}
  \caption{Relations between physical stress $P$, strain $\Gve$ and the configurational stress $P^{*}$ in equilibrium.}
  \label{fig:Clap1D}
\end{figure}
Behind  this figure is the observation that in equilibrium we always have $(W'(u'))'=0$ and due to the strict monotonicity of $W'(\Gve)$, expressing the strict convexity of $W(\Gve)$, all equilibrium configurations are the ones where $\Gve=u'$ is constant on $[0,L]$. 

Consider now the case when  the deformation at points on $\dOm$ is prescribed,
but the domain $\GO$ itself is not fixed. Specifically , we assume that 
$U_{0}=u(0)$ and $U_{1}=u(L)$  with  $U_{0}$, $U_{1}$ given, while  allowing the reference length  $L$ to  vary.  In other words,  our pre-stressed body  is \emph{allowed to grow}.
Under these conditions the admissible outer variations $\Gd\By$ in (\ref{deriv}) must vanish on $\dOm$, but the boundary values of admissible inner variations $\Gd\Bx$ are no longer restricted in any way. Then the vanishing of the first variation $\Gd E$ of the energy is equivalent to the Euler-Lagrange equation (\ref{ELeq}) and the Noether  equation (\ref{stationary}) augmented with the displacement boundary conditions together with the condition 
\begin{equation}
  \label{optimshape}
  \BP^{*}\Bn=0\text{ on }\dOm.
\end{equation}
This  leads to a free boundary problem, where the size/shape of
$\GO$ must be determined in parallel with the deformation $\By(\Bx)$
to minimize the energy.
\begin{figure}[t]
  \centering
  \includegraphics[scale=0.25]{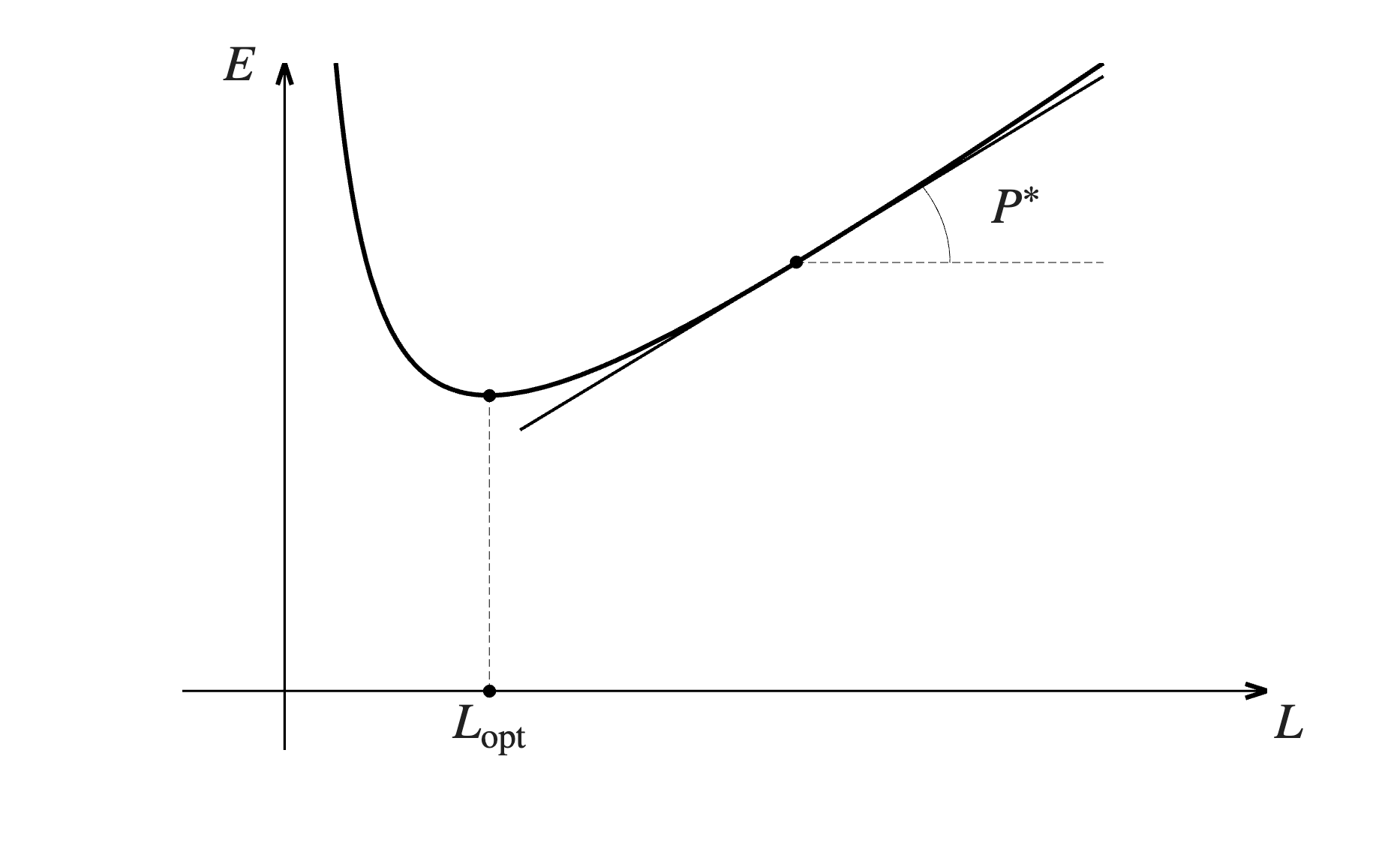}
  \caption{Energy as a function of the domain shape, when boundary displacements are fixed.}
  \label{fig:EofL}
\end{figure}
To clarify the meaning of the boundary condition \eqref{optimshape} we
now show that in our example (\ref{1Dex}) the energy extremum as a
function of the size of the domain $L$ is indeed attained when $P^{*}=0$. 
Indeed, if $U_{0}$ and $U_{1}$ are prescribed, then, by strict convexity of $W(\Gve)$, there is a unique value of the constant equilibrium strain
\[
\Gve(x)=\frac{U_{1}-U_{0}}{L}.
\]
Then, the energy as a function of $L$ is
\[
E=LW\left(\frac{U_{1}-U_{0}}{L}\right).
\]
It is easy to see that
\begin{equation}
  \label{optimL}
  \frac{dE}{dL}=P^{*},\qquad\frac{d^{2}E}{dL^{2}}=\frac{\Gve^{2}W''(\Gve)}{L}\geq 0, 
\end{equation}
which shows that $P^{*}=0$ is indeed the condition of minimality
of $E(L)$, assuming that the $C^{2}$ energy density $W(\Gve)$ is convex on $\bb{R}$
and grows super-linearly at infinity.
Then, if $\Gve_{\rm opt}$ is the value of the  optimal strain shown in Fig.~\ref{fig:Clap1D}(b)),
the corresponding optimal length
\[
L_{\rm opt}=\frac{U_{1}-U_{0}}{\Gve_{\rm opt}}
\]
delivers the minimum to the energy.
The fact that at $P^{*}\neq 0$ the system can still lower its energy
by either growth or resorption is clear from Fig.~\ref{fig:EofL}. 

Our second example illustrates the fact that an elastic body may
have a nonzero energy even if it is not loaded externally by
``physical" surface  tractions in the sense that the \bc
\[
  \BP\Bn=0,\quad\Bx\in \partial \Omega
\]
is imposed. According to GCT, this would  imply
that the body is still loaded by the ``configurational" surface
tractions $\BP^{*}\Bn$. The idea is to construct 
a nontrivial stationary extremal of the functional  
\[
  E[\Bu]=\int_{\Omega} W(\Bx,e(\Bu))d\Bx,
\]
defined on the domain $\Omega\subset\bb{R}^{n}$; the notation for linear elastic strain $e(\Bu)$ was defined in (\ref{eofu}). To this end it will sufficient to consider the simplest linearly elastic quadratic energy density of the form  
\begin{equation}
  \label{Wdisloc}
  W(\Bx,\BGve)=\hf|\BGve -\BGve_0(\Bx)|^{2},
\end{equation}
where $\BGve$ is a symmetric $n\times n$ matrix, and the inhomogeneity $\BGve_0(\Bx)$ is incompatible in the sense that 
\begin{equation}
  \label{Wdisloc1}
{\rm  Ink} [\BGve_0(\Bx)] \neq 0,
\end{equation}
where
\[
  {\rm  Ink} [\BGve(\Bx)]_{ijkl}=\mix{\Gve_{ij}}{x^{k}}{x^{l}}-\mix{\Gve_{ik}}{x^{j}}{x^{l}}+
  \mix{\Gve_{lk}}{x^{i}}{x^{j}}-\mix{\Gve_{lj}}{x^{i}}{x^{k}}.
\]
It is  clear  that  \eqref{Wdisloc1} eliminates the trivial solution
$\BGve(\Bx)=\BGve_{0}$ with zero energy. In fact, as we show below,
the configurational prestress can be attributed exclusively to the
incompatibility of the embedded inhomogeneity $\BGve_{0}(\Bx)$.

We begin with recalling the  orthogonal Beltrami-Donati decomposition
(e.g. \cite{gekr09}) of the Hilbert space $\CH=L^{2}(\GO;\Sym(\bb{R}^{n}))$
\begin{equation}
  \label{decom}
  \CH=\CE\oplus\CJ,\quad \CE=\{e(\Bu): \Bu\in
H^{1}(\GO;\bb{R}^{n})\},\ \CJ=\{\BGs\in\CH:\Div\BGs=0,\ \BGs|_{\dOm}\Bn=0\}.
\end{equation}
In view of  \eqref{decom} we can  write any field in $\CH$ as a sum of a field in $\CE$ and a field in $\CJ$. Specifically, for a generic prestrain $\BGve_{0}\in\CH$ we can find $\Bu\in H^{1}(\GO;\bb{R}^{n})$ and $\BGs\in\CJ$, such that
\begin{equation}
  \label{decomp}
  \BGve_{0}(\Bx)=e(\Bu)-\BGs. 
\end{equation}
Then, it is straightforward to show that  the energy minimizer in the elastic problem involving  physically unloaded body is $\Bu(\Bx)$, the associated residual stress is  $\BGs=e(\Bu)-\BGve_{0}(\Bx)$, and the nontrivial minimum value of the elastic energy is 
\[
E[\Bu]=\hf\|\BGs\|^{2}_{L^{2}(\GO)}.
\]
The incompatible component $\BGs(\Bx)$ of pre-strain $\BGve_{0}(\Bx)$ can be found by solving the system
\begin{equation}
  \label{Inkeq}
  \begin{cases}
    {\rm Ink}[\BGs]=-{\rm Ink}[\BGve_{0}],&\Bx\in\GO\\
    \Div\BGs=0,&\Bx\in\GO,\\
    \BGs\Bn=0,&\Bx\in\dOm.
  \end{cases}
\end{equation}
 The decomposition (\ref{decom}) implies that the \bvp\ (\ref{Inkeq}) has a unique solution on simply connected
domains, since Ink$[\BGve]=0$ implies $\BGve=e(\Bv)$ for some $\Bv\in
H^{1}$. Thus, $\BGs$, and therefore the energy $E[\Bu]$ depends only on ${\rm Ink}[\BGve_{0}]$. In fact, it is straightforward to show that 
\[
\BP^{*}\Bn=W\Bn=\hf|\BGs|^{2}\Bn,\text{ on }\dOm,
\]
which suggests  that our ``physically unloaded'' body  was in fact  ``configurationally loaded.'' 

To illustrate the solutions we consider the problem with spherical symmetry. Specifically,
the goal will be  to construct a nontrivial radial stationary extremal of the functional 
\begin{equation}
  \label{Erad}
  E[\Bu]=\int_{B(0,R)}W(\Bx,e(\Bu))d\Bx,
\end{equation}
defined on the ball $B(0,R)\subset\bb{R}^{n}$ and with the energy density $W$ defined by \eqref{Wdisloc}. 
We assume further that 
\begin{equation}
 \label{Erad1}
  \BP\Bn=0,\quad|\Bx|=R. 
\end{equation}
 To ensure both  spherical symmetry and scale invariance  we choose
 the eigenstrain $\BGve_{0}(\Bx)$ in the form 
\begin{equation}
  \label{tracfree3}
  \BGve_0(\Bx)=a\tns{\Hat{\Bx}},\quad\Hat{\Bx}=\frac{\Bx}{|\Bx|}.
  \end{equation}    
  It is then straightforward to check that 
 \begin{equation}
  \label{tracfree31}  
  {\rm Ink}[\BGve_0]_{ijkl}(\Bx)=\frac{2a}{r^{2}}(\hat{x}^{l}\hat{x}^{k}\Gd_{ij}-
  \hat{x}^{i}\hat{x}^{k}\Gd_{lj}-\hat{x}^{l}\hat{x}^{j}\Gd_{ik}+\hat{x}^{i}\hat{x}^{j}\Gd_{lk}
  +\Gd_{ik}\Gd_{jl}-\Gd_{ij}\Gd_{kl}),
\end{equation}
except when $n=2$, in which case ${\rm Ink}[\BGve_0](\Bx)$ is a
multiple of a Dirac delta-function; the latter condition  means
that there is a local displacement field in
$\bb{R}^{2}\setminus\{0\}$, corresponding to the strain field
\eqref{tracfree3}, but no global one. We can regard
${\rm Ink}[\BGve_0]$ as a linear operator on the space of $n\times n$
matrices acting by the rule
\[
({\rm Ink}[\BGve_0]\BGx)_{ij}={\rm Ink}[\BGve_0]_{ijkl}\xi^{kl},
\]
where the summation over repeated indices is implied. In that case,
formula (\ref{tracfree31}) can also be written as
\[
{\rm Ink}[\BGve_0]\BGx=\frac{2a}{r^{2}}\left(\Tld{\BGx}-\Trc(\Tld{\BGx})\Tld{\BI}\right),\quad
\Tld{\BGx}=\Tld{\BI}\BGx\Tld{\BI},\quad
\Tld{\BI}=\BI-\tns{\Hat{\Bx}},\quad n>2,
\]
and 
\[
  {\rm Ink}[\BGve_0]\BGx=2\pi
  a\Gd(\Bx)\left(\BGx-\Trc(\BGx)\BI\right),\quad n=2.
\]
As the   trivial candidate $e(\Bu)=\BGve_{0}$  is not admissible,   the minimizer of the energy  \eqref{Erad} with free \bc s  \eqref{Erad1}  will be a stationary extremal possessing positive energy as it will be unstressed ``physically'' but still prestressed ``configurationally'' in the sense that it will carry nonzero residual stresses.
 The minimizer solves the  traction boundary value problem \eqref{Inkeq} which in our special case takes the form
  \[
    \begin{cases}
    \Grad(\Div\Bu)+\GD\Bu=2\Div\BGve_{0}(\Bx),&\Bx\in\GO,\\
    \BGs\Bn=(e(\Bu)-\BGve_{0})\Bn=0,&\Bx\in\dOm,
    \end{cases}
  \]
  obtained by taking the divergence of (\ref{decomp}) and restricting
  the decomposition (\ref{decomp}) to $\dOm$.
To obtain such a minimizer explicitly,  when $\GO=B(0,R)$---the ball in $\bb{R}^{n}$, centered at 0 with radius R, and $\BGve_{0}(\Bx)$ given by (\ref{tracfree3}) we first  rescale both $\Bx$ and $\Bu$ variables and set, \WLOG, $|a|=1$, $R=1$.
We will look for a radial solution
\begin{equation}
  \label{urad}
  \Bu(\Bx)=\eta(r)\Hat{\Bx}. 
\end{equation}
Since $\Grad\Bu$ is symmetric, the 
 function $\eta(r)$  should  be chosen such that
\begin{equation} \label{r01}
\GD\Bu=a\Div(\tns{\Bx}),
\end{equation}
or in terms of $\eta(r)$, such that
\begin{equation} \label{r0}
\eta''(r)+(n-1)\left(\frac{\eta(r)}{r}\right)'=\frac{a(n-1)}{r}.
\end{equation}
A solution of \eqref{r0} that does not blow up at $r=0$ must have the form   
\begin{equation}
  \label{eta}
    \eta(r)=a\left(br+\frac{n-1}{n}r\ln r \right).
\end{equation}
Finally, the zero traction boundary condition  \eqref{Erad1} gives
$b=1/n$, yielding
\begin{equation}
  \label{u}
  \Bu(\Bx)=\frac{a\Bx}{n}(1+(n-1)\ln|\Bx|)
\end{equation}
With an exact solution at hand  we can also  compute the energy of our configurationally prestressed body by direct substitution, obtaining
\begin{equation}
  \label{Epos1}
E[\Bu]= \frac{a^{2}(n-1)}{2n^{2}}|B(0,1)|>0.
\end{equation}
Here $|B(0,1)|$ denotes the $n$-dimensional volume of the unit ball in $\bb{R}^{n}$. Moreover we can   compute explicitly the expressions for the tensor fields $\BP(\Bx)$ and  $\BP^{*}(\Bx)$ obtaining 
\begin{equation}
  \label{P}
  \BP(\Bx)=\BGs(\Bx)=a\left(1+\frac{n-1}{n}\ln r\right)\BI_{n}-a\tns{\Hat{\Bx}}, 
\end{equation}
\begin{multline}
  \label{P*}
    \BP^{*}=a\left(\frac{(n+2)(n-1)^{2}\ln r -1}{n^{2}}\tns{\Hat{\Bx}}+\right.\\
  \left.\frac{(n+2)(n-1)^{2}(\ln r)^{2}+2(n^{2}-1)\ln r +n+1)}{2n^{2}}\BI_{n}\right). 
\end{multline}
The plots of the radial and circumferential components of
$\BP$ and $\BP^{*}$, given by (\ref{P}) and (\ref{P*}), respectively,
are shown in Fig.~\ref{fig:radial} for $n=3$. Panels (a), (b)
correspond to $a=1$ and panels (c), (d) correspond to $a=-1$. 
Note  the nonzero value of the radial component of $\BP^{*}$ on the boundary. 
\begin{figure}[h!]
  \centering
  \subfigure[]{
    \includegraphics[scale=0.2]{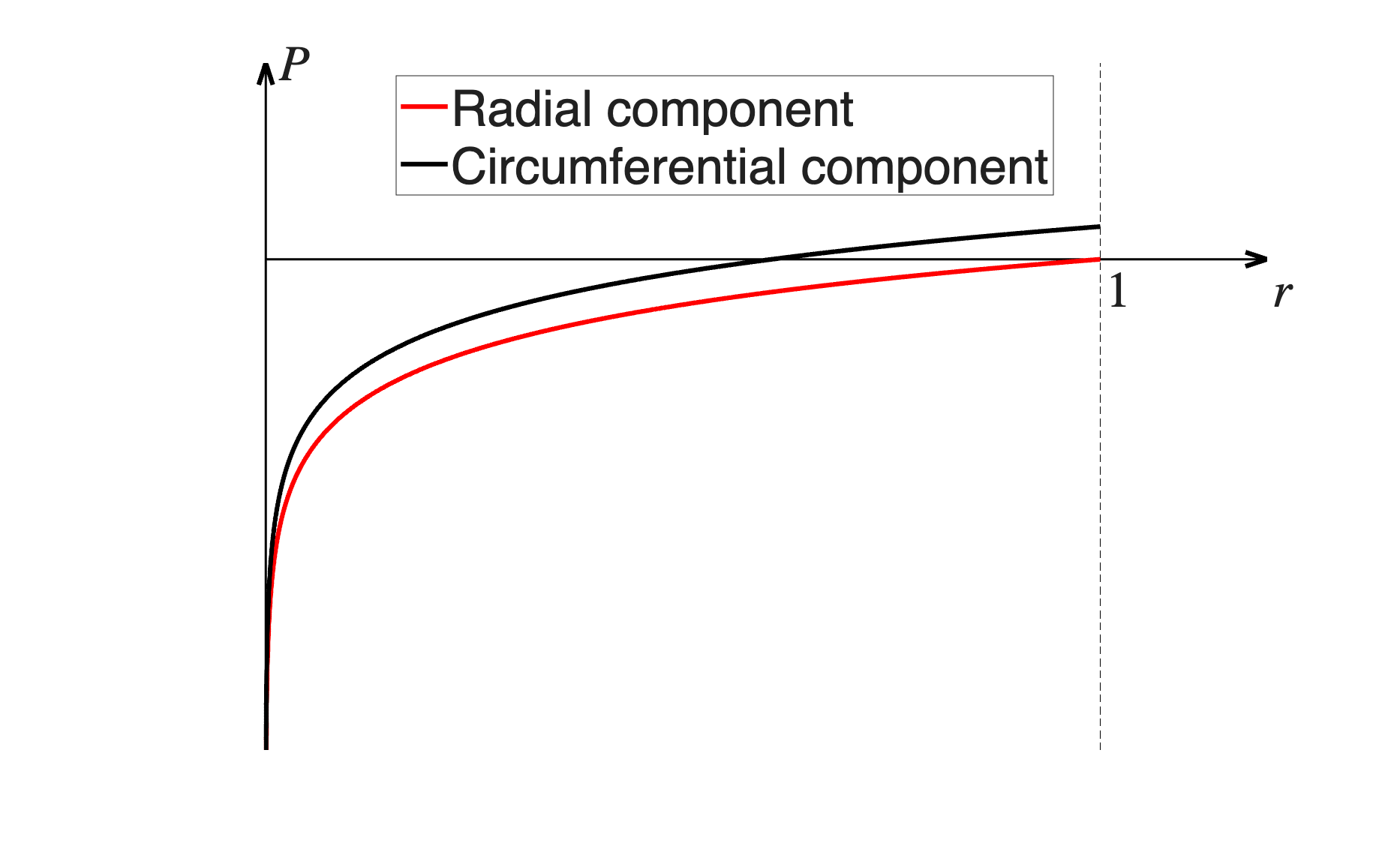}}
  \subfigure[]{
    \includegraphics[scale=0.2]{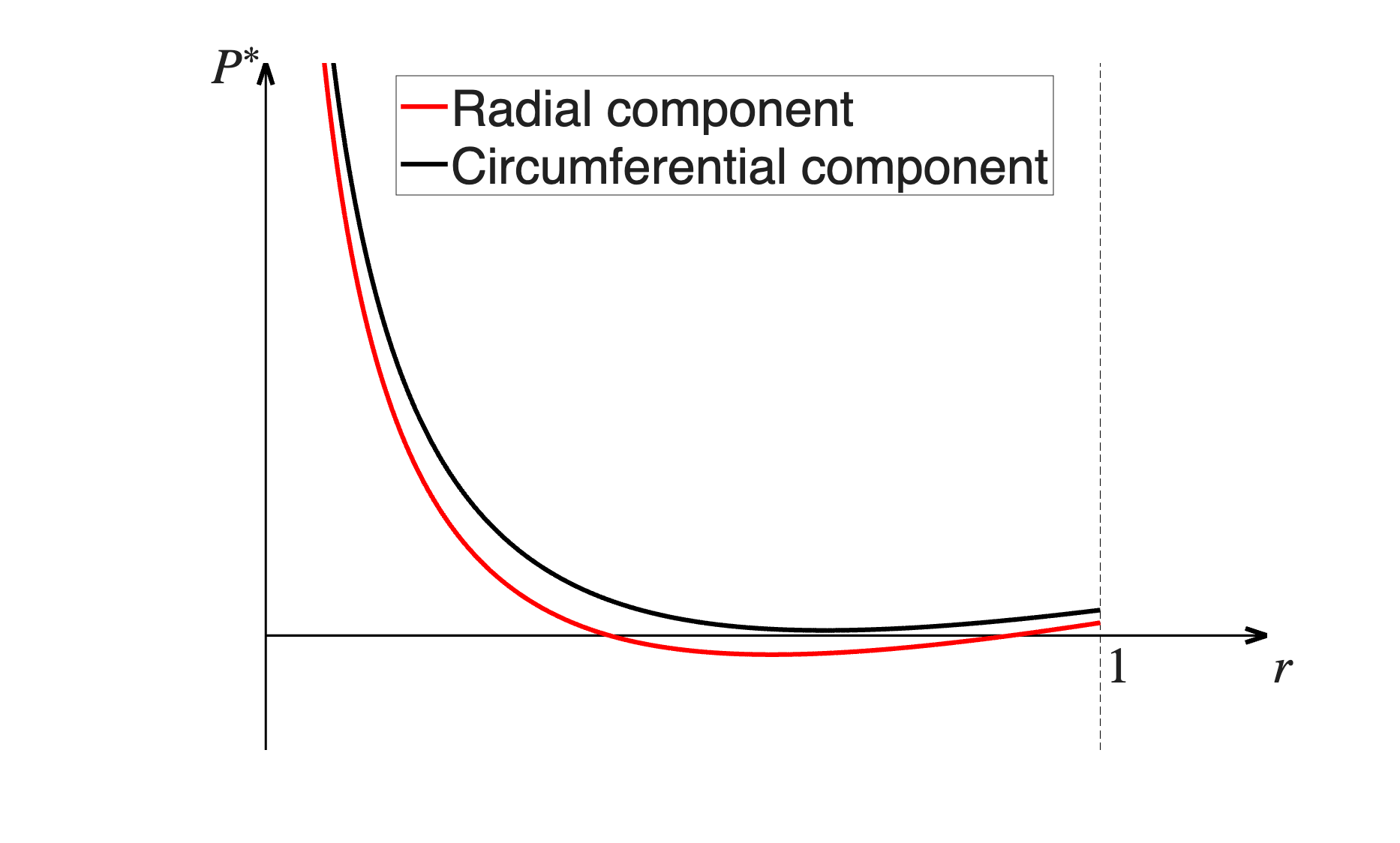}}\\
  \subfigure[]{
    \includegraphics[scale=0.2]{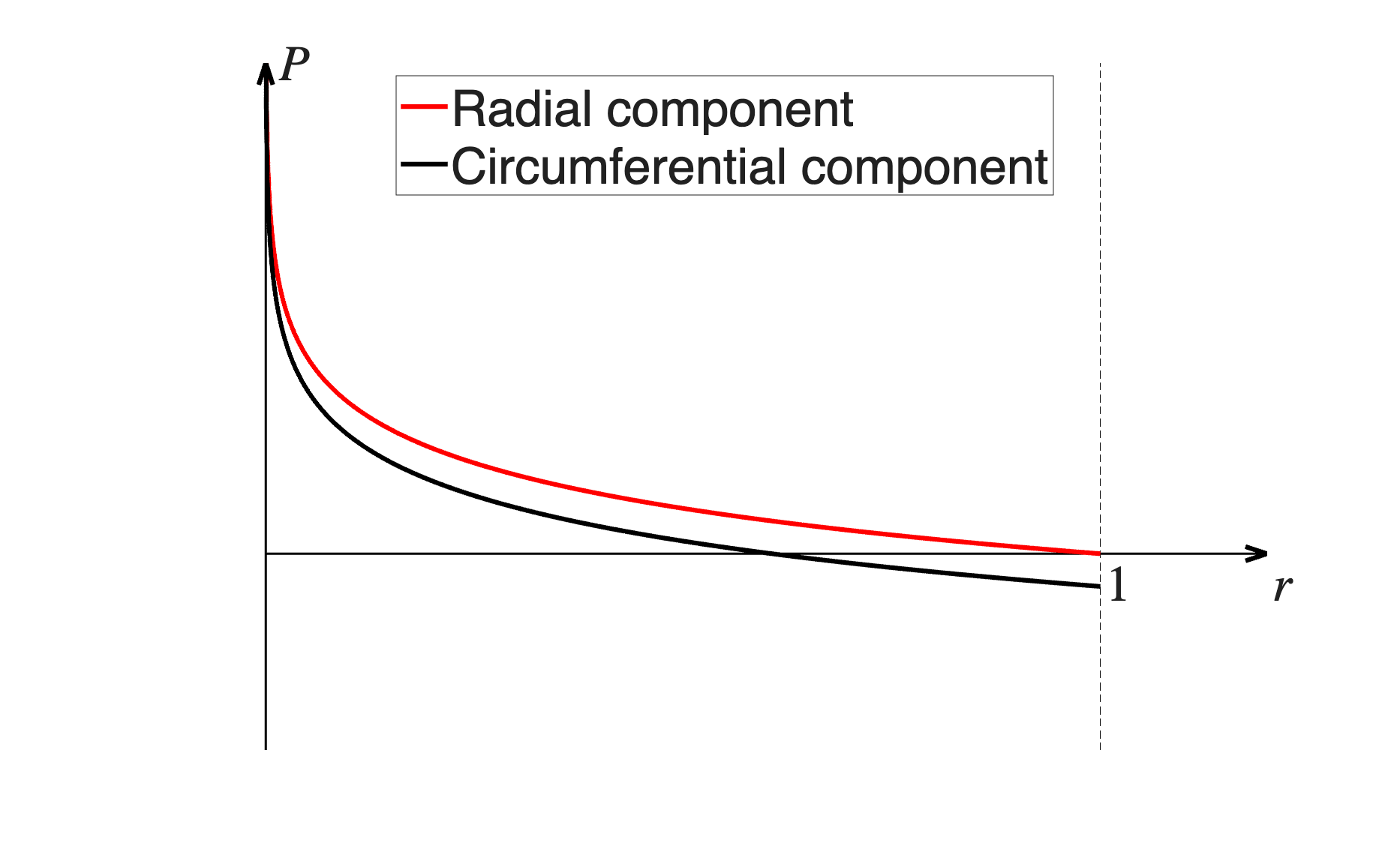}}
  \subfigure[]{
    \includegraphics[scale=0.2]{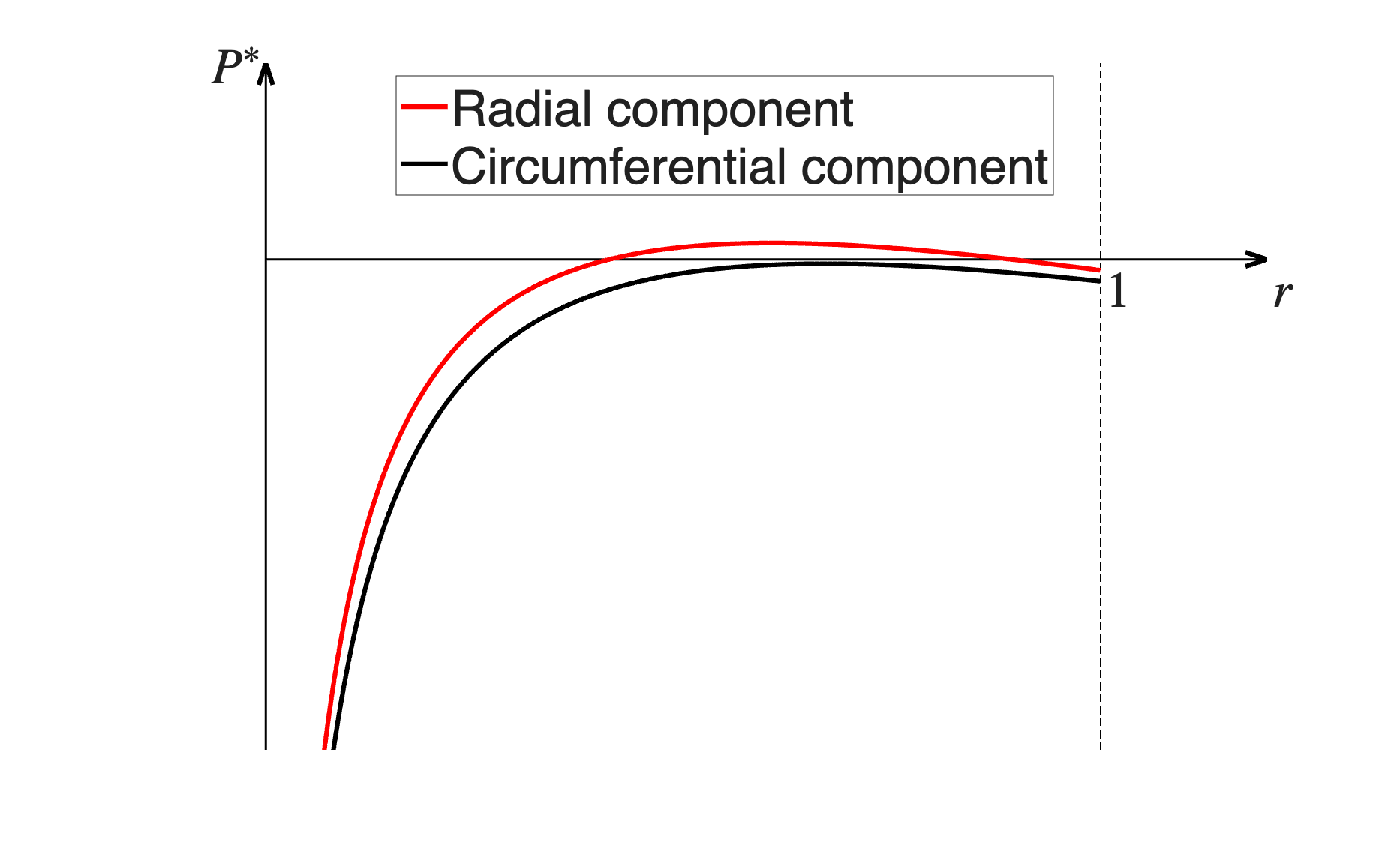}}
  \caption{Radial and circumferential components of $\BP$, given by (\ref{P}) (panels (a), (c)),
    and $\BP^{*}$, given by (\ref{P*}) (panels (b), (d)), corresponding to $a=1$ (panels
    (a), (b))  and $a=-1$ (panels (c), (d)).}
  \label{fig:radial}
\end{figure}

 \begin{figure}[h!]
 \centering
 \subfigure[]{
   \includegraphics[scale=0.1]{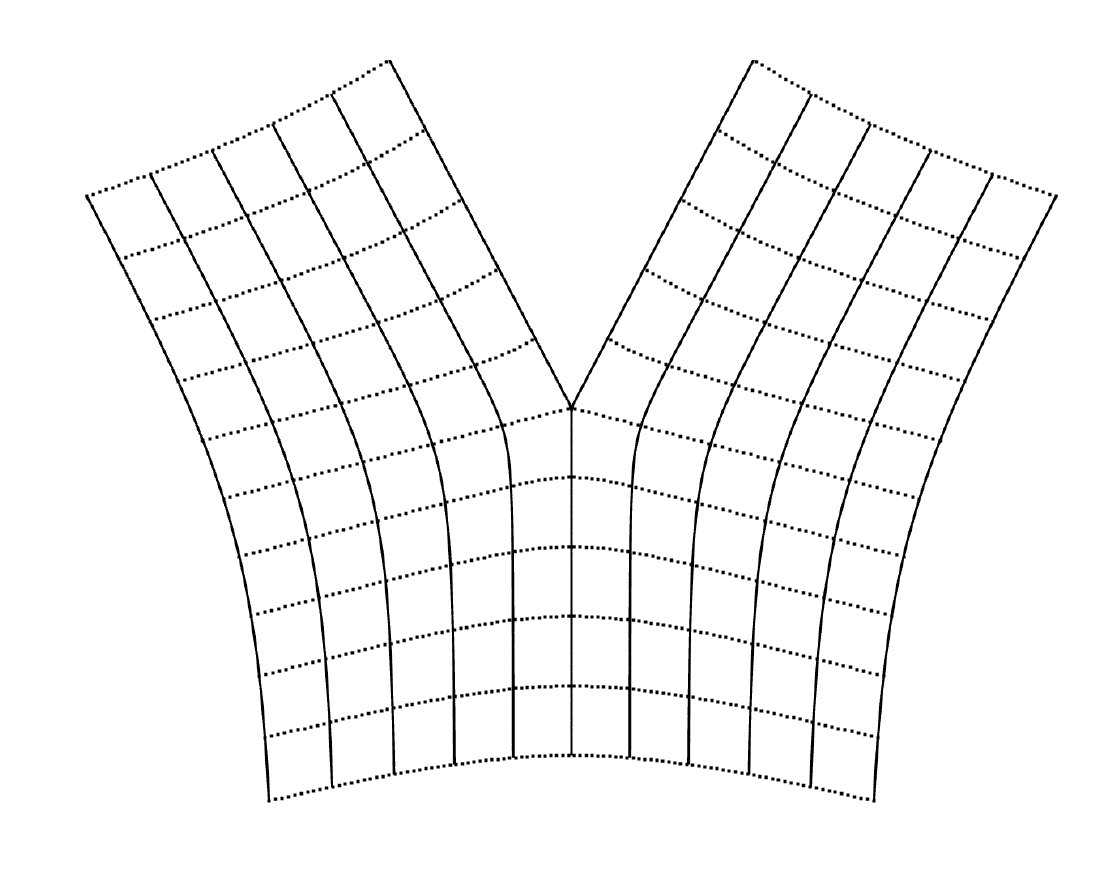}}\hspace{5ex}
 \subfigure[]{
   \includegraphics[scale=0.1]{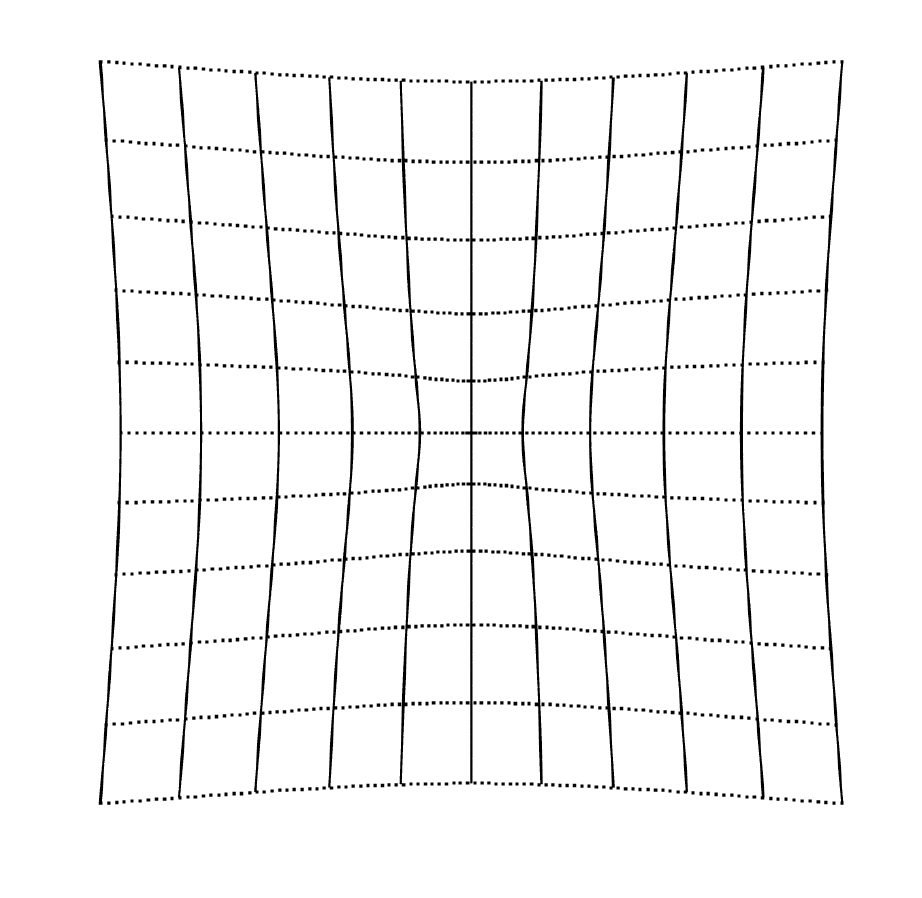}}\\
 \subfigure[]{
   \includegraphics[scale=0.08]{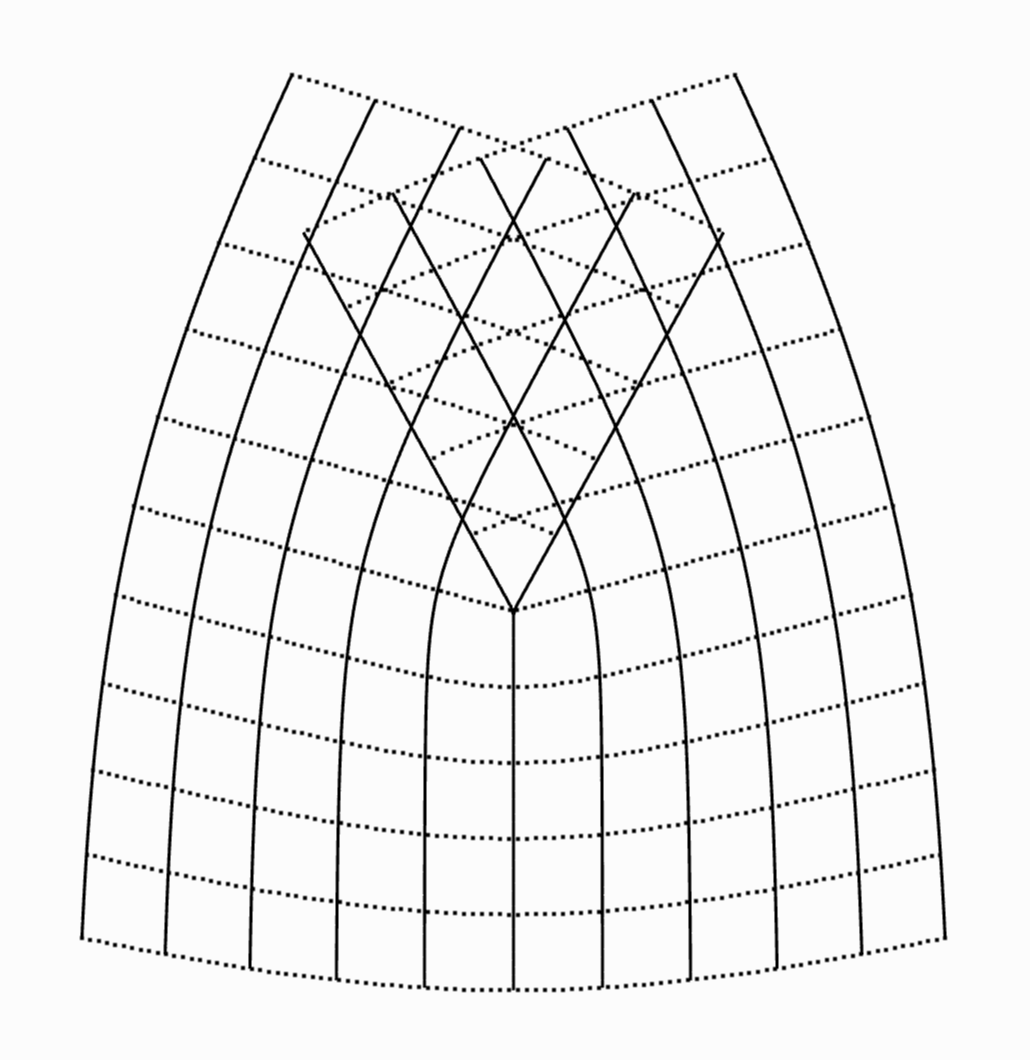}}\hspace{9ex}
 \subfigure[]{
   \includegraphics[scale=0.08]{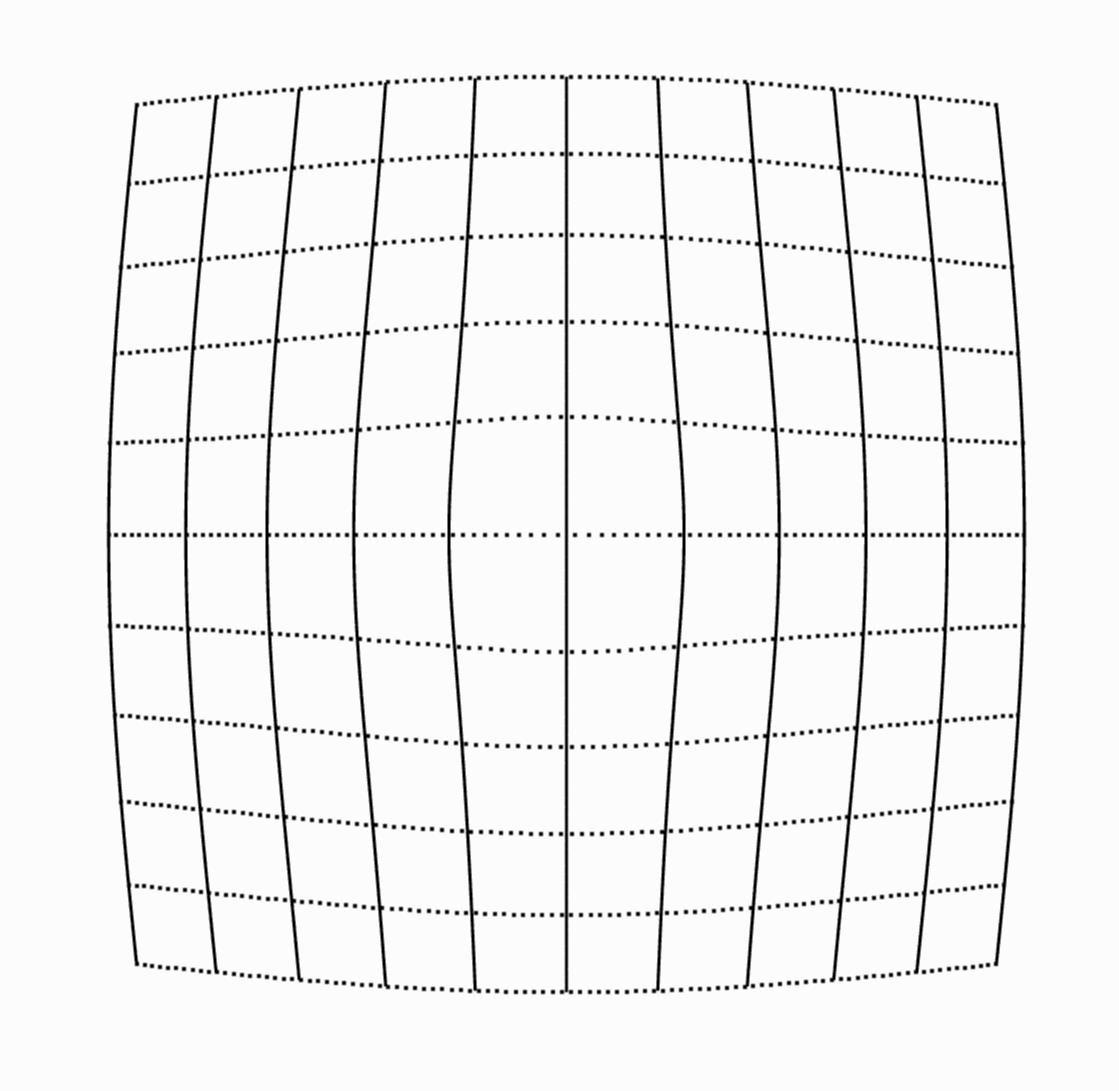}}
 \caption{(a), (c): Images of a perfect 2D grid corresponding to the
 discontinuous displacement field $\Bu(\Bx)$, given by (\ref{P2}),
 with $a=1$ and $a=-1$, respectively. (b), (d): Images of a perfect 2D grid corresponding to the energy-minimizing displacement field $\Bu(\Bx)$, given by (\ref{u}) with $a=1$ and $a=-1$, respectively.
}
 \label{fig:discl}
\end{figure}

More detailed results are illustrated in Fig.~\ref{fig:discl} in  the
case $n=2$. There we  show  the images of the Cartesian grid lines
under the deformations $\By(\Bx)=\Bx+\Ge\Bu(\Bx)$, for
$\Ge=0.2$ and $a=1$ (panels (a), (b)) and $a=-1$ (panels (c), (d)),
respectively. Panels (b), (d) correspond to the energy minimizing displacement
(\ref{u}), while panels (a), (c) correspond to the discontinuous
displacement field
\begin{equation}
  \label{P2}
  u_{1}(x_{1},x_{2})=a\left(x_{1}+x_{2}\arctan\left(\frac{x_{2}}{x_{1}}\right)\right),\quad
  u_{2}(x_{1},x_{2})=a\left(x_{2}-x_{1}\arctan\left(\frac{x_{2}}{x_{1}}\right) \right),
\end{equation}
generated by the kinematically incompatible
strain field (\ref{tracfree3}). In polar coordinates the displacement
field (\ref{P2}) can be written in a more compact  form $\Bu=a(r\Be_{r}-r\Gth\Be_{\Gth})$.

As we have already mentioned, the positive value of the energy should   be attributed to the
nonzero value of the external configurational loading
\[
\BP^{*}(\Bx)\Bn(\Bx) =\frac{a(n-1)}{2n^{2}}\Hat{\Bx},\quad|\Bx|=R=1.
\]
Such loading  can be thought of as the  factor ensuring the   embedding of the incompatibility Ink$[\BGve_{0}]$ into the body during the process of its imaginary growth through surface deposition. Given that we  know  the solution of the  energy minimization problem explicitly, we can   independently 
 compute the  energy by applying  
Theorem~\ref{th:clapeyron}  
 \begin{equation}
  \label{Clapeyron1}
  E[\Bu]=\nth{n}\int_{\dOm}\BP^{*}\Bn\cdot\Bx\,dS=
  \nth{n}\int_{\dOm}W(\Bx,\By,\BF)\Bn\cdot\Bx\,dS.
\end{equation}
Here we used the fact that   in our example ``physical''  tractions were absent   and therefore 
\[
  \BP^{*}\Bn=W(\Bx,\By,\BF)\Bn.
\]
Note also that Theorem ~\ref{th:clapeyron}  is applicable 
 because the energy density (\ref{Wdisloc}) with
  (\ref{tracfree3}) is scale-free, i.e., satisfies (\ref{scalefree}). 
  In our special case
formula (\ref{Clapeyron1}) gives
\[
  E[\Bu]=\frac{R}{n}\int_{\Md B(0,R)}W(\Bx,\Grad\By)dS=
  \frac{R|\Md B(0,R)|}{2n}\left((\eta'(R)-a)^{2}+(n-1)\frac{\eta(R)^{2}}{R^{2}}\right).
\]
Using the explicit solution
\[
\eta(r)=\frac{a}{n}\left(r+(n-1)r\ln\left(\frac{r}{R}\right)\right),
\]
we  then obtain the expression for the energy 
\[
 E[\Bu]=\frac{a^{2}(n-1)}{2n^{2}}|B(0,R)|,
\]
which  naturally agrees with   the   direct 
  computation of  the stored elastic energy   in  (\ref{Epos1}). We
  reiterate  that the classical Clapeyron Theorem does not account for
  such ``work of creation'' of a residually stressed elastic body.
  The reason for the failure of the classical Clapeyron's Theorem in the presence of incompatibility is that it  breaks the homogeneity of degree 2  required for this classical theorem to hold. 
  
  Using Theorem~\ref{th:clapeyron} we can generalize our example to illustrate additional aspects of the configurational loading by  the generalized tractions $\BP^{*}\Bn$.
We now only assume that  there exists a general  ``inelastic strain'' $\BGve_{0}(\Hat{\Bx})$, such that the energy density $W(\Hat{\Bx},\BGve)$ satisfies, for any $\BGve\in\Sym(\bb{R}^{n})$ and any unit vector $\Hat{\Bx}$, the inequality
  \begin{equation}
    \label{Wlb}
    W(\Hat{\Bx},\BGve)\ge c|\BGve-\BGve_{0}(\Hat{\Bx})|^{p},
  \end{equation}
for some $p>0$ and $c>0$. Note that the energy density (\ref{Wdisloc}) with (\ref{tracfree3}) in our previous example is of this type.
Note next that if there exists a Lipschitz stationary equilibrium $\Bu_{0}$ in $\GO$ satisfying both $\BP\Bn=0$ and $\BP^{*}\Bn=0$ on $\dOm$ in the sense of traces, then, according to  (\ref{Clapeyron}) we must have
\[
\int_{\GO}W(\Hat{\Bx},e(\Bu_{0}))d\Bx=0.
\]
But then
\[
\int_{\GO}|e(\Bu_{0})-\BGve_{0}(\Hat{\Bx})|^{p}d\Bx\le\nth{c}\int_{\GO}W(\Hat{\Bx},e(\Bu_{0}))d\Bx=0.
\]
It would then mean that $\BGve_{0}(\Hat{\Bx})=e(\Bu_{0})$ is a compatible strain. 
To make clear in this case the connection between the incompatibility of $\BGve_{0}(\Hat{\Bx})$ and the configurational tractions on the boundary we now  generalize the problem by dropping the assumption that $\BP^{*}\Bn=0$ on $\dOm$. Suppose that $\BGve_{0}(\Hat{\Bx})$ is an incompatible  strain field, and suppose that $\Bu_{0}$ is the minimizer in
\begin{equation}
  \label{m}
 \CM=\min_{\Bu\in H^{1}(\GO;\bb{R}^{n})}\int_{\GO}W(\Hat{\Bx},e(\Bu))d\Bx.
\end{equation}
Observe that due to (\ref{Wlb}), we can regard $ \CM >0$ as a measure of incompatibility of $\BGve_{0}(\Hat{\Bx})$. At the same time, the minimizer $\Bu_{0}$ in (\ref{m}) must be a stationary extremal, satisfying $\BP\Bn=0$. But then, Theorem~\ref{th:clapeyron} gives
\[
 \CM=\nth{n}\int_{\dOm}\BP^{*}\Bn\cdot\Bx\,dS,
\]
which ultimately links the  applied configurational tractions to the measure of  incompatibility of the ``transformation strain'' $\BGve_{0}(\Hat{\Bx})$ expressed by the scalar $\CM$.

\section{Applications of CET in nonlinear elasticity}
\label{sec:app}
In this section we present a few examples of the  
effectiveness of  CET  in various challenging problems of nonlinear
elasticity   rooted in the Calculus of Variations.  We 
remark that while formula (\ref{increm}) holds for any Lipschitz
stationary configuration, the quantities $\BP\Bn$ and $\BP^{*}\Bn$,
understood in the sense of traces cannot be understood pointwise,
i.e., as smooth functions of $\Grad\By(\Bx)$, which is not well-defined on
$\dOm$, in general. In practice, the discontinuities of
$\Grad\By(\Bx)$ are confined to closed nowhere dense sets, like phase
boundaries or shock waves, that have either no intersection with
$\dOm$ or intersect $\dOm$ over a closed, subset of $\dOm$ of zero
surface measure. For these configurations $\BP(\Grad\By(\Bx))$ and
$\BP^{*}(\Grad\By(\Bx))$ can be interpreted pointwise on $\dOm$.
Hence, throughout this section we will assume that $\By(\Bx)$ is a
Lipschitz stationary extremal in $\GO$ that is sufficiently regular on
a \nbh\ of $\dOm$, so that $\Grad\By(\Bx)$ is well-defined on $\dOm$,
at least as an $L^{\infty}(\dOm)$ function.

\subsection{Necessary conditions of metastability}
\label{sub:dbc}
The problem of metastability is central for nonlinear elasticity
theory and, more generally, for vectorial variational problems with
nonconvex energy densities.  In our special setting, it concerns the
possibility of existence of strong local minima  that are not automatically global minima.
To analyze this problem we can take advantage of the fact that CET
gives a formula for the energy in terms of the boundary values of
the solution of the Euler-Lagrange equations and its gradient. The
idea is to  consider the difference of CET representations of energies of two 
competing stationary states and rearrange the terms advantageously. 

As our subsequent analysis shows, a  helpful hint is  to  isolate in the boundary integral an  expression involving a Weierstrass excess function 
\begin{equation}
  \label{Weierstrass}
  \CE(\BF,\BG)=W(\BG)-W(\BF)-\av{W_{\BF}(\BF),\BG-\BF}. 
\end{equation}
This function has proven useful in the Calculus of Variations for
comparing the energies of two different stationary configurations with
the same Dirchlet data \cite{hest48,tahe01,grme08}. The result of the proposed  rearrangement can be formulated in the form of a theorem which relies essentially on the CET. 
\begin{theorem}
  \label{th:aff}
  Assume that the Lipschitz deformations $\By_{1}(\Bx)$ and $\By_{2}(\Bx)$ are
  stationary extremals for the energy functional
  \begin{equation}
    \label{nlelen}
    E[\By]=\int_{\GO}W(\Grad\By(\Bx))d\Bx.
  \end{equation}
Suppose that $\By_{1}(\Bx)$ and $\By_{2}(\Bx)$ are of class $C^{1}$ near $\dOm$ and that
\begin{equation}
  \label{bc}
\By_{1}(\Bx)=\By_{2}(\Bx)=\By_{0}(\Bx),\quad \Bx\in\dOm.
\end{equation}
Then
\begin{equation}
\int_{\GO}\{W(\BF_{2})-W(\BF_{1})\}d\Bx=\nth{n}\int_{\Md\GO}
\{\CE(\BF_{1},\BF_{2})(\Bx,\Bn)+(\BP_{1}-\BP_{2})\Bn\cdot(\BF_{2}\Bx-\By_{0})\}dS(\Bx),
\label{naff1}
\end{equation}
where
\[
\BF_{j}=\Grad\By_{j}(\Bx),\qquad\BP_{j}=W_{\BF}(\Grad\By_{j}(\Bx)),\qquad
j=1,2.
\]
\end{theorem}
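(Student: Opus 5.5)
The plan is to apply the Clapeyron--Eshelby Theorem~\ref{th:clapeyron} separately to $\By_{1}$ and $\By_{2}$, subtract the two representations, and then use the common Dirichlet data (\ref{bc}) to regroup the boundary integrand into the form (\ref{naff1}). Since $W=W(\BF)$ is scale-free in the sense of (\ref{scalefree}), Theorem~\ref{th:clapeyron} applies to both stationary extremals. The $C^{1}$ regularity near $\dOm$ lets us read $\BP_{j}\Bn$ and $\BP^{*}_{j}\Bn$ pointwise on $\dOm$ rather than only as traces.

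Expanding $\BP^{*}_{j}=W(\BF_{j})\BI_{n}-\BF_{j}^{T}\BP_{j}$ via (\ref{Esh-tensor}) and using $\By_{j}=\By_{0}$ on $\dOm$, we get
\[
nE[\By_{j}]=\int_{\dOm}\{\BP_{j}\Bn\cdot\By_{0}+W(\BF_{j})(\Bx\cdot\Bn)-\BP_{j}\Bn\cdot\BF_{j}\Bx\}dS .
\]
Subtracting the $j=1$ identity from the $j=2$ identity gives
\[
n\int_{\GO}\{W(\BF_{2})-W(\BF_{1})\}d\Bx=\int_{\dOm}\{(\BP_{2}-\BP_{1})\Bn\cdot\By_{0}+(W(\BF_{2})-W(\BF_{1}))(\Bx\cdot\Bn)-\BP_{2}\Bn\cdot\BF_{2}\Bx+\BP_{1}\Bn\cdot\BF_{1}\Bx\}dS .
\]

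The key geometric input is a Hadamard-type observation. Since $\By_{1}=\By_{2}$ on $\dOm$ and both maps are $C^{1}$ up to the boundary, their tangential derivatives coincide there. Hence $(\BF_{2}-\BF_{1})\BGt=0$ for every $\BGt\in T_{\Bx}\dOm$, so $\BF_{2}-\BF_{1}=\Ba\otimes\Bn$ for some $\Ba(\Bx)\in\bb{R}^{m}$. This rank-one structure yields the identity
\[
\av{\BP_{1},\BF_{2}-\BF_{1}}(\Bx\cdot\Bn)=(\BP_{1}\Bn\cdot\Ba)(\Bx\cdot\Bn)=\BP_{1}\Bn\cdot(\BF_{2}-\BF_{1})\Bx .
\]
We read the notation $\CE(\BF_{1},\BF_{2})(\Bx,\Bn)$ as $\CE(\BF_{1},\BF_{2})\,(\Bx\cdot\Bn)$. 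With this reading, the identity turns the term $(W(\BF_{2})-W(\BF_{1}))(\Bx\cdot\Bn)$ into
\[
\CE(\BF_{1},\BF_{2})(\Bx\cdot\Bn)+\BP_{1}\Bn\cdot\BF_{2}\Bx-\BP_{1}\Bn\cdot\BF_{1}\Bx .
\]
Substituting this into the difference above, the terms $\pm\BP_{1}\Bn\cdot\BF_{1}\Bx$ cancel. The remaining terms $\BP_{1}\Bn\cdot\BF_{2}\Bx-\BP_{2}\Bn\cdot\BF_{2}\Bx+(\BP_{2}-\BP_{1})\Bn\cdot\By_{0}$ regroup exactly into $(\BP_{1}-\BP_{2})\Bn\cdot(\BF_{2}\Bx-\By_{0})$, which gives (\ref{naff1}).

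The main point requiring care is the rank-one step $\BF_{2}-\BF_{1}=\Ba\otimes\Bn$ on $\dOm$. It needs $\dOm$ regular enough, say Lipschitz, for a tangent plane and normal to exist a.e. It also needs the gradients to be continuous up to the boundary, so that differentiating the equal boundary traces along $\dOm$ is legitimate; this is where the assumption that $\By_{1},\By_{2}$ are $C^{1}$ near $\dOm$ enters. The rest of the argument is purely algebraic bookkeeping.
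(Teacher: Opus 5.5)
Your proposal is correct and matches the paper's proof. Like the paper, you apply Theorem~\ref{th:clapeyron} to each extremal, subtract, expand $\BP^{*}_{j}$ via (\ref{Esh-tensor}), and use the rank-one relation $\BF_{2}-\BF_{1}=\Ba\otimes\Bn$ on $\dOm$ to identify $\av{\BP_{1},\BF_{2}-\BF_{1}}(\Bx\cdot\Bn)$ with $\BP_{1}\Bn\cdot(\BF_{2}-\BF_{1})\Bx$; your reading of $(\Bx,\Bn)$ as $\Bx\cdot\Bn$ is the intended one, and your bookkeeping checks out.
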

\begin{proof}
  Since the energy density is scale-free in the sense of
  (\ref{scalefree}), we can use  the GCT Theorem~\ref{th:clapeyron}, and write 
  \[
\int_{\GO}W(\BF_{i})d\Bx=\nth{n}\int_{\dOm}\{\BP_{i}\Bn\cdot\By_{0}+\BP_{i}^{*}\Bn\cdot\Bx\}dS,
\]
Subtracting these two equations and expanding $\BP^{*}_{j}$ via (\ref{Esh-tensor}) we obtain
\begin{multline*}
  \int_{\GO}\{W(\BF_{2})-W(\BF_{1})\}d\Bx=\nth{n}\int_{\Md\GO}\{
  \CE(\BF_{1},\BF_{2})\Bn\cdot\Bx+
  (\BP_{1}-\BP_{2})\Bn\cdot(\BF_{2}\Bx-\By_{0})+\\
    \av{\BP_{1},\BF_{2}-\BF_{1}}\Bn\cdot\Bx
  +\BP_{1}\Bn\cdot(\BF_{1}-\BF_{2})\Bx\}dS.
\end{multline*}
Since $\By_{1}$ and $\By_{2}$ are of class $C^{1}$ near $\dOm$ and
$\By_{1}-\By_{2}=0$ on $\dOm$, we conclude that there exists a
continuous vector field $\Ba:\dOm\to\bb{R}^{m}$, such that
\begin{equation}
  \label{samebc}
  \BF_{1}-\BF_{2}=\Ba\otimes\Bn.
\end{equation}
Therefore,
\[
\BP_{1}\Bn\cdot(\BF_{1}-\BF_{2})\Bx=(\BP_{1}\Bn\cdot\Ba)\Bn\cdot\Bx=\av{\BP_{1},\BF_{1}-\BF_{2}}\Bn\cdot\Bx,
\]
and  (\ref{naff1})   follows.
\end{proof}
\begin{remark}
Switching $\By_{1}$ and $\By_{2}$ we also have
  \begin{equation}
    \label{naff2}
    \int_{\GO}\{W(\BF_{2})-W(\BF_{1})\}d\Bx=\nth{n}\int_{\Md\GO}
\{-\CE(\BF_{2},\BF_{1})\Bx\cdot\Bn+(\BP_{1}-\BP_{2})\Bn\cdot(\BF_{1}\Bx-\By_{0})\}dS(\Bx).
  \end{equation}
  Moreover, taking the average of (\ref{naff1}) and (\ref{naff2}) we obtain a  symmetric version of the energy increment formula
   \begin{multline*}
      \label{symd}
  \int_{\GO}\{W(\BF_{2})-W(\BF_{1})\}d\Bx= \\
  \nth{n}\int_{\Md\GO}\left\{
\hf(\CE(\BF_{1},\BF_{2})-\CE(\BF_{2},\BF_{1}))\Bx\cdot\Bn+(\BP_{1}-\BP_{2})\Bn\cdot(\lump{\BF}\Bx-\By_{0})
\right\}dS(\Bx),
   \end{multline*}
where $\lump{\BF}=(\BF_{1}+\BF_{2})/2$. Subtracting (\ref{naff1}) and (\ref{naff2}) we also obtain another useful result
    \begin{equation}
      \label{syms}
      \int_{\dOm}(\CE(\BF_{1},\BF_{2})+\CE(\BF_{2},\BF_{1}))\Bx\cdot\Bn\,dS(\Bx)=
\int_{\dOm}(\BP_{2}-\BP_{1})\Bn\cdot(\BF_{2}-\BF_{1})\Bx\,dS(\Bx)
    \end{equation}
  \end{remark}
Observe further that if $W(\BF)$ is rank-one convex, i.e. convex along
any affine straight lines in rank one directions, then we have for all $\Bx\in\dOm$
\[
\CE(\BF_{1},\BF_{2})=W(\BF_{2})-W(\BF_{1})-\av{\BP_{1}, \BF_{2}-\BF_{1}}\ge 0,
\]
due to (\ref{samebc}). We also recall that if $\GO$ is a star-shaped
domain, then we can always choose the origin to be a star point in
$\GO$, due to Remark~\ref{rem:shift}. In that case we also have
then $\Bx\cdot\Bn\ge 0$ for all $\Bx\in\dOm$.

Let us assume that $\By_{2}(\Bx)$ is a global minimizer of the
energy functional (\ref{nlelen}), among all Lipschitz functions
satisfying the \bc s (\ref{bc}) on a star-shaped domain $\GO$, where
the origin is at a star-point in $\GO$. We want to know if there are
any metastable states, i.e., strong local energy minimizers that are not global. Suppose now
that $\By_{1}(\Bx)$ is such a  metastable state. Then $\By_{1}$ and
$\By_{2}$ are stationary Lipschitz extremals. Assume further that $\By_{1}$ and
$\By_{2}$ are of class $C^{1}$ near $\dOm$, so that
Theorem~\ref{th:aff} is applicable. Then, the following inequality holds
\begin{equation}
  \label{R1ineq}
  \int_{\GO}\{W(\BF_{2})-W(\BF_{1})\}d\Bx\ge\nth{n}\int_{\Md\GO}
(\BP_{1}-\BP_{2})\Bn\cdot(\BF_{2}\Bx-\By_{0})\,dS(\Bx).
\end{equation}
Now, if $\By_{1}(\Bx)$ satisfies
\begin{equation}
  \label{nslmcrit}
 \int_{\Md\GO}(\BP_{1}-\BP_{2})\Bn\cdot(\BF_{2}\Bx-\By_{0})dS(\Bx)\ge 0,
\end{equation}
then
\[
0\ge\int_{\GO}\{W(\BF_{2})-W(\BF_{1})\}d\Bx\ge \int_{\Md\GO}(\BP_{1}-\BP_{2})\Bn\cdot(\BF_{2}\Bx-\By_{0})dS(\Bx)\ge 0,
\]
and hence $\By_{1}(\Bx)$ must also be a global minimizer. Since we
don't know a priori what $\By_{1}(\Bx)$ could be, the only way to
guarantee that inequality (\ref{nslmcrit}) holds is to demand that it holds
for any matrix $\BP_{1}$. This, in turn, implies that 
\begin{equation}
  \label{1hom}
  \Grad\By_{2}(\Bx)\Bx=\By_{2}(\Bx),\quad\Bx\in\dOm.
\end{equation}
Under our assumption that $\GO$ is star-shaped and that the origin is
a star-point in $\GO$, a \nbh\ of $\dOm$ in $\GO$ is parametrized by
$\BX=t\Bx$, $t\in[1-\Gd,1]$, $\Bx\in\dOm$. Hence, we can write
$\By_{2}(\BX)=\By_{2}(t\Bx)$. Then, it is easy to see that
(\ref{1hom}) is equivalent to the ``infinitesimal 1-homogeneity'' of
a $C^{1}$ vector field $\By_{2}(\Bx)$ in the vicinity of $\dOm$:
\begin{equation}
  \label{inf1hom}
\By_{2}(t\Bx)=t\By_{2}(\Bx)+o(1-t),\quad\Bx\in\GO,\quad t\to 1^{-}.
\end{equation}
Therefore, the \lhs\ of (\ref{nslmcrit}) can be regarded as a measure
of infinitesimal 1-nonhomogeneity of $\By_{2}(\Bx)$.  The failure of
(\ref{inf1hom}) can be viewed as a practically verifiable necessary
condition of metastability (provided the global minimizer
$\By_{2}(\Bx)$ is known). The failure of inequality (\ref{nslmcrit})
is also a necessary condition that a specific Lipschitz stationary
extremal $\By_{1}(\Bx)$ is a true metastable state. Since affine
functions $\By_{2}(\Bx)=\BF_{0}\Bx$ satisfy (\ref{inf1hom}), we
conclude that in the case of
affine boundary conditions the metastability is absent, at least on
star-shaped domains (see Section~\ref{sub:qcx} below and \cite{tahe03,grtrhard}).

\subsection{Quasiconvex envelope}
\label{sub:qcx} 
We recall that the quasiconvex envelope of the energy density $W(\BF)$ is the largest quasiconvex function that does not exceed $W(\BF)$. It is denoted $QW$ and is often referred to  as quasiconvexification of $W$. There is a formula for $QW$ \cite{Dak08}
\begin{equation}
  \label{QW2}
  QW(\BF)=\inf_{\myatop{\By|_{\Md D}=\BF\Bx}{\By\in W^{1,\infty}(D;\bb{R}^{m})}}\nth{|D|}\int_{D}W(\Grad\By)d\Bx,
\end{equation}
where $D$ can be any open and bounded set, with $|\Md D|=0$. 
The rank-one convex envelope $RW$ is the largest rank-one convex
function\footnote{i.e.,function convex along all straight lines connecting two
points that differ by a rank-one matrix.} that does not exceed $W$. It was shown in \cite{morr52} that every quasiconvex function is rank-one convex and therefore, $RW(\BF)\ge QW(\BF)$. The  question  whether $QW(\BF)=RW(\BF)$  has been answered in the negative in \cite{sver92}, except in the case $m=2$. Nonetheless nontrivial examples where $QW(\BF)\not=RW(\BF)$ are extremely rare, see for instance \cite{grab18}.

While, in many cases the variational problem (\ref{QW2}) has no solutions,
there could be particular domains $D$, where Lipschitz minimizers do
exist \cite{grtrhard}. 
Or next theorem shows how the CET can
furnish somewhat minimal additional conditions that stationary
extremals of (\ref{nlelen}) should satisfy to be minimizers in
(\ref{QW2}). Furthermore, the CET permits us to
transfer the knowledge about quasiconvexity from a single point to the
entire range of $\Grad\By$, provided that rank-one convexity at the values of $\Grad\By$ on the boundary can be verified.
\begin{theorem}
  \label{th:ELA}
Let $\GO\subset{\bb{R}^{d}}$ be a star-shaped domain with Lipschitz boundary. Assume that
\begin{itemize}
\item[(BVP)] For a given $\BF_{0}\in\bb{R}^{m\times n}$  the function  $\bra{\By}\in W^{1,\infty}(\GO;\bb{R}^{m})$ solves 
\begin{equation}
  \label{ELN1}
  \Div\BP(\Grad\bra{\By})=\Bzr,\quad\Div\BP^{*}(\Grad\bra{\By})=\Bzr,
\end{equation}
in the sense of distributions, and satisfies $\bra{\By}(\Bx)=\BF_{0}\Bx$ for all $\Bx\in\dOm$;
\item[(REG)] $\bra{\By}$ is of class $C^{1}$ near $\dOm$;
\item[(RCX)] $W(\Grad\bra{\By}(\Bx))=RW(\Grad\bra{\By}(\Bx))$ for all $\Bx\in\dOm$;
\item[(Q=R)] $RW(\BF_{0})=QW(\BF_{0})$.
\end{itemize}
Then
\begin{enumerate}
\item[(i)] $\displaystyle RW(\BF_{0})=QW(\BF_{0})=\nth{|\GO|}\int_{\GO}W(\Grad\bra{\By})d\Bx$, i.e. $\bra{\By}(\Bx)$ is the global minimizer in (\ref{QW2}).
\item[(ii)] $QW(\Grad\bra{\By}(\Bx))=W(\Grad\bra{\By}(\Bx))$ for a.e. $\Bx\in\GO$.
\end{enumerate}
\end{theorem}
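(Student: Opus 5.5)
Throughout, the comparison configuration is the affine map $\By_{2}(\Bx)=\BF_{0}\Bx$, and we take the origin at a star point of $\GO$, which is allowed by Remark~\ref{rem:shift}. This map is trivially a smooth stationary extremal, and its gradient $\BF_{0}$ is constant. We apply Theorem~\ref{th:aff} with $\By_{1}=\bra{\By}$ and $\By_{2}=\BF_{0}\Bx$. Both maps share the boundary data $\By_{0}=\BF_{0}\Bx$ on $\dOm$, and (BVP) and (REG) supply the hypotheses of that theorem. The key point is that $\BF_{2}\Bx-\By_{0}=\BF_{0}\Bx-\BF_{0}\Bx=0$ on $\dOm$, so the traction term in (\ref{naff1}) disappears. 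We are left with
\[
\int_{\GO}\{W(\BF_{0})-W(\Grad\bra{\By})\}d\Bx=\nth{n}\int_{\dOm}\CE(\Grad\bra{\By},\BF_{0})\,\Bx\cdot\Bn\,dS.
\]

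Next we show that the right-hand side is nonnegative, and in fact that the stronger statement with $W(\BF_{0})$ replaced by $RW(\BF_{0})$ holds. On $\dOm$ the difference $\BF_{0}-\Grad\bra{\By}$ is rank one, by the identity (\ref{samebc}). By (RCX), $W$ and $RW$ agree at $\BF=\Grad\bra{\By}(\Bx)$. Since $RW\le W$ everywhere with equality at $\BF$, the derivatives of $W$ and $RW$ along the rank-one direction coincide at $\BF$. Rank-one convexity of $RW$ along the segment from $\BF$ to $\BF_{0}$ then gives
\[
RW(\BF_{0})\ge W(\BF)+\av{\BP(\BF),\BF_{0}-\BF}.
\]
I expect the main technical obstacle to be this differentiability step. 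It needs $RW$ to be differentiable in rank-one directions at $\BF$, or an argument using one-sided directional derivatives, which exist for convex functions of one variable, squeezed against the $C^{1}$ function $W$.

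To conclude (i), we redo the computation with a modified energy. Running CET for $W$ at $\bra{\By}$ and expanding $\BP^{*}$ as in the proof of Theorem~\ref{th:aff} gives
\[
\int_{\GO}W(\Grad\bra{\By})d\Bx=\nth{n}\int_{\dOm}\{\BP\Bn\cdot\BF_{0}\Bx+(W-\av{\BP,\BF})\Bx\cdot\Bn\}dS.
\]
Rewriting $\BP\Bn\cdot\BF_{0}\Bx$ through the rank-one relation and the divergence theorem, $|\GO|\,RW(\BF_{0})=\nth{n}\int_{\dOm}RW(\BF_{0})\Bx\cdot\Bn\,dS$, we obtain
\[
|\GO|RW(\BF_{0})-\int_{\GO}W(\Grad\bra{\By})d\Bx=\nth{n}\int_{\dOm}\bigl[RW(\BF_{0})-W(\BF)-\av{\BP,\BF_{0}-\BF}\bigr]\Bx\cdot\Bn\,dS\ge0,
\]
using star-shapedness, which gives $\Bx\cdot\Bn\ge0$. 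On the other hand, $\bra{\By}$ is admissible in (\ref{QW2}), so
\[
QW(\BF_{0})\le\nth{|\GO|}\int_{\GO}W(\Grad\bra{\By})\,d\Bx\le RW(\BF_{0}).
\]
Together with (Q=R), all of these quantities are equal, which proves (i).

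For (ii), we use the standard inequality $\int_{\GO}QW(\Grad\bra{\By})d\Bx\ge|\GO|QW(\BF_{0})$. This is quasiconvexity of $QW$ tested with affine boundary data, valid for Lipschitz maps. Combining it with (i) gives
\[
\int_{\GO}W(\Grad\bra{\By})d\Bx=|\GO|QW(\BF_{0})\le\int_{\GO}QW(\Grad\bra{\By})d\Bx.
\]
Since $QW\le W$ pointwise, the integrand $W-QW$ is nonnegative with nonpositive integral. Hence $QW(\Grad\bra{\By})=W(\Grad\bra{\By})$ almost everywhere, which is (ii).
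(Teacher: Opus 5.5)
Your proof of (i) follows the paper's argument. Its ingredients are the CET at $\bra{\By}$, the rank-one boundary structure $\Grad\bra{\By}=\BF_{0}+\Ba\otimes\Bn$, the pointwise inequality $RW(\BF_{0})\ge W(\BF)+\av{\BP(\BF),\BF_{0}-\BF}$ on $\dOm$, $\Bx\cdot\Bn\ge 0$, and the sandwich closed by (Q=R). You differ from the paper in two places, and both are fine.
First, you prove the boundary inequality yourself, whereas the paper imports it as Lemma~\ref{lem:r1cx} from \cite{grtrnc}. Your squeeze works: the one-sided derivatives at $t=0$ of the convex function $t\mapsto RW(\BF+t(\BF_{0}-\BF))$ are trapped by the derivative of the $C^{1}$ function $t\mapsto W(\BF+t(\BF_{0}-\BF))$. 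Finiteness of $RW$ on that line follows from $RW(\BF)=W(\BF)$ and $RW\le W$.
Second, for (ii) you use quasiconvexity of $QW$ with $\bra{\By}-\BF_{0}\Bx\in W^{1,\infty}_{0}$ directly. The paper instead cites the necessary condition for strong local minimizers from \cite{tahe02}.

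The step that is wrong as written is the CET expansion in your third paragraph. Since $\BP^{*}\Bn\cdot\Bx=W\,\Bx\cdot\Bn-\BP\Bn\cdot\BF\Bx$, you have replaced $\BP\Bn\cdot\BF\Bx$ by $\av{\BP,\BF}\,\Bx\cdot\Bn$, and that is not an identity. For $\bra{\By}=\BF_{0}\Bx$ your display evaluates to $|\GO|W(\BF_{0})-\frac{n-1}{n}|\GO|\av{\BP(\BF_{0}),\BF_{0}}$ instead of $|\GO|W(\BF_{0})$.
To pass from it to your next display you would need $\int_{\dOm}\BP\Bn\cdot\BF_{0}\Bx\,dS=\int_{\dOm}\av{\BP,\BF_{0}}\,\Bx\cdot\Bn\,dS$, which is false for the same reason. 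The two errors cancel, so your final identity is true but not justified by what you wrote.

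The rank-one relation can only be applied to the difference $\BF_{0}-\BF=-\Ba\otimes\Bn$. This gives $\BP\Bn\cdot(\BF_{0}-\BF)\Bx=-(\BP\Bn\cdot\Ba)(\Bn\cdot\Bx)=\av{\BP,\BF_{0}-\BF}\,\Bn\cdot\Bx$. Hence $\int_{\GO}W(\Grad\bra{\By})\,d\Bx=\frac{1}{n}\int_{\dOm}\{W(\BF)+\av{\BP,\BF_{0}-\BF}\}\,\Bx\cdot\Bn\,dS$, which is the paper's (\ref{ClapRW}).

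More simply, your first paragraph already contains this. Start from the identity you obtained via Theorem~\ref{th:aff} and subtract $|\GO|(W(\BF_{0})-RW(\BF_{0}))=\frac{1}{n}\int_{\dOm}(W(\BF_{0})-RW(\BF_{0}))\,\Bx\cdot\Bn\,dS$. The result is exactly your final display, so the ``redo'' is unnecessary. With that repair the proof is complete.
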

\begin{proof}
  Assumption (\textit{BVP}) implies that the GCT is applicable to the
  stationary equilibrium $\bra{\By}(\Bx)$ of (\ref{QW2}), and we obtain, by
  virtue of (\textit{REG}), that
  \[
\Grad\By(\Bx)=\BF_{0}+\Ba(\Bx)\otimes\Bn,\quad\Bx\in\dOm,
\]
for some $\Ba\in C(\dOm;\bb{R}^{m})$, and therefore,
\begin{equation}
  \label{ClapRW}
  \int_{\GO}W(\Grad\bra{\By})d\Bx=\nth{n}\int_{\dOm}(W(\BF_{0}+\Ba\otimes\Bn)-
  \BP(\BF_{0}+\Ba\otimes\Bn)\Bn\cdot\Ba)(\Bn\cdot\Bx)dS. 
\end{equation}

Next, we appeal to \cite[Lemma~4.2]{grtrnc}.
We quote the part of the lemma needed for the proof.
\begin{lemma}
  \label{lem:r1cx}
Let $V(\BF_{0})$ be a rank-one convex function such that $V(\BF_{0})\le W(\BF_{0})$. Let
\[
\CA_{V}=\{\BF_{0}\in\bb{R}^{m\times n}:W(\BF_{0})=V(\BF_{0})\}.
\]
Then for every $\BF_{*}\in\CA_{V}$, $\Bb\in\bb{R}^{m}$, and $\Bm\in\bb{R}^{n}$
\begin{equation}
  \label{coolineq}
  V(\BF_{*}+\Bb\otimes\Bm)\ge W(\BF_{*})+\BP(\BF_{*})\Bm\cdot\Bb.
\end{equation}
\end{lemma}
We now apply Lemma~\ref{lem:r1cx} by choosing $V(\BF)=RW(\BF)$,
$\BF_{*}=\Grad\bra{\By}(\Bx)=\BF_{0}+\Ba\otimes\Bn$, and $\Bb\otimes\Bm=-\Ba(\Bx)\otimes\Bn$,
for each $\Bx\in\dOm$. The
assumption (\textit{RCX}) implies that Lemma~\ref{lem:r1cx} is applicable, and therefore,
\begin{equation}
  \label{QWineq}
RW(\BF_{0})\ge W(\BF_{0}+\Ba\otimes\Bn)-\BP(\BF_{0}+\Ba\otimes\Bn)\Bn\cdot\Ba.
\end{equation}
Finally, we use the assumption that $\GO$ is star-shaped. If we choose the
origin at the star point, then the function $\Bn(\Bx)\cdot\Bx$ is always
non-negative at all points on $\dOm$. Therefore, inequality (\ref{QWineq})
together with the identity (\ref{ClapRW}) delivered by the GCT,
implies
\[
\int_{\GO}W(\Grad\bra{\By})d\Bx\le \frac{RW(\BF_{0})}{n}\int_{\dOm}(\Bn\cdot\Bx)dS=|\GO|RW(\BF_{0}).
\]
Since $|\GO|QW(\BF_{0})$ is the minimal value of the functional in (\ref{QW2}) we
obtain
\begin{equation}
  \label{RWQW}
  |\GO|RW(\BF_{0})\ge\int_{\GO}W(\Grad\bra{\By})d\Bx\ge|\GO|QW(\BF_{0}). 
\end{equation}
By the assumption (\textit{Q=R}) we must have equality in both inequalities in (\ref{RWQW}) and conclude that $\bra{\By}(\Bx)$ is the global minimizer in (\ref{QW2}), proving (i). Property (ii) is a necessary condition for any strong local minimizer \cite{tahe02}, and hence must hold as well.
\end{proof}
Theorem~\ref{th:ELA} provides a new tool for establishing quasiconvexity,
or, at least, estimating the quasiconvex envelope. We remark that
Theorem~\ref{th:ELA} yields particularly powerful results when the
extremal $\bra{\By}$ has radial symmetry and $\GO=\bb{R}^{n}$, as we
will see in the next section.

\subsection{Probing the binodal}
It is well-known \cite{ball7677} that if $\By(\Bx)$ is a strong local
minimizer of (\ref{non-param}), then the function $\BF\mapsto
W(\Bx,\By(\Bx),\BF)$ must be quasiconvex at $\BF=\Grad\By(\Bx)$, for
a.e. $\Bx\in\GO$, i.e., satisfy the inequality
\begin{equation}
  \label{qcxF}
 \int_{D}\{W(\BF+\Grad\BGf(\Bx))-W(\BF)\}d\Bx\ge 0\quad\forall\BGf\in W_{0}^{1,\infty}(D;\bb{R}^{m}), 
\end{equation}
where $D$ is any domain in $\bb{R}^{n}$ with $|\Md D|=0$. Thus, the gradient of any metastable configuration must avoid the binodal region
\[
\CB=\left\{\BF\in\bb{R}^{m\times n}:\exists\BGf\in W_{0}^{1,\infty}(D;\bb{R}^{m}): \int_{D}\{W(\BF+\Grad\BGf(\Bx))-W(\BF)\}d\Bx<0\right\}
\]
of the phase space. Its boundary, $\Md\CB$, is called the \emph{binodal} \cite{grtrmms,grtrnc,grtrlhqcx,grtrpcx,grtrsolid,grtrhard,grtrliqnuc}.

 Since the function is quasiconvex, \IFF its binodal region is empty,  
the binodal region can then be  characterized by the inequality \cite{huss95}
\[
\CB=\{\BF\in\bb{R}^{m\times n}:QW(\BF)<W(\BF)\}.
\]
 Note also that if $W(\BF_{0})$ is strictly quasiconvex at $\BF_{0}\in\bb{R}^{m\times n}$, then $\bra{\By}(\Bx)=\BF_{0}\Bx$ is the unique solution of (\ref{QW2}) \cite{tahe03}. If $\BF_{0}$ lies on the binodal, we expect the emergence of nontrivial equilibria in unbounded domains, where the volume fraction of the region where $\Grad\bra{\By}$
is a finite distance from $\BF_{0}$ to be zero, even though in most finite domains we still expect that $\bra{\By}(\Bx)=\BF_{0}\Bx$ be the unique solution.
 To follow this path  one can, for instance,  try to  find radial solutions of (\ref{ELN1}) in the entire space. As we show below, in case of the success,  CET can   deliver a formula for $QW(\BF)$ for a range of $\BF$. 

Suppose $m=n$, and $W(\BF)$ is objective and isotropic \[W(\BF)=w(v_{1},\ldots,v_{n}),\] where $v_{j}$ are
the singular values of $\BF$ and the function $w$ does not change after a permutation of its arguments.
We look for a radial solution $\By(\Bx)=\eta(r)\Hat{\Bx}$ of (\ref{ELN1}).
We first observe that
\[
\Grad\By=\eta'(r)\tns{\Hat{\Bx}}+\frac{\eta(r)}{r}(\BI_{n}-\tns{\Hat{\Bx}}).
\]
Hence, $\BF=\Grad\By$ has a singular value $|\eta'(r)|$ and $n-1$ singular values $|\eta(r)/r|$.
We will assume that $\eta(r)\ge 0$ and $\eta'(r)\ge 0$. In that case we define two functions
\[
  w_{1}(r)=\dif{w}{v_{1}}\left(\eta'(r),\frac{\eta(r)}{r},\ldots, \frac{\eta(r)}{r}\right),\quad
  w_{2}(r)=\dif{w}{v_{2}}\left(\eta'(r),\frac{\eta(r)}{r},\ldots, \frac{\eta(r)}{r}\right).
\]
The Piola stress tensor is \cite[p.~564]{ball82}
\[
\BP=w_{1}(r)\tns{\Hat{\Bx}}+w_{2}(r)(\BI_{n}-\tns{\Hat{\Bx}}).
\]
Therefore, $\eta(r)$ must solve the nonlinear second order ODE
\begin{equation}
  \label{rODE}
w_{11}(r)\eta''(r)+(n-1)w_{12}(r)\left(\frac{\eta(r)}{r}\right)'+\frac{n-1}{r}(w_{1}(r)-w_{2}(r))=0,
\end{equation}
where
\[
  w_{11}(r)=\hess{w}{v_{1}}\left(\eta'(r),\frac{\eta(r)}{r},\ldots, \frac{\eta(r)}{r}\right),\quad
  w_{12}(r)=\mix{w}{v_{1}}{v_{2}}\left(\eta'(r),\frac{\eta(r)}{r},\ldots, \frac{\eta(r)}{r}\right).
\]
If $\eta'(r)$ suffers a jump discontinuity at $r=r_{0}$, then for (\ref{ELN1}) to hold we must have
\[\jump{w_{1}}(r_{0})=0,\quad\jump{w}(r_{0})=w_{1}(r_{0})\jump{\eta'}.\]
Typically, we would look for a solution of the form 
\begin{equation}
  \label{radsol}
  \By(\Bx)=
  \begin{cases}
    f_{0}r\Hat{\Bx}, &|\Bx|<1,\\
    \eta(r)\Hat{\Bx},&|\Bx|>1
  \end{cases}
\end{equation}
In this case we must have
\[
  w_{1}(\eta'(1^{+}),f_{0},\ldots,f_{0})=w_{1}(f_{0},f_{0},\ldots,f_{0})
\]
\[
  w(\eta'(1^{+}),f_{0},\ldots,f_{0})=w(f_{0},f_{0},\ldots,f_{0})+w_{1}(f_{0},f_{0},\ldots,f_{0})
  (\eta'(1^{+})-f_{0}).
\]
These can be regarded as equations determining the values of $f_{0}$ and $\eta'(1^{+})$.
Ultimately, the radial solution $\eta(r)$ will be fixed by
\[
f_{\infty}=\lim_{r\to\infty}\frac{\eta(r)}{r}=\lim_{r\to\infty}\eta'(r).
\]
Moreover, the indicial equation at $r=\infty$ yields the asymptotics
\begin{equation}
  \label{etainf}
  \eta(r)=f_{\infty}r+\frac{A}{r^{n-1}}+O(r^{-n}),\text{ as }r\to\infty.
\end{equation}
for  the solutions of \eqref{rODE}. To derive it, we need to substitute the ansatz
\begin{equation}
  \label{etasym}
  \eta(r)=f_{\infty}r+Ar^{-\Ga}+O(r^{-\Ga-1})
\end{equation}
into \eqref{rODE}. It is then clear that the asymptotics at $r=\infty$ of various terms in \eqref{rODE} are as follows:
\[
  w_{11}\eta''=w_{11}^{\infty}\frac{\Ga(\Ga+1)A}{r^{\Ga+2}}+o(r^{-(\Ga+2)}),\quad
 w_{12}\left(\frac{\eta}{r}\right)'=-w_{12}^{\infty}\frac{(\Ga+1)A}{r^{\Ga+2}}+o(r^{-(\Ga+2)}),
\]
\[
  w_{1}=w_{1}^{\infty}-w_{11}^{\infty}\frac{\Ga A}{r^{\Ga+1}}
  +(n-1)w_{12}^{\infty}\frac{A}{r^{\Ga+1}}+o(r^{-(\Ga+1)}),
\]
\[
  w_{2}=w_{2}^{\infty}-w_{21}^{\infty}\frac{\Ga A}{r^{\Ga+1}}+w_{22}^{\infty}\frac{A}{r^{\Ga+1}}+
  (n-2)w_{23}^{\infty}\frac{A}{r^{\Ga+1}}+o(r^{-(\Ga+1)}),
\]
where
\[
w_{i}^{\infty}=\dif{w}{v_{i}}(f_{\infty},\ldots,f_{\infty}),\quad w_{ij}^{\infty}=\mix{w}{v_{i}}{v_{j}}(f_{\infty},\ldots,f_{\infty}).
\]
Thus, the indicial equation for $\Ga$ is, taking into account that
\[
  w_{1}^{\infty}=w_{2}^{\infty},\quad w_{23}^{\infty}=w_{21}^{\infty}=w_{12}^{\infty},\quad w_{22}^{\infty}=w_{11}^{\infty},
\]
due to the permutation symmetry of $w(v_{1},\ldots,v_{n})$, 
\[
  w_{11}^{\infty}\Ga(\Ga+1)-(n-1)(\Ga+1)w_{12}^{\infty}+(n-1)(\Ga+1)(w_{12}^{\infty}-w_{11}^{\infty})=0.
\]
Dividing both sides by $\Ga+1$ (since $\Ga=-1$ corresponds to the first term $f_{\infty}r$ in the asymptotics (\ref{etasym})) we obtain
\[
w_{11}^{\infty}\Ga-(n-1)w_{12}^{\infty}+(n-1)(w_{12}^{\infty}-w_{11}^{\infty})=0.
\]
This becomes $w_{11}^{\infty}(\Ga-n+1)=0$, and therefore, $\Ga=n-1$, as claimed.

The Theorem below  demonstrates   how in this case  the GCT     transforms the knowledge of quasiconvexity at infinity (the boundary of all-space)  into an explicit formula for $QW(f\BI_{n})$ for all $f\in[f_{0},f_{\infty}]$.

\begin{theorem}
  \label{th:QWrad}
Suppose that $\eta\in C^{2}([1,+\infty))$ is such that
\begin{itemize}
\item[(i)] $\displaystyle\bra{\By}(\Bx)=
  \begin{cases}
    f_{0}\Bx, &|\Bx|<1,\\
    \eta(|\Bx|)\Hat{\Bx},&|\Bx|>1
  \end{cases}
$ solves (\ref{ELN1}) in $\bb{R}^{n}$ in the sense of distributions.
\item[(ii)] $\displaystyle f_{\infty}=\lim_{r\to\infty}\frac{\eta(r)}{r}=\lim_{r\to\infty}\eta'(r)$ is such that
$W(f_{\infty}\BI_{n})=QW(f_{\infty}\BI_{n})$.
\end{itemize}
Then 
\begin{itemize}
\item[(a)] $W(f_{0}\BI_{n})=QW(f_{0}\BI_{n})$;
\item[(b)] for all $r\ge 1$
\begin{equation}
  \label{QWR}
    W\left(\eta'(r)\tns{\Hat{\Bx}}+\frac{\eta(r)}{r}(\BI_{n}-\tns{\Hat{\Bx}})\right)=
  QW\left(\eta'(r)\tns{\Hat{\Bx}}+\frac{\eta(r)}{r}(\BI_{n}-\tns{\Hat{\Bx}})\right). 
\end{equation}
\item[(c)] for every $r\ge 1$
  \begin{equation}
    \label{QWfI}
    QW\left(\frac{\eta(r)}{r}\BI_{n}\right)=w_{1}(r)\frac{\eta(r)}{r}+w(r)-\eta'(r)w_{1}(r).
  \end{equation}
\end{itemize}
\end{theorem}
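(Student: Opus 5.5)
The plan is to apply Theorem~\ref{th:ELA}, or rather its proof, on balls $\GO=B(0,R)$, $R\ge1$, with boundary data $\bra{\By}|_{\dOm}=\eta(R)\Hat{\Bx}=f_R\Bx$, where $f_R=\eta(R)/R$. This data is affine, with $\BF_0=f_R\BI_n$. The restriction of $\bra{\By}$ to $B(0,R)$ is a Lipschitz stationary extremal by (i), and it is $C^1$ near $|\Bx|=R$ since $\eta\in C^2$. On $\Md B(0,R)$ we have $\Grad\bra{\By}=f_R\BI_n+\Ba\otimes\Bn$ with $\Ba=(\eta'(R)-f_R)\Hat{\Bx}$ and $\Bn=\Hat{\Bx}$, and $\Bx\cdot\Bn=R$ is constant. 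The CET (Theorem~\ref{th:clapeyron}) in the form (\ref{ClapRW}) then gives the exact identity
\[
\int_{B(0,R)}W(\Grad\bra{\By})d\Bx=|B(0,R)|\bigl(w(R)-w_1(R)(\eta'(R)-f_R)\bigr).
\]
The right-hand side is $|B(0,R)|$ times the right-hand side of (\ref{QWfI}) at $r=R$. So the upper bound $QW(f_R\BI_n)\le w(R)+w_1(R)(f_R-\eta'(R))$ follows at once from (\ref{QW2}) with $D=B(0,R)$.

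The reverse inequality, and with it (c), comes from sending $R\to\infty$ and using (ii). For $R'>R$ consider $\bra{\By}$ on $B(0,R')$. By the CET identity and the asymptotics (\ref{etainf}), $\eta'(R')-f_{R'}=O(R'^{-n})$ and $w(R')\to W(f_\infty\BI_n)$. Hence $\int_{B(0,R')}W(\Grad\bra{\By})d\Bx=|B(0,R')|W(f_\infty\BI_n)+O(1)$; the $O(1)$ term is the finite excess energy of the inclusion. Now split the ball as $B(0,R)$ together with the annulus. Suppose, for contradiction, that $QW(f_R\BI_n)<w(R)+w_1(R)(f_R-\eta'(R))$. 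Then there is a competitor on $B(0,R)$ with the same boundary data whose energy is lower by a fixed $\delta>0$. Gluing it in gives a competitor on $B(0,R')$ with data $f_{R'}\Bx$ whose energy is at most $|B(0,R')|W(f_\infty\BI_n)+O(1)-\delta$.

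To turn this into a contradiction I will compare with $QW(f_{R'}\BI_n)|B(0,R')|$. Quasiconvexity of $W$ at $f_\infty\BI_n$, together with $f_{R'}-f_\infty=O(R'^{-n})$, should show that the deficit below $|B(0,R')|W(f_\infty \BI_n)$ plus the $O(1)$ defect cannot go negative. The cleanest route is to fix the constant via the CET itself: the $O(1)$ term is exactly $\frac{1}{n}\int_{\Md B(0,R')}(\cdots)$, computed from the asymptotics. This limiting comparison is the main obstacle. The excess energy of $\bra{\By}$ over the homogeneous state is $O(1)$ and not $o(1)$, so I need an exact boundary-layer accounting rather than a crude estimate. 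My first attempt is to use the CET to write the energy of $\bra{\By}$ on the ball, and of the best competitor with data $f_{R'}\Bx$, both through boundary quantities. I would then use the rank-one Lemma~\ref{lem:r1cx} with $V=QW$ at $\BF_*=f_\infty\BI_n$ (valid by (ii)) to bound $QW(f_{R'}\BI_n)$ from below by $W(f_\infty\BI_n)+w_1^\infty n(f_{R'}-f_\infty)+o(R'^{-n})$. This lower bound matches the Clapeyron expression up to $o(1)$ after multiplying by $|B(0,R')|$, and that leaves no room for $\delta$.

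Once $\bra{\By}$ is known to be a minimizer in (\ref{QW2}) on every ball $B(0,R)$, the remaining parts follow. For (b), property (ii) of Theorem~\ref{th:ELA} (the necessary condition \cite{tahe02} for minimizers) gives $QW(\Grad\bra{\By})=W(\Grad\bra{\By})$ a.e. Continuity of $\Grad\bra{\By}$ on $r\ge1$ then extends this to every $r\ge1$, which is (\ref{QWR}). Applied inside $|\Bx|<1$, where $\Grad\bra{\By}=f_0\BI_n$, the same condition gives (a). Statement (c) is the identity established in the previous steps for every $R\ge1$.
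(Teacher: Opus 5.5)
Your route differs from the paper's, and its architecture is sound: the CET on $B(0,R)$ for the upper bound and for (c), gluing a better competitor into $B(0,R')$, and then the necessary condition for minimizers to get (a) and (b). There is one step that does not go through as written, but it can be repaired.

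\textbf{The gap: the lower bound on $QW(f_{R'}\BI_n)$.} Lemma~\ref{lem:r1cx} with $V=QW$ and $\BF_*=f_\infty\BI_n$ controls only $QW(\BF_*+\Bb\otimes\Bm)$, i.e.\ rank-one increments. But $f_{R'}\BI_n-f_\infty\BI_n=(f_{R'}-f_\infty)\BI_n$ has rank $n$. Nor can the lemma be iterated along $\Be_1\otimes\Be_1,\Be_2\otimes\Be_2,\ldots$, because the intermediate matrices are not known to lie in $\CA_{QW}$.

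What you actually need is that $QW$ is differentiable at $f_\infty\BI_n$ with gradient $W_{\BF}(f_\infty\BI_n)=w_1^\infty\BI_n$. This is true, but it needs its own argument. For $\Bs\in\bb{R}^n$ put $\phi(\Bs)=QW\bigl(f_\infty\BI_n+\sum_i s_i\Be_i\otimes\Be_i\bigr)-W(f_\infty\BI_n)-w_1^\infty\sum_i s_i$, and let $\psi(\Bs)$ be the same expression with $QW$ replaced by $W$. Then:
\begin{itemize}
\item[(1)] $\phi$ is convex in each $s_i$ separately, by rank-one convexity of $QW$;
\item[(2)] $\phi(\Bzr)=0$ by (ii);
\item[(3)] $\phi\le\psi=o(|\Bs|)$, since $QW\le W$ and $W\in C^1$.
\end{itemize}
The midpoint inequality $\phi(s_1,\Bs')\ge 2\phi(0,\Bs')-\psi(-s_1,\Bs')$, together with induction on the number of variables, gives $\phi(\Bs)\ge -o(|\Bs|)$. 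Taking $\Bs=(\tau,\ldots,\tau)$ yields $QW((f_\infty+\tau)\BI_n)\ge W(f_\infty\BI_n)+nw_1^\infty\tau-o(\tau)$.

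With this bound your matching claim is correct. Expanding the CET expression gives
\[
w(R')+w_1(R')\bigl(f_{R'}-\eta'(R')\bigr)=W(f_\infty\BI_n)+n w_1^\infty(f_{R'}-f_\infty)+o\bigl((R')^{-n}\bigr).
\]
Here the first-order terms $w_1^\infty\bigl[(\eta'-f_\infty)+(n-1)(f_{R'}-f_\infty)+(f_{R'}-\eta')\bigr]$ add up exactly, using $w_j^\infty=w_1^\infty$. So the glued competitor forces $\delta\le o(1)$, and the rest of your argument goes through.

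\textbf{Comparison with the paper.} The paper obtains (a) and (b) first, from the all-space theory of \cite{grtrmms}:
\begin{itemize}
\item[(1)] the radial solution has $\int_{\bb{R}^n}\CE(f_\infty\BI_n,\Grad\bra{\By})\,d\Bx=0$;
\item[(2)] hypothesis (ii) makes this excess functional nonnegative on the class $\CS$, so $\bra{\By}$ minimizes it;
\item[(3)] minimizers of that functional have gradients in the quasiconvexity set.
\end{itemize}
Only then does it get (c), by running Theorem~\ref{th:ELA} on $B(0,R)$ and evaluating the energy with the CET.

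You reverse the order. You prove minimality on every ball directly, using the CET twice: once at radius $R$, and once at large $R'$ to pin down the $O(1)$ excess energy exactly. Your version is self-contained and never needs hypothesis (Q=R) of Theorem~\ref{th:ELA}. Its costs are the differentiability of $QW$ shown above and quantitative decay. You need $\eta'(r)-f_\infty$ and $\eta(r)/r-f_\infty$ to be $O(r^{-n})$, i.e.\ the differentiated form of (\ref{etainf}). The paper only uses $\Grad\bra{\By}-f_\infty\BI_n\in L^2$.
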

\begin{proof}
According to (\ref{qcxF}), quasiconvexity is defined in terms of a bounded domain. In \cite{grtrmms} we have linked quasiconvexity to test functions in all-space. 
According to \cite[Theorem~4.6]{grtrmms} the existence of the radial solution $\bra{\By}(\Bx)$ of (\ref{ELN1}) as in assumption (i) implies that
\[
\int_{\bb{R}^{n}}\CE(f_{\infty}\BI_{n},\Grad\bra{\By})d\Bx=0,
\]
where $\CE(\BF,\BG)$ is the Weierstrass excess function
(\ref{Weierstrass}), since
\[
  \Grad\bra{\By}-f_{\infty}\BI_{n}\in\CS=\{\BGf\in W^{1,\infty}(\bb{R}^{n};\bb{R}^{n}):
  \Grad\BGf\in L^{2}(\bb{R}^{n}; \bb{R}^{m\times n})\},
\]
according to (\ref{etainf}). Then,
by \cite[Theorem~3.10]{grtrmms}, the assumption (ii) implies
\[
\int_{\bb{R}^{n}}\CE(f_{\infty}\BI_{n},\Grad\By)d\Bx\ge 0
\]
for all $\By(\Bx)$, such that $\Grad\By(\Bx)-f_{\infty}\BI_{n}\in\CS$.
In that case conclusions (a) and (b) follow from  \cite[Theorem~4.3]{grtrmms}.

To prove part (c) we observe that radial solutions have the property that
$\bra{\By}(\Bx)=f_{R}\Bx$ for all $\Bx\in\Md B(0,R)$, where $f_{R}=\eta(R)/R$. Hence, by the already established conclusion (b), Theorem~\ref{th:ELA}  applies and says that
\begin{equation}
  \label{QWrad}
  |B(0,R)|QW\left(\frac{\eta(R)}{R}\BI_{n}\right)=\int_{B(0,R)}W(\Grad\bra{\By})d\Bx.
\end{equation}
We complete the proof by the application of the Clapeyron theorem:
\[
  \BP\Bn=w_{1}\Hat{\Bx},\quad
  \BP^{*}=w\BI_{n}-\eta'w_{1}\tns{\Hat{\Bx}}+w_{2}\frac{\eta}{r}(\BI_{n}-\tns{\Hat{\Bx}})
\]
Thus, according to (\ref{Clapeyron}),
\[
  \int_{B(0,R)}W(\Grad\bra{\By})d\Bx=\nth{n}\int_{\Md B(0,R)}\left\{w_{1}\Hat{\Bx}\cdot\eta\Hat{\Bx}+
    (w-\eta'w_{1})\Hat{\Bx}\cdot R\Hat{\Bx}\right\}dS.
\]
We conclude that
\[
QW\left(\frac{\eta(R)}{R}\BI_{n}\right)=w_{1}(R)\frac{\eta(R)}{R}+w(R)-\eta'(R)w_{1}(R).
\]
The theorem is now proved.
\end{proof}

\section{Conclusions}
\label{sec:conc}
In this paper we revisited the classical Clapeyron's Theorem (CT)
(\ref{Clap}) of  the  linear elasticity theory
which expresses the energy of an equilibrium configuration in terms of
the work of physical forces on the boundary.  We interpreted this
engineering result in the framework of Noether formula (\ref{increm}) for general
variation of the graph of a stationary extremal.
This new perspective allowed us to derive  various  nonlinear analogs
of  the CT  by revealing their  links  with partial
symmetries of the corresponding Lagrangians, such a $p$-homogeneity
and scale invariance. The resulting formulas  express bulk energy in terms of
the generalized surface work that includes not only physical but also
configurational forces.

An important   mechanical aspect   of our approach  is the possibility
to  differentiate between the part of the elastic energy stored due to
the action of applied physical loads  and the ``cold work type'' part
of the energy emerging due to accumulation of elastic
incompatibility. We showed further that when scale invariance is
combined with $p$-homogeneity,  the equilibrium energy can be expressed
through the work of either only physical or only configurational
forces, except when $p=n$, in which case the work of configurational
forces is always zero.

Our paper reveals  that while the  classical CT 
and the CET (Theorem~\ref{th:clapeyron}) are
independent relations, coming from different variational
symmetries,  (\ref{scalingxy}) and (\ref{uscale}), both can be derived
from Theorem~\ref{th:pGCT}. One of them directly, as a quadratic
function of the strain, the other, via
reformulating the nonparametric variational functional
(\ref{non-param}) as a parametric one, which is always $n$-homogeneous.

More generally,  the proposed  broader reading of the CT  allowed us to link it to various seemingly unrelated  
results in nonlinear elasticity   and the possibility to express  the
minimal value of the equilibrium energy through the Weierstrass excess
function.  Moreover, we provided compelling evidence    that    the
CET can open new    ways  of  addressing   important problems of
material stability, in particular,   regarding the possibility to
rule out the presence of strong local minimizers that are not global
in  in the case of hard device loading.  Of significant  interest are
the potential applications of  the CET to elastodynamics where it
naturally accounts for the inevitable  formation of  shock waves. We
deal with some aspects of these particular applications  in
\cite{grtreldyn},  accounting for the fact that extremals are not
necessarily stationary extremals of the action functional, in which
case CET become  a global inequality.

It is also important to mention that   since the concept of a partial
variational symmetry is rather general, Noether's formula used to
obtain  various nonlinear versions of the  classical CT
applies to  a broad class of  variational problems not at all related
to nonlinear elasticity theory. For instance,  the proposed approach
can offer  a unified interpretation to various   Rellich and Pohozaev
type identities  which have proved   useful in several
reaction-diffusion type theories.  These issues will be addressed in a
companion  paper \cite{grtrpoho}, where the CET is applied to problems
involving semi-linear equations, including p-Laplacian and its various generalizations.  Finally, we show in \cite{grtrvoid} how the CET can be used in important engineering problems related to fracture and cavitation.

\section{Acknowledgments} 
YG was supported by the National Science Foundation under Grant
No. DMS-2305832. The work of LT was supported by the French grant
ANR-10-IDEX-0001-02 PSL. The authors appreciate support from the
Oberwolfach Fellows program.

\def\cprime{$'$} \ifx \cedla \undefined \let \cedla = \c \fi\ifx \cyr
  \undefined \let \cyr = \relax \fi\ifx \cprime \undefined \def \cprime
  {$\mathsurround=0pt '$}\fi\ifx \prime \undefined \def \prime {'}
  \fi\def\Ya{Ya}

\end{document}